\newtheorem{observation}[lemma]{Observation}
\newtheorem{claim}[lemma]{Claim}
\newtheorem{definition}[lemma]{Definition}
\DeclareMathOperator{\E}{\mathbb E}
\DeclareMathOperator{\union}{\bigcup}
\newcommand{\Z}{\mathbb{Z}}
\newcommand{\calE}{{\mathcal E}}
\newcommand{\calF}{{\mathcal F}}
\newcommand{\calR}{{\mathcal R}}
\newcommand{\calS}{{\mathcal S}}
\newcommand{\eps}{\epsilon}
\newcommand{\sfd}{{\sansmath d}}
\setlist{itemsep=0pt,leftmargin=*}
\newcommand{\cand}{{\mathsf{cand}}}
\newcommand{\marked}{{\mathsf{mark}}}
\newcommand{\tsplit}{{\mathsf{split}}}	
\newcommand{\vol}{{\mathsf{vol}}}
\newcommand{\cI}{\mathcal{I}}
\newcommand{\srp}{\textsc{srp}}
\newcommand{\etal}{et al.\ }
\begin{document}

\title{\Large Improved Approximations for Unrelated Machine Scheduling}
\author{Sungjin Im\thanks{ Electrical Engineering and Computer Science, University of California, 5200 N. Lake Road, Merced CA 95344. \url{sim3@ucmerced.edu}. Supported in part by NSF grants CCF-1844939 and CCF-2121745.}  \and Shi Li\thanks{Department of Computer Science and Engineer, University at Buffalo, USA. \url{shil@buffalo.edu}. Supported in part by NSF grant CCF-1844890.}} 

\date{}

\maketitle


\begin{abstract}
    We revisit two well-studied scheduling problems in the unrelated machines setting where each job can have a different processing time on each machine. For minimizing total weighted completion time we give a 1.45-approximation, which improves upon the previous 1.488-approximation [Im and Shadloo SODA 2020]. The key technical ingredient in this improvement lies in a new rounding scheme that gives strong negative correlation with less restrictions. For minimizing $L_k$-norms of machine loads, inspired by [Kalaitzis \etal SODA 2017],  
    we give better approximation algorithms. In particular we give a $\sqrt {4/3}$-approximation for the $L_2$-norm which improves upon the former $\sqrt 2$-approximations due to [Azar-Epstein STOC 2005] and [Kumar \etal JACM 2009].
\end{abstract}

\section{Introduction}

Unrelated machines have been an elegant abstraction of various heterogeneous resource allocation problems. In the model there is a set $J$ of jobs to be scheduled on a set $M$ of machines. Job $j$ has an arbitrary processing time (or equivalently size) $p_{ij}$ on machine $i$. This general model has been extensively studied in various settings, such as offline, online, and stochastic settings. 
Further, due to the generality of the model, its study has led to discovering various algorithmic techniques. 

In this paper we revisit two well-studied offline scheduling/load balancing problems for unrelated machines, namely minimizing total weighted completion time and minimizing the $L_k$-norm of machine loads. We give improved approximation algorithms for both. 

\subsection{Previous Work} We describe the previous results related to the two problems we study.
\medskip

\noindent{\bf Total Weighted Completion Time.} \ 
This problem is denoted as $R||\sum_j w_j C_j$ using the classic common three-field notation.
In this problem, a machine can process at most one job at each time and all jobs must be completed non-preemptively. The problem is known to be strongly NP-hard and APX-hard \cite{hoogeveen2001non}. Schulz and Skutella gave a $1.5 + \eps$-approximation by using a time-indexed LP  and rounding the LP solution. The approximation ratio was improved to 1.5 by Skutella \cite{Skutella01} and Sethuraman and Squillante \cite{sethuraman1999optimal} independently, which both used convex programming. Breaking the barrier of 1.5 had since been widely open  \cite{chekuri2004approximation,schulz2002scheduling,kumar2008minimum,sviridenko2013approximating,schuurman1999polynomial}. 

The open problem was resolved by the breakthrough work by Bansal \etal \cite{BansalSS16}. All the previous works had used an independent rounding where each job $j$ is assigned to machine $i$ with probability $x_{ij}$ independently, given a fractional solution satisfying $\sum_{i \in M} x_{ij} = 1$. Even if all machines are identical, independent rounding only gives a $1.5$-approximation. To break the barrier, they introduced a novel rounding scheme
yielding certain strong negative correlations among some jobs grouped together. The weighted completion time objective is highly affected by how much jobs delay other jobs. To reduce the extra delay caused by randomized rounding, ideally we want to have a strong negative correlation for any pair of jobs, meaning two jobs $j$ and $j'$ are assigned to machine $i$ simultaneously with probability strictly less than $x_{ij} \cdot x_{ij'}$. Although this is not possible for every pair of jobs simultaneously, they showed it is possible for some jobs disjointly grouped on each machine. 

Unfortunately, the approximation ratio in  \cite{BansalSS16} was $1.5 - 10^{-7}$, which was only slightly better than the barrier of $1.5$. The approximation ratio was improved to $1.5 - 1/6000$ by Li \cite{Li20}, who used a time-indexed LP instead of SDP but used the same rounding scheme as Bansal \etal used.  
Later, it was further improved to $1.488$ by Im and Shadloo \cite{IS20}. Their main contribution was obtaining a significantly stronger negative correlation. Their rounding scheme was iterative fair contention resolution, which was inspired by the work of Feige and Vondrak \cite{feige2008allocations}. In contrast, Bansal \emph{et al.}'s rounding was a clever modification of pipage rounding. \medskip

\noindent {\bf $L_k$-norms of Machine Loads.}\ In this problem each machine $i$'s load is measured as the total processing time of jobs assigned to machine $i$. Note that machine $i$'s load is equivalent to the maximum completion time of jobs scheduled on machine $i$ non-preemptively with no idle times. The objective is minimizing the $L_k$-norm of machine loads. The $L_k$-norm seeks to reduce the variance of the loads as well as their average; for further motivation see~\cite{azar2004all}. 

The most well-known case is the $L_\infty$-norm, which is exactly the makespan minimization. For the $L_\infty$-norm, there exists a classic 2-approximation  and it is known to be hard to approximate within a factor of 1.5 unless P = NP \cite{shmoys1993approximation}. 
For the $L_k$-norm, Awerbuch \etal \cite{awerbuch1995load} gave a $\Theta(k)$-approximation based on independent rounding. Later, Azar and Epstein \cite{azar2005convex} gave a breakthrough by showing a $2$-approximation by using a convex programming and rounding it using the same rounding scheme used by Shmoys and Tardos \cite{shmoys1993approximation} for the Generalized Assignment
Problem. In particular they gave a $\sqrt 2$-approximation for the $L_2$-norm. Kumar \etal further improved the approximation ratio for small values of $k$ using a novel dependent rounding \cite{kumar2009unified}.

\subsection{Our Results}

\begin{theorem} (Sections~\ref{sec:SNC}-\ref{sec:wC-analysis}) \label{thm:wC}
    There is an efficient randomized 1.45-approximation algorithm for the problem of minimizing weighted completion time on unrelated machines.
\end{theorem}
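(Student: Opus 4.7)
My plan is to follow the LP-and-round framework that has driven prior progress on $R||\sum_j w_j C_j$, but to replace the rounding step with one that delivers strong negative correlation (SNC) on a richer family of job pairs than the iterative fair contention resolution of~\cite{IS20}. First, I would solve an interval-based LP relaxation (\lpinterval) in which variables $x_{ijt}$ encode the fractional assignment of job $j$ to machine $i$ inside a geometrically growing time window indexed by $t$. After the standard preprocessing (geometric grouping of windows; Shmoys--Tardos style bipartite decomposition of each machine's fractional load into a convex combination of integral "slot" patterns), each machine $i$ is reduced to a list of candidate jobs with marginals $x_{ij}$ and a tentative priority induced by the window indices. Smith's rule then expresses the LP cost as a sum over ordered pairs $(j,j')$ sharing a machine, so the rounded cost is controlled entirely by how small the joint probabilities $\Pr[j \text{ and } j' \text{ on } i]$ can be made relative to $x_{ij}\,x_{ij'}$.

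Next, I would design the rounding itself. Following Bansal--Srinivasan--Svensson and Im--Shadloo, a natural template is an iterative contention resolution on each machine's list that preserves the LP marginals and reassigns "lost" candidates fairly. The technical novelty promised in the abstract is SNC with \emph{fewer restrictions}, meaning that the inequality $\Pr[j \text{ and } j' \text{ on } i] \le x_{ij}\,x_{ij'}$ with a constant-factor slack holds for a substantially larger class of pairs, not only for pairs inside a carefully tuned Bansal--Srinivasan--Svensson-style group or the more permissive Im--Shadloo grouping. I would aim to couple jobs of comparable fractional mass across neighboring window levels, most likely by processing candidates in a randomized order correlated with the LP's window structure so that when one candidate "wins" a slot, it preferentially displaces jobs that the LP already pays for, giving the desired correlation without breaking marginals.

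The analysis step decomposes $\E[\sum_j w_j C_j]$ into the marginal "self" contribution $\sum_j w_j p_{ij}$ on the winning machine $i$ and a delay term $\sum_j w_j \sum_{j'} \Pr[j' \text{ on } i \mid j \text{ on } i]\, p_{ij'}$ summed over higher-priority candidates. Feeding the new SNC bound into this expression and comparing termwise to the LP value, each delay contribution is at most $\alpha$ times the corresponding LP delay for some $\alpha$ strictly bounded below $1$. Tuning $\alpha$ jointly with the geometric window factor, in the same style as~\cite{Li20, IS20} but with an improved SNC constant, yields an integrality gap at most $1.45$.

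The main obstacle, I expect, is establishing SNC on the broader class of pairs. Iterative fair contention resolution automatically gives SNC within a single window level, but extending it across levels or across Shmoys--Tardos groups forces one to reason about the joint distribution of many dependent indicators at once, and naive couplings either violate marginals or lose the quantitative slack. Proving a usable bound will likely require a careful inductive or exchange argument on the rounding process, tracking conditional marginals one contention step at a time and showing that the SNC constant is preserved through every reassignment; getting a slack strong enough to push the ratio below $1.488$ is the real quantitative difficulty.
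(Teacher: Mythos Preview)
Your proposal correctly identifies the high-level template (time-indexed LP, grouping by geometric windows, rounding with strong negative correlation, Smith-rule-style delay accounting), and you are right that the whole game is improving the SNC constant on a broader class of pairs. But the concrete rounding mechanism you sketch---extending iterative fair contention resolution \`a la Im--Shadloo so that SNC holds across more pairs, via ``a careful inductive or exchange argument tracking conditional marginals one contention step at a time''---is not the route the paper takes, and you have essentially flagged the obstacle yourself without proposing a way past it.

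The paper does \emph{not} push iterative contention resolution further. Instead it changes the SNC \emph{setting}: rather than an integral partition of each machine's incident edges into groups, it allows \emph{fractional} partitions, so a job may have positive $y$-mass in several groups on the same machine. This seemingly small relaxation is the point: in the Im--Shadloo reduction, forcing groups on a machine to be disjoint required sampling a representative rectangle per $(i,j)$ pair, filtering half the mass, and invoking Chernoff, so the ``total group mass $\le 1$'' condition held only with some probability. With fractional groups the condition holds deterministically (Lemma~\ref{lemma:total-height-small}), and the reduction loses much less.

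The rounding for this relaxed SNC scheme is also new and not contention-resolution-based. It borrows from \emph{online correlated selection}~\cite{FHT20}: each job picks \emph{two} candidate groups on distinct machines; the resulting bipartite graph is randomly paired at the group side and split into length-$4$ (or $6$) segments $u$-$j$-$u'$-$j'$-$u''$; one then flips a single coin per segment to send $j$ and $j'$ to opposite sides. This directly creates the negative correlation at the shared center $u'$, and a Poisson calculation on the group degree gives the quantitative constant $\eta = 0.1561$. Your proposal contains neither the relaxed-group idea nor the two-candidate/segment mechanism, and the inductive exchange argument you anticipate would be replacing something the paper avoids altogether.
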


As mentioned before, this improves upon the current best 1.488-approximation due to Im and Shadloo \cite{IS20}.

\begin{theorem} (Section~\ref{sec:lk}) \label{thm:Lk}
    There is a randomized algorithm that is 
    $\alpha^{1/k}$-approximate for minimizing the $L_k$-norm of machine loads, where 
    $\alpha := \sup_{m \in (0, 1], y \in [0, 1]} g(m, y)$, and 
        $$g(m, y) := \frac{ m \cdot (1+ (1 - m)y)^k + (1- m) \cdot ((1- m)y)^k}{m \cdot 1^k + (1 - m) \cdot y^k}$$
\end{theorem}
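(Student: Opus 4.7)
The approach is a convex programming relaxation plus a dependent rounding scheme, analyzed per machine against a tight lower bound. First, I would use the assignment LP with variables $x_{ij} \ge 0$ subject to $\sum_i x_{ij} = 1$, and take as the lower bound
\[
\mathrm{OPT}^k \ge \sum_i \sum_j x_{ij} p_{ij}^k =: \sum_i \mathrm{LB}_i,
\]
which is valid integrally since for any $0$/$1$ assignment $X$, $L_i^k = (\sum_j X_{ij} p_{ij})^k \ge \sum_j X_{ij} p_{ij}^k$ (by nonnegativity of the $p_{ij}$'s and $X_{ij}^k = X_{ij}$), and hence extends to the LP relaxation.

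Next, for each machine $i$ I would classify a job $j$ as \emph{big} if $p_{ij}$ is comparable to $P_i := \max_{j'} p_{ij'}$ and \emph{small} otherwise. Normalizing so that $P_i = 1$, let $m_i$ be the total fractional assignment of big jobs to machine $i$ and $y_i \in [0,1]$ be the effective normalized size of the small jobs, so that $\mathrm{LB}_i = m_i + (1-m_i)y_i^k$ in the canonical configuration. The rounding scheme, inspired by Kalaitzis \etal (SODA 2017), must (i) preserve marginals $\Pr[X_{ij}=1] = x_{ij}$, (ii) ensure at most one big job lands on each machine via a dependent rounding that groups the big-job fractional masses, and (iii) spread small jobs finely enough that their total contribution $S_i$ concentrates near $(1-m_i)y_i$ and is weakly negatively correlated with the big-job arrival event.

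The per-machine analysis then proceeds by conditioning on whether a big job arrives on $i$:
\[
\E[L_i^k] \le m_i \bigl(1 + (1-m_i) y_i\bigr)^k + (1-m_i)\bigl((1-m_i)y_i\bigr)^k = g(m_i, y_i) \cdot \mathrm{LB}_i.
\]
Summing over $i$ and using $g(m_i, y_i) \le \alpha$ gives $\E\bigl[\sum_i L_i^k\bigr] \le \alpha \cdot \mathrm{OPT}^k$. Jensen's inequality applied to the concave map $z \mapsto z^{1/k}$ then yields $\E\bigl[\bigl(\sum_i L_i^k\bigr)^{1/k}\bigr] \le \alpha^{1/k} \cdot \mathrm{OPT}$, as desired.

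The main obstacle lies in the per-machine bound: a na\"ive independent rounding does \emph{not} achieve ratio $g(m_i, y_i)$ — a simple instance with one big job and one moderate-sized small job already produces a strictly larger ratio because $\Pr[\text{both arrive}] = m_i(1-m_i)$ is too high relative to what the target formula permits. The rounding must therefore introduce carefully calibrated negative correlation between big and small jobs on each machine, and the small-job concentration claim requires either a splitting argument that reduces arbitrary small jobs to many tiny fragments (preserving the LP mass $(1-m_i)y_i^k$) or a direct stochastic/convex-order comparison showing that the expected $k$-th moment is maximized at the canonical concentrated configuration matching the definition of $g$. Aligning this per-machine bound, the LP decomposition, and the dependent rounding scheme across all machines simultaneously is the technical heart of the argument.
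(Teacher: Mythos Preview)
Your lower bound is the fatal gap. The quantity $\sum_j x_{ij}\,p_{ij}^{\,k}$ has unbounded integrality gap for $k>1$: replace any job of size $p$ by $N$ copies of size $p/N$ and the sum drops by a factor $N^{k-1}$ while the load is unchanged. In particular, the $(1-m)y^k$ term in the denominator of $g$---which accounts for the infinitesimal ``dust'' jobs---cannot come from your $\mathrm{LB}_i$; for infinitesimal jobs your bound contributes $0$, so against it the ratio is $\bigl[m(1+(1-m)y)^k+(1-m)((1-m)y)^k\bigr]/m$, which blows up as $m\to 0$. The paper instead uses a \emph{time-indexed} LP whose per-machine value is shown to equal the optimal fractional ``splittable rectangle packing'' cost; packing the dust to height $y$ over width $1-m$ yields exactly the $(1-m)y^k$ term. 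Without this stronger relaxation the target ratio $g(m,y)$ is simply unreachable.

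Second, $m$ and $y$ do not arise from a size-threshold big/small split, and the rounding is not ``at most one big job plus concentrated small jobs with negative correlation.'' The algorithm is the Shmoys--Tardos bucketing (sort by $p_{ij}$, fill unit-capacity buckets, sample one job per bucket from a matching decomposition); a machine may receive many non-tiny jobs, one per bucket, and no negative correlation between big and small jobs is engineered or used. The key observation is that the output is \emph{balanced}: the $q$-th largest job in any sampled set is at least the $(q{+}1)$-st largest in any other. The analysis then upper-bounds the expected $k$-th power by the worst balanced packing $\int_0^1 F(z)^k\,dz$ and performs three monotone modifications of the instance (replace everything below the level $F(1)$ by dust; push the first-bucket profile to a step at the waterfilling level, preserving the optimum packing cost; flatten that step by a two-point exchange argument) that reduce to the canonical form $f\equiv 1$ on $[0,m]$, $F_r\equiv(1-m)y$. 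Your conditioning-on-big-job heuristic coincides with this canonical form only at the very end; the substance of the proof is showing that this form is actually the worst case, against the stronger LP.
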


By using a numerical tool, we can compute an upper bound on the claimed approximation ratio $\alpha^{1/ k}$ for small values of $k$. For the important special case of $k = 2$,  we can analytically show that $\alpha = 4/3$; see Appendix~\ref{sec:k-2}.

\begin{table}[H]
\centering
\begin{tabular}{c | c | c | c | c | c | c | c | c }
$k$ & 1.5 & 2 & 2.5 & 3 & 3.5 & 4 & 4.5 & 5 \\
\hline
\hline
Kumar \etal \cite{kumar2009unified} & 1.585 & $\sqrt 2 \approx 1.414$ & 1.381 & 1.372 & 1.382 & 1.389 & 1.41 & 1.436 \\
\hline
This paper & 1.085 & $\sqrt {4/3} \approx 1.155 $ & 1.214 & 1.265 & 1.309 & 1.349 & 1.384 & 1.415   
\end{tabular}
\caption{Approximation ratio for minimizing the $L_k$-norm of machine loads for each value of $k$. }
\label{table:k}
\end{table}

\subsection{Our Techniques} In this section we give an overview of our techniques.
\medskip

\noindent{\bf Total Weighted Completion Time.}\ For the problem of minimizing weighted completion time, our main contribution is a new strong negative correlation (SNC) scheme. In the scheme introduced by Bansal et al.\ \cite{BansalSS16}, which was used later by Li \cite{Li20} and Im and Shadloo \cite{IS20}, there is a vector $y \in [0, 1]^{M \times J}$ such that $\sum_{i \in M} y_{ij} = 1$ for every $j \in J$. For every $i \in M$, we are given a partition of the edges incident to $i$ into groups such that every group has total $y$ value at most 1. A rounding algorithm with strong negative correlation will assign each job $j \in J$ to a machine $i$ randomly, respecting the marginal probabilities (i.e., $j$ is assigned to $i$ with probability $y_{ij}$), non-positive correlation between events that two jobs are assigned to a same machine, and strong negative correlation between events that two jobs are assigned to a same group. 

A crucial difference between our setting for the SNC scheme and that of Bansal et al.\ is that our partitions of edges are \emph{fractional}. As a result, it is more natural for us to define $y$ as a vector in $[0, 1]^{U \times J}$, where $U$ is the set of all groups. Additionally, we are given a mapping $g: U \to M$ indicating which machine each group belongs to. In contrast to the setting of Bansal et al., we may have $y_{uj} > 0, y_{u'j} > 0$ for a job $j \in J$ and two distinct groups $u, u' \in U$ with $g(u) = g(u')$.  Similarly, our rounding algorithm assigns jobs to groups, respecting the marginal probabilities, non-positive correlation between events that two jobs are assigned to two groups belonging to the same machine, and strong non-negative correlation between events that two jobs $j$ and $j'$ are assigned to the same group $u$.  The strong non-negative correlation holds when $\sum_{u' \in U: g(u') = g(u)}y_{u'j} \leq \frac12$ and $\sum_{u' \in U: g(u') = g(u)}y_{u'j'} \leq \frac12$. Such a restriction is needed: In the case where $g$ maps all the groups to the same machine, without the restriction, we can not satisfy all the three requirements.  As our SNC setting is more flexible, we have a simpler reduction from the scheduling problem to the SNC scheme, and the loss it incurs is smaller. This is explained below. 

As we do not require the groups belonging to a same machine to be disjoint, a simple construction of the SNC instance from the scheduling problem guarantees the condition (*): The total $y$ value of edges in a group is at most 1.  In contrast, the algorithm in \cite{IS20} requires a sampling step to ensure the disjointness of groups for the same machine. As a result, the condition (*) is satisfied only in expectation. To address the issue, they filter $1/2$ fraction of the edges in a group and apply Chernoff bound. Even with this procedure, (*) can only hold with a reasonable probability smaller than 1. As a result, their algorithm and analysis are more involved and their loss is bigger due to the filtering and the application of Chernoff bound. Furthermore, due to the simplicity of our algorithm, we can afford to analyze the approximation ratio more carefully. 

Our rounding algorithm for the SNC scheme borrows some ideas from \emph{online correlated selection} which played a crucial role in  the recent breakthrough result of a better-than-1/2 competitive algorithm for the online edge-weighted bipartite matching problem due to Fahrbach et al.\ \cite{FHT20}. Initially, every job chooses \emph{two} candidate groups belonging to different machines randomly. Then to respect the marginal probability, we need to assign the job to one of the two candidate groups uniformly at random. We correlate this process across different jobs to create some strong negative correlation between events that two jobs are assigned to the same group, without introducing positive correlation between events that two jobs are assigned to the same machine. To do so, we generate \emph{segments} of the form $u$-$j$-$u'$-$j'$-$u''$, where $j, j' \in J, u, u', u'' \in U$ and all edges are candidate edges. We then correlate the assignment of $j$ and $j'$: With probability $1/2$, we assign $j$ to $u$ and $j'$ to $u'$; with probability $1/2$, we assign $j$ to $u'$ and $j'$ to $u''$. So, we will not assign both $j$ and $j'$ to $u$, creating a negative correlation between the two events. The segments are generated so that for every $u', j, j'$, the path $j$-$u'$-$j'$ appears in the middle of a segment with some reasonable probability.  Our new SNC scheme as well as the rounding algorithm may be of independent interest. \medskip

\noindent{\bf $L_k$-norms of Machine Loads.}\  For minimizing $L_k$-norms of machine loads, we use the same algorithm that was used by \cite{Ola2017unrelated} for minimizing total weighted completion time when all jobs have an equal weight-to-size ratio, a.k.a., uniform smith ratios. Their algorithm is essentially the same as the celebrated algorithm Shmoys and Tardos used for minimizing makespan. The novelty of the work 
\cite{Ola2017unrelated} primarily lies in the analysis. 

At a high level, the rounding views the fractional LP solution as a fractional matching between jobs and ``buckets.'' A machine may have multiple buckets, each of unit capacity. The rounding samples an integral matching. The main difficulty in analyzing the algorithm comes from the fact that individual job assignments are highly correlated. The key observation in \cite{Ola2017unrelated} is the outcome satisfies a certain structural property. Then, the instance is sequentially modified into simpler instances keeping the property satisfied. We borrow this key idea. But, perhaps our analysis is (modestly) simpler and therefore works for
all $L_k$-norms of machine loads.

\subsection{Other Related Work} We first discuss other related work on the total  weighted completion time objective. The problem can be solved in polynomial time if jobs have uniform weights or all jobs are unit sized \cite{horn1973minimizing,bruno1974scheduling}. The problem admits a PTAS if the number of machine is fixed \cite{lenstra1990approximation}. As mentioned before, Kalaitzis \etal \cite{Ola2017unrelated} gave a 1.21-approximation when all jobs have the same ratio of $w_{ij}$ and $p_{ij}$ on each machine \cite{Ola2017unrelated}. When machines are   identical ($P||\sum_{j}w_jC_j$) or uniformly related ($Q||\sum_{j}w_jC_j$), the problem is NP-hard \cite{garey2002computers} but admits PTASes \cite{afrati1999approximation,skutella2000ptas,chekuri2001ptas}. If there is only one machine, the smith rule of ordering jobs in decreasing ratio of their weight to size is known to be optimal.

All hither-to-discussion assumed jobs arrive at time 0. If jobs arrive at different times, 
the problem is NP-hard even when there is only one machine \cite{lenstra1977complexity}. 
 When machines are unrelated ($R|r_j|\sum_j w_j C_j$), 2-approximation \cite{schulz2002scheduling,Skutella01} had been the best until it was improved to  1.869-approximation by \cite{im2016better}. 
 PTASes exist when machines are identical or uniformly related 
  \cite{afrati1999approximation,chekuri2001ptas}. The flow time objective is  significantly more challenging to approximate than the completion time objective. For minimizing total flow time, i.e, $R|r_j|\sum_j (C_j - r_j)$, a poly-logarithmic approximation was given by  \cite{BansalK15} and it was recently improved by \cite{Bansal-flow22}. 
For the special case of restricted assignment case with release times, see \cite{GargK07,GargKM08}.

For the makespan objective, i.e., $R||\max_j C_j$, there have been attempts to give better than 2-approximations for certain special cases \cite{svensson2012santa,ebenlendr2008graph}. The dual objective of maximizing the minimum total load of all machines has also been studied extensively~\cite{bansal2006santa,asadpour2010approximation,asadpour2008santa,chakrabarty2009allocating,feige2008allocations}. 
Kumar \etal gave a unified approach to  optimize multiple objectives simultaneously \cite{kumar2008minimum}. Bonifaci and Wiese showed a PTAS when there are a fixed number of machine types for minimizing $L_k$-norms of machine loads \cite{few-types-lk}.
Chakrabarty and Swamy initiated studying  the more general symmetric norm \cite{chakrabarty2019approximation}, which strictly generalized the $L_k$-norm of machine loads. The state-of-the-art approximation ratio for the problem is $2+ \eps$ \cite{ibrahimpur2021minimum}.

\subsection{Organization of the Paper} The rest of the paper is organized as follows. 
In  Section~\ref{sec:SNC} we introduce our new strong negative correlation scheme and give the algorithm. In Section~\ref{sec:wC-algorithm} we present our algorithm for $R||\sum_j w_jC_j$. We give its analysis in Section~\ref{sec:wC-analysis}, finishing the proof of Theorem~\ref{thm:wC}. In Section~\ref{sec:lk}, we prove Theorem~\ref{thm:Lk} by giving the algorithm for the problem of minimizing $L_k$-norm of loads and analyzing it. All missing proofs can be found in the appendix.  Most of the analysis of the algorithm for the weighted completion time problem is analytical. There are a few places where we use numerical tools to compute the maximum (minimum) of a single-variable function over an interval. We plot the two functions in  Appendix~\ref{appendix:wC-plotting}.

\section{Rounding Algorithm with Strong Negative Correlation}
\label{sec:SNC}
	In this section, we introduce our strong negative correlation (SNC) scheme.  In our setting, there is a set $M$ of machines, a set $J$ of jobs, a set $U$ of groups, a function $g: U \to M$ mapping groups to machines, and a vector $y \in [0, 1]^{U \times J}$ such that $y(u, J) \leq 1$ for every $u \in U$, and $y(U, j) = 1$ for every $j \in J$, where $y(u, J') = \sum_{j \in J'}y_{uj}$ for every $u \in U$ and $J' \subseteq J$, and $y(U', j) = \sum_{u \in U'} y_{uj}$ for every $U'\subseteq U$ and $j \in J$. We say a group $u \in U$ \emph{belongs to} the machine $g(u)$. We make the following definition, where $g^{-1}(i)$ is defined as $\{u \in U: g(u) = i\}$.
	\begin{definition}
		We say a machine $i \in M$ \emph{dominates} a job $j \in J$ if $y(g^{-1}(i), j) > \frac12$. 
	\end{definition}

	See Figure~\ref{fig:SNC-overview}(a) for the input of the SNC scheme. Our main theorem for the strong negative correlation scheme is the following. 
	\begin{theorem}
		\label{thm:SN}
		Given $M, J, U, g$ and $y$ as described above, there is an efficient randomized algorithm that outputs an assignment $\sigma: J \to U$ of jobs to groups satisfying the following properties.
		\begin{enumerate}[label=(\ref{thm:SN}\alph*)]
			\item \label{property:SN-marginal}{\bf (Marginal Probabilities)} For every $u \in U$ and $j \in J$ we have $\Pr[\sigma(j) = u] = y_{uj}$. 
			\item \label{property:SN-independence} {\bf (Non-Positive Correlation for a Same Machine)} For any two distinct jobs $j, j'$ and two (possibly identical) groups $u, u' \in U$ with $g(u) = g(u')$, we have 
			\begin{align*}
				\Pr[\sigma(j) = u, \sigma(j') = u'] \leq y_{uj}y_{u'j'}.
			\end{align*}
			\item \label{property:SN-SN} {\bf (Strongly Negative Correlation for a Same Group)} For any two distinct jobs $j, j' \in J$ and group $u \in U$ such that $g(u)$ does not dominate any of $j$ and $j'$, we have
			\begin{align*}
				\Pr\left[\sigma(j) = \sigma(j') = u\right] \leq (1-\eta)y_{uj}y_{uj'}, \text{ where } \eta = 0.1561.
			\end{align*}
		\end{enumerate}
	\end{theorem}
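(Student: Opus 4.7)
The plan is to follow the candidate-and-segment strategy sketched in the introduction. The construction has two layers: first, each job independently samples a pair of candidate groups on different machines; then pairs of jobs are coupled through ``segments'' of the form $u$-$j$-$u'$-$j'$-$u''$ in the resulting candidate multigraph.

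\textbf{Candidate sampling.} For each job $j$ not dominated by any machine, I would sample an unordered pair $\{c_1(j), c_2(j)\}$ with $g(c_1(j)) \neq g(c_2(j))$ and marginal $\Pr[u \in \{c_1(j), c_2(j)\}] = 2 y_{uj}$. The vector $(2y_{uj})_u$ has total mass $2$ and mass at most $1$ on each machine (this is precisely the non-dominance condition), so by a standard dependent-rounding argument it can be written as a distribution over pairs of groups on distinct machines. Jobs dominated by some machine $i^*$ are handled separately: one candidate is always placed inside $g^{-1}(i^*)$ and one outside, preserving marginals but giving up on strong negative correlation (the theorem does not require it for these pairs). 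Candidate samples across jobs are independent. If $\sigma(j)$ were then chosen uniformly from $j$'s two candidates, property~\ref{property:SN-marginal} would hold trivially, and because $j$'s two candidates lie on different machines, property~\ref{property:SN-independence} would also hold: for $g(u) = g(u')$, $\Pr[\sigma(j) = u, \sigma(j') = u'] = (2y_{uj}/2)(2y_{u'j'}/2) = y_{uj} y_{u'j'}$.

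\textbf{Segment coupling.} View the candidate pairs as edges of a random multigraph $H$ on vertex set $U$, one edge per job. Process the vertices of $U$ in a fixed order; at each vertex $u$, greedily pair up still-available candidate-edges incident to $u$ into segments $u_1$-$j$-$u$-$j'$-$u_2$. For each such segment, flip one coin: with probability $1/2$, set $\sigma(j) \leftarrow u_1$ and $\sigma(j') \leftarrow u$; with probability $1/2$, set $\sigma(j) \leftarrow u$ and $\sigma(j') \leftarrow u_2$. In either outcome exactly one of $j, j'$ lands at the middle group $u$, so a collision at $u$ is impossible. Candidate-edges not in any segment resolve by an independent fair coin. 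Each job is still assigned to each of its candidates with conditional probability $1/2$, so \ref{property:SN-marginal} holds, and the coupling can only decrease the joint same-machine probability, so \ref{property:SN-independence} is preserved.

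\textbf{Quantitative strong negative correlation.} Fix $j \neq j'$ and $u$ with $g(u)$ not dominating either. Let $E$ be the event that both jobs pick $u$ as a candidate; $\Pr[E] = 4 y_{uj} y_{uj'}$. Conditional on $E$, the collision probability is $0$ when the $j$- and $j'$-edges are paired at $u$, and at most $1/4$ otherwise (two independent fair coin flips must both land on $u$). Writing $q$ for the conditional probability that $j$ and $j'$ are paired into a common segment at $u$,
\[
\Pr[\sigma(j)=\sigma(j')=u] \;\leq\; 4 y_{uj} y_{uj'} \cdot \tfrac14 (1-q) \;=\; (1-q)\, y_{uj} y_{uj'},
\]
so it suffices to show $q \geq \eta = 0.1561$. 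The main obstacle is precisely this lower bound on $q$. Conditional on $E$, the other jobs' candidates are still independent with their original marginals, so the multiset of competing candidate-edges at $u$, and at the ``other'' endpoints of the $j$- and $j'$-edges, is random and adversarial. For $j$ and $j'$ to be paired at $u$, both of their edges must survive the greedy segment-generation at earlier vertices, and then the greedy rule at $u$ must pair them rather than other competing edges. My plan is to reduce to a worst-case one-parameter family by a convexity argument, approximate the distribution of competing candidate-edges at each relevant vertex by a Poisson limit (refining the instance does not change the marginals), and extract the constant $\eta = 0.1561$ by optimizing the resulting closed-form expression. The non-dominance hypothesis is used precisely here: it guarantees that the ``other'' candidates of $j$ and $j'$ put enough mass on machines distinct from $g(u)$, which is what keeps the segment-extension analysis consistent with property \ref{property:SN-independence} and what makes the pairing probability $q$ bounded away from zero.
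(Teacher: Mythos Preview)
Your high-level strategy---sample two candidate groups per job and then couple the final coin flips through segments---is exactly the route the paper takes. But several concrete steps are either wrong or missing, and together they block the derivation of the constant $\eta = 0.1561$.

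\textbf{Dominated jobs.} Your ``one candidate inside $g^{-1}(i^*)$, one outside'' cannot preserve marginals: by definition of domination, $\sum_{u \in g^{-1}(i^*)} y_{uj} > \tfrac12$, so the total candidate mass on $g^{-1}(i^*)$ must exceed $1$, which is impossible with exactly one candidate inside. The paper instead allows \emph{both} candidates of a dominated job to land on $i^*$, and then simply excludes such edges from the pairing (via a separate \emph{marking} step). This also protects property~\ref{property:SN-independence}: if a job with both candidates on the same machine were put in a segment with another job, the segment coupling could create positive same-machine correlation. The paper's Claim~\ref{claim:different-machines} is precisely the statement that this does not happen.

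\textbf{Poisson without refinement.} You cannot ``refine the instance'' to push $y_{uj} \to 0$; the vector $y$ is fixed. The paper's marking step plays exactly this role: after $uj$ becomes a candidate edge (probability $2y_{uj}$), it is \emph{marked} independently with probability $\frac{1 - e^{-2y_{uj}}}{2y_{uj}}$, so that $\Pr[uj \text{ marked}] = 1 - e^{-2y_{uj}}$ exactly. Then the number of competing marked edges at $u$ is stochastically dominated by $\mathrm{Pois}(2(1 - y_{uj} - y_{uj'}))$, uniformly over all instances. Without this device you face a sum of Bernoullis with possibly large means, and no single Poisson calculation bounds the pairing probability.

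\textbf{Segment formation and the constant.} Your greedy fixed-order pairing is not what the paper does, and you do not analyze it. The paper pairs the marked edges at each group $u$ \emph{independently and uniformly at random}, producing a global graph $H^\tsplit$ of paths and even cycles; it then breaks each component into length-$4$ (and occasionally length-$6$) segments so that every group appears as a segment \emph{center} with probability at least $\tfrac{4}{9}$ (Lemma~\ref{lemma:break-into-segments}). The value $0.1561$ is then the numerical minimum over $\rho \in [0,1]$ of
\[
\frac{4}{9}\cdot \left(\frac{1-e^{\rho-1}}{1-\rho}\right)^2 \cdot \left(\frac{1}{2\rho} - \frac{1-e^{-4\rho}}{8\rho^2}\right),
\]
where the three factors are, respectively, the center guarantee, the squared marking probability (after a convexity reduction to $y_{uj} = y_{uj'}$), and the uniform-pairing probability under a Poisson competitor count. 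Your greedy scheme has none of these three ingredients in place, so even if it is sound, there is no path from your proposal to this particular constant; at best you would obtain a different $\eta$, which you have not computed.
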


As we mentioned, in the scheme of Bansal et al.~\cite{BansalSS16}, the vector $y$ is in $[0, 1]^{M \times J}$ and there is a partition of edges incident to each machine $i$ into groups. This naturally gives a vector over $[0, 1]^{U \times J}$, for the set $U$ of all groups, with a crucial condition which our scheme does not have: a job can not be adjacent to two different groups belonging to the same machine, in the support of the vector.  Our $\eta = 0.1561$ may seem weaker than the counterpart in Im and Shadloo \cite{IS20}. However the flexibility of the scheme allows us to save more in the reduction from the scheduling problem.

		\subsection{Sketch of Rounding Algorithm} 

			We first give a sketch of the algorithm. To deliver the ideas more efficiently,  we assume no machines $i$ dominate any job $j$,  all the $y_{uj}$ values are very close to $0$, and we make many simplifying assumptions along the way.  Our algorithm consists of 4 steps; see Figure~\ref{fig:SNC-overview} for an illustration of the sketch.
			\begin{enumerate}
				\item For every job $j$, we choose two \emph{candidate groups} belonging to two different machines, such that a group $u$ is chosen with probability exactly $2y_{uj}$.  This is possible as we assumed no machines dominate any job. The process is independent for different jobs $j$. Let $H^\cand$ be the bipartite graph between $U$ and $J$ containing the edges $\{uj: u\text{ is a candidate group of } j\}$.  So, to respect the marginal probabilities, we need to assign each $j \in J$ to one of its two candidate groups uniformly at random. We do this in a coordinated way in succeeding steps.
				\item For every group $u \in U$, we consider its incident edges in $H^\cand$, or equivalently, the set of jobs choosing $u$ as a candidate group.  We pair these edges, and for simplicity we assume their cardinality is even so all edges can be paired. The pairing is chosen uniformly at random. Replacing each group $u$ in $H^\cand$ with multiple copies, each incident to one pair of edges, we obtain a graph $H^\tsplit$ where all vertices have degree $2$. Thus $H^\tsplit$ is the union of some cycles. 
				\item For simplicity,  we assume all cycles in $H^\tsplit$ have lengths being integer multiplies of $4$. We break each cycle in $H^\tsplit$ into paths of length $4$ with end points being groups, which we call \emph{segments}. There are two ways to do so and we choose one uniformly at random. So, each segment is of the form $u$-$j$-$u'$-$j'$-$u''$ with $u, u', u'' \in U$ and $j, j' \in J$. (We identify the copies of groups with their original groups in $U$). 
				\item For each segment $u$-$j$-$u'$-$j'$-$u''$, we perform the rounding operation: with probability $1/2$ we let $\sigma(j) = u$ and $\sigma(j') = u'$, and with the remaining probability $1/2$, we let $\sigma(j) = u'$ and $\sigma(j') = u''$. 
			\end{enumerate}
			
			\begin{figure}
			    \centering
			    \includegraphics[width=0.9\textwidth]{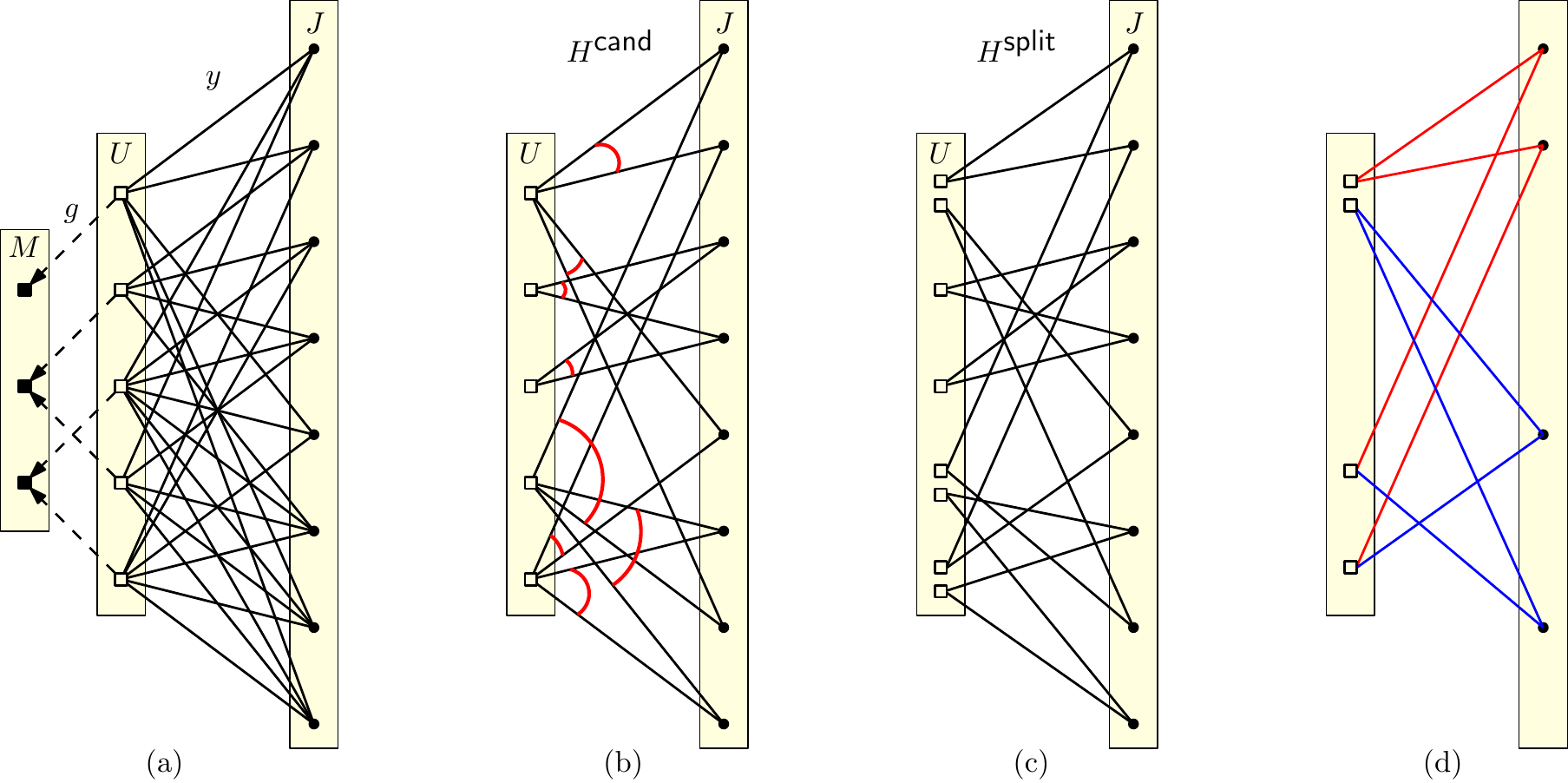}
			    \caption{The sketch of the algorithm for our SNC scheme. (a) depicts the input of the scheme. (b) shows the graph $H^\cand$, where every job $j \in J$ is incident to two groups belonging to two different machines. The pairing is indicated by the red arcs in (b). (c) shows the graph $H^\tsplit$. (d) gives one cycle-component in $H^\tsplit$ and a partition of the cycle into 2 segments, denoted by red and blue lines respectively.}
			    \label{fig:SNC-overview}
			\end{figure}
			
			This finishes the sketch of the algorithm.  Property~\ref{property:SN-marginal} is easy to see.  To see Property~\ref{property:SN-independence}, notice that Step 4 does not create a positive correlation over a same machine: We have $g(u) \neq g(u')$ and $g(u') \neq g(u'')$, due to that the two candidate groups of a job do not belong to a same machine.
			
			To analyze Property~\ref{property:SN-SN}, we show that two edges $uj$ and $uj'$ will be paired with a decent probability, conditioned on that they are both in $H^\cand$: When all the $y$ values tend to $0$, the degree of $u$ in $H^\cand$ follows a Poisson distribution with mean $2$.  Once their are paired, with probability $1/2$, $u$ will be the middle vertex of the segment containing $uj$ and $uj'$. If this happens, we do not assign both $j'$ and $j'$ to $u$ in Step 4, creating a negative correlation between $\sigma(j) = u$ and $\sigma(j') = u$.
			
			Many complications arise in the actual algorithm. Firstly, if a job $j$ is dominated by a machine $i$, we can not guarantee the two candidate groups for $j$ belong to different machines. In such a case, we do not impose the requirement.  Secondly, as our $y$ values may not tend to $0$, we need an extra filtering operation in Step 1: For every edge $uj \in H^\cand$ with $g(u)$ not dominating $j$, we \emph{mark} $uj$ with probability $\frac{1 - e^{-2y_{uj}}}{2y_{uj}}$, which approaches 1 as $y_{uj}$ tends to $0$.  This ensures that the probability an edge is marked is $1 - e^{-2y_{uj}}$, leading to the desired Poisson distribution in the analysis.  For every group $u$, we only pair the marked edges incident to it.   Thirdly, the cycles in $H^\tsplit$ may have lengths not being multiplies of $4$ (but the lengths are always even). We break such a cycle into segments of length $4$, and one segment of length $6$ satisfying certain properties.  Finally, $H^\tsplit$ may contain paths as edges may be unpaired; however they are easier to handle than cycles. 
			
		
			\subsection{Formal Description of Rounding Algorithm} \label{subsec:SNC-algorithm}
			
			Now we formally describe the rounding algorithm for Theorem~\ref{thm:SN}. Throughout the section, given a graph $H = (V, E)$ and a vertex $v \in V$, we shall use $\delta_H(v)$ to denote the set of incident edges of $v$ in $H$.  The rounding algorithm proceeds in 4 steps as we mentioned.  
			\medskip
			
			\noindent{\bf Step 1: Choosing and Marking Candidate Edges.}\  For every job $j \in J$, we independently run the following procedure. We randomly choose two \emph{candidate groups} $v^1_j, v^2_j \in U$, and call the two edges $v^1_jj$ and $v^2_jj$ \emph{candidate edges}; let $H^\cand$ be the bipartite multi-graph between $U$ and $J$ containing all the candidate edges. We guarantee the following two properties for every $j \in J$:
			\begin{itemize}
				\item $\Pr[v^1_j = u] = \Pr[v^2_j = u] = y_{uj}$, for every $u \in U$. 
				\item For every $i \in M$ such that $i$ does not dominate $j$, we have $\Pr[g(v^1_j) = g(v^2_j) = i] = 0$.
			\end{itemize}
			The second property says that we should avoid choosing two candidate groups belonging to the same machine, unless it is unavoidable. 
			\begin{observation}
				If $g(u)$ does not dominate $j$, then $uj$ appears in $H^\cand$ at most once and $\Pr[uj \in H^\cand] = 2y_{uj}$.
			\end{observation}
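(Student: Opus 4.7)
The observation makes two assertions which I will prove in turn: first, that the edge $uj$ occurs in $H^\cand$ with multiplicity at most one, and second, that its probability of occurring equals $2y_{uj}$. Both follow by a direct unpacking of the two guaranteed properties of Step 1, so the proof is essentially a one-line calculation plus a contrapositive argument; there is no real obstacle here beyond keeping track of the fact that $H^\cand$ is a multigraph.

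For the first assertion, I would argue by contrapositive. The only way the edge $uj$ can appear in $H^\cand$ with multiplicity two is if both candidate groups of $j$ are equal to $u$, i.e., $v^1_j = v^2_j = u$. In particular this forces $g(v^1_j) = g(v^2_j) = g(u)$. But the second guaranteed property of Step 1 asserts that $\Pr[g(v^1_j) = g(v^2_j) = i] = 0$ for every machine $i$ that does not dominate $j$. Since by hypothesis $g(u)$ does not dominate $j$, the event $v^1_j = v^2_j = u$ occurs with probability zero, and hence $uj$ appears in $H^\cand$ at most once (almost surely).

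For the second assertion, I would use inclusion-exclusion on the events $\{v^1_j = u\}$ and $\{v^2_j = u\}$:
\begin{align*}
\Pr[uj \in H^\cand] &= \Pr[v^1_j = u \text{ or } v^2_j = u] \\
&= \Pr[v^1_j = u] + \Pr[v^2_j = u] - \Pr[v^1_j = v^2_j = u].
\end{align*}
By the first guaranteed property of Step 1, each of the first two terms equals $y_{uj}$, and by the contrapositive argument in the previous paragraph the last term equals zero. The observation follows.
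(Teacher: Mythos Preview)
Your argument is correct. The paper states this as an observation without proof, and your unpacking of the two Step~1 properties via the containment $\{v^1_j = v^2_j = u\} \subseteq \{g(v^1_j) = g(v^2_j) = g(u)\}$ together with inclusion--exclusion is exactly the intended one-line justification.
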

			There are no other properties needed on how we choose the two candidate edges for a fixed $j$. \medskip
			
			For every $uj \in H^\cand$ such that $g(u)$ does not dominate $j$, we independently \emph{mark} it with probability $\frac{1 - e^{-2y_{uj}}}{2y_{uj}} \leq 1$.  The probability tends to 1 as $y_{uj}$ tends to $0$.  Similarly, we create a bipartite graphs $H^\marked$ between $U$ and $J$ that contains all the marked edges.

			\begin{observation}
				\label{obs:mark-probability}
				If $g(u)$ dominates $j$ for some $u \in U, j \in J$, then $uj \notin H^\marked$. Otherwise, $\Pr[uj \in H^\marked] = 1 - e^{-2y_{uj}}$. Moreover, for a group $u \in U$, the events $uj \in H^\marked$ over all jobs $j \in J$ are independent. 
			\end{observation}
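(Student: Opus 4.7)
The plan is to dispose of the three assertions in turn, with only the third (independence across jobs sharing a fixed group) requiring a little care.

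For the first assertion, I would simply appeal to the marking rule of Step~1: marking is applied only to edges $uj \in H^\cand$ satisfying the condition that $g(u)$ does not dominate $j$. If $g(u)$ dominates $j$, the edge $uj$ is never marked, whether or not it appears in $H^\cand$, so $uj \notin H^\marked$.

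For the second assertion, with $g(u)$ not dominating $j$, the preceding observation gives $\Pr[uj \in H^\cand] = 2y_{uj}$, and conditional on $uj \in H^\cand$ the marking coin turns up with probability $(1 - e^{-2y_{uj}})/(2y_{uj})$ independently of everything else. Multiplying the two yields
$$\Pr[uj \in H^\marked] \;=\; 2y_{uj}\cdot\frac{1 - e^{-2y_{uj}}}{2y_{uj}} \;=\; 1 - e^{-2y_{uj}},$$
as claimed.

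The only step that needs thought is the independence statement. Fix a group $u \in U$, and for each job $j$ let $X_j := \mathbf{1}[uj \in H^\cand]$ and $Z_j := \mathbf{1}[uj\text{ is marked}]$, so that $\mathbf{1}[uj \in H^\marked] = X_j Z_j$. Step~1 specifies that the random choice of the candidate pair $(v^1_j, v^2_j)$ is done independently across jobs $j$; hence the collection $\{X_j\}_{j \in J}$ consists of mutually independent Bernoulli variables. The marking coins $\{Z_j\}_{j \in J}$ are drawn independently of one another and of the candidate selection. Therefore the random variables $\{X_j Z_j\}_{j \in J}$ are built from disjoint, mutually independent sources of randomness, so the events $\{uj \in H^\marked\}_{j \in J}$ are mutually independent. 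The slight subtlety that the two candidate edges of a single job $j$ may be correlated (indeed, are forced onto distinct machines whenever $j$ is not dominated) is irrelevant here, because fixing $u$ and varying $j$ only touches one candidate per job at a time and isolates the randomness of distinct jobs. No other obstacle is anticipated.
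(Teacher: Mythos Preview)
Your proposal is correct. The paper states this as an observation without supplying any proof, treating all three assertions as immediate from the description of Step~1; your write-up simply makes explicit the reasoning the paper leaves implicit, and each of your three verifications matches the intended justification.
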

			
			Notice that $H^\marked$ does not contain parallel edges and $H^\marked \subseteq H^\cand$. This holds as if $H^\cand$ contains two copies of the same edge $uj$, then it must be that $g(u)$ dominates $j$, in which case we should not have marked $uj$. \medskip

			\noindent{\bf Step 2: Pairing Edges to Obtain a Set $H^\tsplit$ of Paths and Cycles.}\  For every vertex $u \in U$, we independently run the following pairing procedure. We pair the edges in $\delta_{H^\marked}(u)$ (instead of $\delta_{H^\cand}(u)$), so that 0 or 1 of the edges in $\delta_{H^\marked}(u)$ are not paired, depending on whether $|\delta_{H^\marked}(u)|$ is even or odd. Among all the valid pairings, we choose one uniformly at random. An edge in $\delta_{H^\cand}(u)$ may not be paired for two reasons: First, it may not be in $\delta_{H^\marked}(u)$. Second it is in $\delta_{H^\marked}(u)$, but $|\delta_{H^\marked}(u)|$ is odd and it is the only unpaired edge in the set. 
			
			Now, we create a graph $H^\tsplit$ from $H^\cand$ by splitting groups as follows. Let $H^\tsplit$ contain only the jobs initially. Then, for every group $u \in U$, and any two paired edges $uj, uj' \in H^\cand$, we add a copy of $u$ to $H^\tsplit$, and two edges connecting the copy to $j$ and $j'$ respectively. For every group $u \in U$ and any unpaired edge $uj \in H^\cand$, we add a copy of $u$ to $H^\tsplit$ and connect the copy to $j$. 
			
			Notice that in $H^\tsplit$, every job has degree exactly $2$, and every group (more precisely, every copy of a group) has degree 1 or 2.  So, each connected component in $H^\tsplit$ is either a cycle or a path connecting two copies of groups.  From now on, we may not strictly distinguish between a group in $U$ and its copies, when doing so does not lead to a confusion.

		\begin{claim}
			\label{claim:different-machines}
			Let $P$ be a component in $H^\tsplit$ containing at least $4$ edges,  and let $j \in P$. Then the two adjacent groups of $j$ in $P$ belong to different machines. 
		\end{claim}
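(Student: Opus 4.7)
The plan is a short case analysis on the two candidate groups $v^1_j$ and $v^2_j$ chosen for $j$ in Step~1, since (the copies of) these are exactly the two groups adjacent to $j$ in any component of $H^\tsplit$: every job has degree exactly $2$ in $H^\tsplit$, and its two incident edges are precisely the two candidate edges $v^1_jj$ and $v^2_jj$ of $j$ in $H^\cand$. Thus if $g(v^1_j) \neq g(v^2_j)$ the claim is immediate, and the whole task reduces to ruling out the case $g(v^1_j) = g(v^2_j)$ whenever $|E(P)| \geq 4$. I expect no significant obstacle here; all the machinery needed is already packaged into the two bullets of Step~1, the pairing rule of Step~2, and Observation~\ref{obs:mark-probability}.

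The main argument I would give is the following. Suppose for contradiction that $g(v^1_j) = g(v^2_j) = i$ for some machine $i$. The second bullet of Step~1 gives $\Pr[g(v^1_j) = g(v^2_j) = i] = 0$ whenever $i$ does not dominate $j$, so this configuration can only arise when $i$ does dominate $j$. But then both $g(v^1_j)$ and $g(v^2_j)$ dominate $j$, and Observation~\ref{obs:mark-probability} gives $v^1_jj \notin H^\marked$ and $v^2_jj \notin H^\marked$. Since Step~2 forms pairs only among marked edges, neither candidate edge of $j$ gets paired; by the construction of $H^\tsplit$ the two copies of $v^1_j$ and $v^2_j$ adjacent to $j$ then each have degree $1$. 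Hence the connected component containing $j$ is the two-edge path $v^1_j \text{--} j \text{--} v^2_j$, contradicting $|E(P)| \geq 4$.

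Finally I would add a brief remark to handle the degenerate possibility $v^1_j = v^2_j$: the observation stated immediately after Step~1 says that a candidate edge $uj$ can appear twice in $H^\cand$ only when $g(u)$ dominates $j$, so this subcase falls into the argument above with the same conclusion (the component has $2$ edges). The overall proof is thus a short bookkeeping consequence of the two-candidate construction together with the marking rule, with no real difficulty to surmount.
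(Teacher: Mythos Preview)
Your proposal is correct and follows essentially the same argument as the paper's proof: if the two neighbors of $j$ lie on one machine then that machine dominates $j$, so neither candidate edge is marked, neither can be paired, and the component collapses to a length-$2$ path. You spell out the references to Step~1 and Observation~\ref{obs:mark-probability} more explicitly than the paper does, but the logic is identical.
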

		\begin{proof}
			If the two groups belong to the same machine, then the machine must dominate $j$. In this case the two edges between the two groups and $j$ are not marked and thus can not be paired with other edges.  Then $P$ should be a path of length $2$, a contradiction. \hfill
		\end{proof} \medskip
		
	\noindent{\bf Step 3: Breaking Cycles/Paths in $H^\tsplit$ into Segments.}\  We construct a set of edge-disjoint \emph{segments} from the paths and cycles in $H^\tsplit$, defined as follows.
	\begin{definition}
		Let $P$ be a component in $H^\tsplit$ that is a path, or a cycle of length at least $8$. A \emph{segment} $S$ in $P$ is a sub-path of $P$ of length 4 or 6, starting and ending at groups. Moreover, if $S$ is a path $u$-$j$-$u'$-$j'$-$u''$-$j''$-$u'''$ of length $6$, then $g(u) \neq g(u'')$ and $g(u''') \neq g(u')$. 
		
		The groups in $S$ that are not the starting or ending group of $S$ are called the \emph{centers} of $S$: If $S$ is $u$-$j$-$u'$-$j'$-$u''$, then the center is $u'$, and if $S$ is $u$-$j$-$u'$-$j'$-$u''$-$j''$-$u'''$, then the centers are $u'$ and $u''$. 
		
		If $P$ is a length-4 or 6 cycle in $H^\tsplit$, then we also call $P$ a segment, and all groups in $P$ are centers of $P$. 
	\end{definition}

	Notice that by Claim~\ref{claim:different-machines} and the definition, all centers of a segment $S$ belong to different machines, and a center and a non-center group of $S$ belong to different machines. However, it may be possible that the two non-center groups of $S$ belong to the same machine.
		
		We construct a set $\calS$ of edge-disjoint segments. If $P$ is a cycle of length 4 or 6 in $H^\tsplit$, then $P$ is a segment and we add $P$ to $\calS$.  
		
		Focus on a path component $P$ in $H^\tsplit$, and let it be $u_1$-$j_1$-$u_2$-$j_2$-$\cdots$-$u_{\ell+1}$. Then we add a set of edge-disjoint segments of $P$ to $\calS$ as follows: With probability $1/2$, we add to $\calS$ all the segments of $P$ of the form $u_t$-$j_t$-$u_{t+1}$-$j_{t+1}$-$u_{t+2}$ with $t$ being an odd number.  With the remaining probability $1/2$, we add to $\calS$ all the segments with $t$ being an even number. The following claim is immediate:
		\begin{claim}
		    \label{claim:break-into-segments-path}
			Every internal group $u$ in $P$ is the center of a constructed segment with probability $\frac12$.
		\end{claim}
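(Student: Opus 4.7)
The plan is a direct case analysis of the single coin flip governing the construction on $P$. Fix an internal group $u_s$ in $P = u_1\text{-}j_1\text{-}u_2\text{-}j_2\text{-}\cdots\text{-}u_{\ell+1}$, so $2 \le s \le \ell$. Observe that among all length-$4$ sub-paths of the form $u_t\text{-}j_t\text{-}u_{t+1}\text{-}j_{t+1}\text{-}u_{t+2}$ of $P$, the unique one whose center equals $u_s$ is the one with $t = s-1$, namely $u_{s-1}\text{-}j_{s-1}\text{-}u_s\text{-}j_s\text{-}u_{s+1}$. This is a valid segment, since a length-$4$ segment only needs to be a sub-path starting and ending at groups, with no further machine-distinctness requirement (that requirement applies only to length-$6$ segments).

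By the construction, $\calS$ is obtained from $P$ by flipping a single unbiased coin: with probability $1/2$ we include every segment whose starting index $t$ is odd, and with the remaining probability $1/2$ every segment whose starting index is even. Hence the unique candidate segment having $u_s$ as its center, the one with $t = s-1$, is included in $\calS$ precisely when the chosen parity matches the parity of $s-1$, which is an event of probability exactly $1/2$. There is no substantive obstacle here: the construction on a path component uses only length-$4$ segments, each internal group is the center of exactly one such sub-path of $P$, and the claim follows directly from a single parity choice.
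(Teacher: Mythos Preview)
Your proof is correct and matches the paper's reasoning; the paper in fact states the claim as ``immediate'' without writing out an argument, and your parity-based case analysis is exactly the intended justification.
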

		
		Now we show how to create segments from a cycle component $C$ in $H^\tsplit$ with length at least $8$, in the proof of the following lemma. As the proof is done by some tedious case-by-case analysis, we defer it to the appendix.
		\begin{restatable}{lemma}{breakintosegments}
			\label{lemma:break-into-segments}
			Let $C$ be a cycle component in $H^\tsplit$ with length at least $8$.  There is an efficient random procedure that outputs a set of edge-disjoint segments of $C$, such that every group $u \in C$ is a center of a segment with probability at least $\frac{4}{9}$.
		\end{restatable}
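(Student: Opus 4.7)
Label the groups in $C$ cyclically as $u_1,\dots,u_\ell$ and the jobs as $j_1,\dots,j_\ell$ (so $C$ has length $2\ell \ge 8$). For each $t$ set $a_t = 1$ if $g(u_t) = g(u_{t+2})$ and $0$ otherwise, and let $T = \{t : a_t = a_{t+1} = 0\}$; by the definition of a segment, $T$ is exactly the set of positions at which the length-6 segment $u_t$-$j_t$-$u_{t+1}$-$j_{t+1}$-$u_{t+2}$-$j_{t+2}$-$u_{t+3}$ is valid. The plan is to exhibit an explicit randomized mixture of edge-disjoint segment collections and verify the $4/9$ bound group by group.

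If $\ell$ is even, the cycle partitions into $\ell/2$ length-4 segments in either of two parities, and choosing one uniformly makes each group a center with probability exactly $\tfrac12 \ge \tfrac49$. For $\ell$ odd I would combine two families of decompositions: (i) for each $t \in T$, the decomposition $D_t$ consisting of the length-6 segment at $u_t$ together with the forced chain of $(\ell-3)/2$ length-4 segments along the remaining arc from $u_{t+3}$ back to $u_t$, whose centers are $\{u_{t+1},u_{t+2}\} \cup \{u_{t+4}, u_{t+6}, \dots, u_{t-1}\}$; (ii) for each of the $\ell$ maximum independent sets $I$ of $C_\ell$ (cyclic graph on positions), the collection $E_I$ placing an edge-disjoint length-4 at each $i \in I$, with centers $\{u_{i+1} : i \in I\}$. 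Requiring center probability $\ge 4/9$ for every group becomes a linear feasibility question on the mixing weights $(q_t)_{t \in T}$ and $(r_I)_I$.

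The main argument is a dichotomy. When $T = \emptyset$, I would use a short structural argument on the cyclic sequence $a_t$: since $\gcd(2,\ell) = 1$ for $\ell$ odd, the "skip-cycle" on indices with edges $\{t,t+2\}$ is a single cycle, and the constraints $g(u_i) \neq g(u_{i+1})$ together with no two consecutive zeros in $a$ force $|Z| := |\{t : a_t = 0\}| \ge 3$ and, after ruling out the small cases, force $\ell \ge 9$. For such $\ell$, the $E_I$-family alone already works: by symmetry among the $\ell$ maximum independent sets, every group is a center with probability exactly $(\ell-1)/(2\ell) \ge 4/9$, tight at $\ell = 9$. When $T \neq \emptyset$, I plan to solve the mixture LP explicitly; intuitively the $D_t$'s efficiently cover the centers $u_{t+1},u_{t+2}$ while the $E_I$'s fill in the groups at the "endpoints" $u_t,u_{t+3}$ and the odd-offset groups that $D_t$ misses.

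The hard part will be the small-$\ell$ cases, in particular $\ell \in \{5,7\}$, where $|T|$ may be as small as $1$ and the L4-only family cannot by itself reach $4/9$ (its LP averaging bound is $(\ell-1)/(2\ell) < 4/9$). Here I would first show from the realizability constraints on $a$ that $T \neq \emptyset$ in these cases (so at least one $D_t$ is available), and then verify LP feasibility by an explicit construction: for example, in the tightest realizable instance $\ell = 7, T = \{1\}$ the weights $q_{D_1} = 4/9$, $r_{E_{\{1,4,6\}}} = 1/9$, $r_{E_{\{3,5,7\}}} = 4/9$ give every group center probability at least $4/9$. I expect the residual case analysis to be driven entirely by the adjacency constraint, which sharply restricts which cyclic $a$-patterns can arise.
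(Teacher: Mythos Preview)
Your approach is correct and tracks the paper closely, though with a different framing. The paper splits on the cycle length rather than on $T$: for length a multiple of $4$ it uses the two parities exactly as you do; for length $4z+2 \ge 18$ it removes a uniformly random job and tiles the remaining length-$4z$ path with length-$4$ segments, which is literally your uniform $E_I$-mixture (deleting $j_k$ corresponds to the maximum independent set $\{k+1,k+3,\dots,k-2\}$) and gives the same $z/(2z+1)=(\ell-1)/(2\ell)$; for length $10$ or $14$ it first proves a valid length-$6$ segment exists---by direct case analysis on the multiset $\{g(u):u\in C\}$ rather than your skip-cycle colouring argument---and then takes the explicit $50$--$50$ mixture of that length-$6$ decomposition and the one obtained by deleting its middle job, a specific feasible point of your LP that in fact hits probability exactly $\tfrac12$ everywhere. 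So the ingredients coincide; the paper is more explicit and economical, while your LP viewpoint is more systematic. One simplification worth noting: once $\ell\ge 9$ the uniform $E_I$-family alone already meets $\tfrac49$ regardless of $T$, so the dichotomy can be on $\ell$ rather than on $T$, leaving only $\ell\in\{5,7\}$ for the mixed construction.
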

		We add the segments constructed by the lemma to $\calS$. \medskip
		
	\noindent{\bf Step 4: Rounding Along Segments.}\  

		For every segment $S \in \calS$, we list all the vertices along $S$ in a sequence, which have length $5$ or $7$; if $S$ is a length-4 or length-6 cycle, we break it at an arbitrary group and list all the vertices from the group. With probability $1/2$, let $\sigma(j)$ for every job $j$ in the sequence be the group $u$ before $j$. With the remaining probability $1/2$, let $\sigma(j)$ for every $j$ be the group after $j$ in the sequence.

		There are many other jobs that are not in any segment in $\calS$ and thus are not assigned. For these jobs $j$, we independently let $\sigma(j)$ be one of its two neighbors in $H^\cand$  uniformly at random. $j$ may not be in a segment because the component $P \in H^\tsplit$ containing $j$ falls in one of the following cases.
		\begin{itemize}
			\item $P$ is a path, $j$ is the first or the last job on $P$.
			\item $P$ is a cycle of length at least $8$, and the segments from $P$ do not cover $j$. This is possible as the segments constructed by the procedure in Lemma~\ref{lemma:break-into-segments} may not cover all jobs.
		\end{itemize}
		Notice that $P$ can not be a cycle of length $2$ as we do not mark an edge $uj$ if $u$ dominates $j$. This finishes the description of the rounding algorithm for the SNC scheme. 
				
	\subsection{The Analysis of Rounding Algorithm for SNC Scheme}

		Now we analyze the algorithm in Section~\ref{subsec:SNC-algorithm} and finish the proof of Theorem~\ref{thm:SN}.  We let $c_j = 1$ if we assign $j$ to $v^1_j$ and $c_j = 2$ if we assign $j$ to $v^2_j$.\footnote{It is possible that $v^1_j = v^2_j$. In this case, we can still tell whether $c_j = 1$ or $c_j = 2$. Indeed the purpose of introducing $c_j$ is to avoid the confusion caused by this case.} 
		To see Property~\ref{property:SN-marginal}, notice that $\Pr[v^1_j = u] = \Pr[v^2_j = u] = y_{uj}$ for every $uj \in U \times J$.  Conditioned on any values of $v^1_j$ and $v^2_j$, we have $\Pr[c_j = 1] = \Pr[c_j = 2] = \frac12$. Therefore, we have $\Pr[v^1_j = u, c_j = 1] = \Pr[v^2_j = u, c_j = 2] = \frac{y_{uj}}{2}$. So, $\Pr[\sigma(j) = u] = y_{uj}$.
		
		To prove Property~\ref{property:SN-independence}, we focus on two vertices $u, u' \in U$ with $g(u) = g(u')$  (it is possible that $u = u'$) and two distinct jobs $j, j' \in J$.  We have $\Pr[v^1_j = u, v^1_{j'} = u'] = \Pr[v^1_j = u]\Pr[v^1_{j'} = u'] = y_{uj}y_{u'j'}$ as the candidate edges for different jobs are chosen independently.  In the rest of this paragraph, we condition on $v^1_j = u, v^1_{j'} = u'$. We show that the probability $c_j = c_{j'} = 1$ is at most $\frac14$.  Notice that the two edges $uj$ and $u'j'$ in $H^\cand$ can not appear in the same cycle of length $4$ or $6$ in $H^\tsplit$, due to $g(u) = g(u')$ and Claim~\ref{claim:different-machines}. If the two edges $uj$ and $u'j'$ in $H^\cand$ do not belong to a same segment in $\calS$, then $j$ and $j'$ are rounded independently in Step 4. In this case, the probability is precisely $\frac14$. Now assume they appear in the same segment $S \in \calS$. As $g(u) = g(u')$, $u$ and $u'$ must be the two end-vertices of $S$, or we have $u = u'$. In either case, the probability $c_j = c_{j'} = 1$ happens with probability $0$.  So, the probability that $c_j = c_{j'} = 1$ is at most $\frac14$.
		
		This proves that $\Pr[v^1_j = u, v^1_{j'} = u', c_j = c_{j'} = 1] \leq \frac{y_{uj}y_{u'j'}}{4}$.  We can prove the same upper bound for $\Pr[v^1_j = u, v^2_{j'} = u', c_j = 1, c_{j'} = 2], \Pr[v^2_j = u, v^1_{j'} = u', c_j = 2, c_{j'} = 1]$ and $\Pr[v^2_j = u, v^2_{j'} = u', c_j = c_{j'} = 2]$. Therefore, we have $\Pr[\sigma(j) = u, \sigma(j') = u'] \leq y_{uj}y_{u'j'}$. 
		
		\paragraph{Proof of Property~\ref{property:SN-SN}} Finally, we prove Property~\ref{property:SN-SN}, which is the hardest. We focus on a vertex $u \in U$, and two distinct jobs $j, j' \in J$ such that $u$ does not dominate either $j$ or $j'$. Notice that each of $uj$ and $uj'$ appears in $H^\cand$ at most once, and $\Pr[uj, uj' \in H^\cand] = 4y_{uj}y_{uj'}$. Under the event, they are both marked with probability $\frac{(1-e^{-2y_{uj}})(1-e^{-2y_{uj'}})}{4y_{uj}y_{uj'}}$. 
  
		From now on, we condition on the event $uj, uj' \in H^\marked$.
		 Let $Q =  |\delta_{H^\marked}(u)| - 2$.  Then the distribution for $Q$ is dominated by $\text{Pois}(2\rho)$, where for convenience we define $\rho:=1 - y_{uj} - y_{uj'}$, and $\text{Pois}(\lambda)$ is the Poisson distribution with mean $\lambda$. Formally, we have the following claim:
		 \begin{claim}
			  We can define a random variable $X \sim \text{Pois}(2\rho)$ that is correlated with $Q$, such that $Q \leq X$ always holds.
		 \end{claim}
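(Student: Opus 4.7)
The plan is to build an explicit coupling so that $Q \le X$ pointwise. Begin with Observation~\ref{obs:mark-probability}: for a fixed group $u$, the indicators $\mathbf{1}[uj'' \in H^\marked]$, taken over all jobs $j''$, are mutually independent Bernoulli random variables with means $p_{j''} = 1 - e^{-2y_{uj''}}$ when $g(u)$ does not dominate $j''$, and $p_{j''} = 0$ otherwise. Consequently, conditioning on the event $uj, uj' \in H^\marked$ simply fixes those two indicators to $1$ and leaves the remaining indicators independent with their unconditional marginals; the conditional distribution of $Q$ is thus the distribution of $\sum_{j'' \ne j, j'} \mathbf{1}[uj'' \in H^\marked]$ under this independence structure.

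Next I would replace each Bernoulli indicator with a Poisson on a shared probability space. For each $j'' \in J \setminus \{j, j'\}$, introduce an independent $X_{j''} \sim \text{Pois}(2 y_{uj''})$ and, whenever $g(u)$ does not dominate $j''$, realize the indicator $\mathbf{1}[uj'' \in H^\marked]$ as $\mathbf{1}[X_{j''} \ge 1]$. This matches the required marginal since $\Pr[X_{j''} \ge 1] = 1 - e^{-2 y_{uj''}} = p_{j''}$. For jobs $j''$ dominated by $g(u)$, the indicator is identically $0$ and is trivially $\le X_{j''}$. Summing over $j'' \ne j, j'$ yields the pointwise bound $Q \le Y := \sum_{j'' \ne j, j'} X_{j''}$, and by additivity of independent Poissons we have $Y \sim \text{Pois}(2\mu)$ with $\mu := \sum_{j'' \ne j, j'} y_{uj''}$.

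Finally, I would inflate $Y$ up to the claimed parameter $2\rho$ using the LP constraint. Since $y(u, J) \le 1$, we have $\mu \le 1 - y_{uj} - y_{uj'} = \rho$, so drawing an independent $Z \sim \text{Pois}(2(\rho - \mu))$ and setting $X := Y + Z$ produces $X \sim \text{Pois}(2\rho)$ with $Q \le Y \le X$ almost surely. I do not see any real obstacle: the argument is the standard fact that $\text{Ber}(1 - e^{-\lambda})$ is dominated by $\text{Pois}(\lambda)$, promoted to a sum via independent Poissons and closed off by the marginal constraint on $y(u, J)$. The one step that requires care is justifying that the conditioning on $\{uj, uj' \in H^\marked\}$ preserves independence of the other edge-marking events, but this follows directly from job-wise independence of candidate selection together with edge-wise independence of marking.
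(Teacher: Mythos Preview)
Your proposal is correct and follows essentially the same approach as the paper: couple each Bernoulli indicator $\mathbf{1}[uj'' \in H^\marked]$ with an independent $X_{j''} \sim \text{Pois}(2y_{uj''})$ so that the indicator is dominated pointwise, sum, and use $y(u,J)\le 1$ to cap the Poisson mean at $2\rho$. The only cosmetic difference is that you make explicit the final inflation step (adding an independent $Z \sim \text{Pois}(2(\rho-\mu))$), which the paper leaves implicit by simply observing the mean is at most $2\rho$.
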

		 \begin{proof}
		 	$Q$ is the number of marked edges incident to $u$ other than $uj$ and $uj'$.  By Observation~\ref{obs:mark-probability}, for every $j'' \notin \{j, j'\}$, the probability that $j''$ is marked is at most $1 - e^{-2y_{uj''}}$ and the events are independent. 
		 	The Bernoulli variable with mean $1 - e^{-2y_{uj''}}$ is dominated by the Poisson distribution with mean $2y_{uj''}$. Therefore we can define a random variable $X_{j''}$ for every job $j'' \notin \{j, j'\}$, such that $X_{j''} \sim \text{Pois}(2y_{uj''})$, $uj'' \in H^\marked$ implies $X_{j''} \geq 1$, and the variables $X_{j''}$ over all $j''$ are independent.  It always holds that $Q \leq \sum_{j'' \in J \setminus \{j, j'\}}X_{j''}$, which follows the Poisson distribution with mean $\sum_{j'' \in J \setminus \{j, j'\}}2y_{uj''} = 2(y(u, J) - y_{uj} - y_{uj'}) \leq 2(1-y_{uj} - y_{uj'}) = 2\rho$ as $y(u, J) \leq 1$. \hfill
		 \end{proof}

		 Under the value $Q$, the probability that $uj$ and $uj'$ are paired is $\frac{1}{Q+1}$ when $Q$ is even, and $\frac{1}{Q+2}$ when $Q$ is odd.  So conditioned on $uj, uj' \in H^\marked$, we have that $uj$ and $uj'$ are paired with each other with probability at least 
		\begin{align*}
			&e^{-{2\rho}}\left(\sum_{t \geq 0\text{ even}}\frac{{(2\rho)}^t}{t!}\frac{1}{t+1} + \sum_{t \geq 1\text{ odd}}\frac{{(2\rho)}^t}{t!}\frac{1}{t+2} \right) \\
			&=  e^{-{2\rho}}\left(\sum_{t = 0}^\infty\frac{{(2\rho)}^t}{t!}\frac{1}{t+1} - \sum_{t \geq 1\text{ odd}}\frac{{(2\rho)}^t}{t!}\frac{1}{(t+1)(t+2)} \right)\\
			&= e^{-{2\rho}}\left(\frac1{2\rho}\sum_{t=1}^\infty\frac{{(2\rho)}^t}{t!} - \frac{1}{{(2\rho)}^2}\sum_{t \geq 3\text{ odd}}\frac{{(2\rho)}^t}{t!}\right)\\
			& = e^{-{2\rho}}\left(\frac{1}{{2\rho}}(e^{2\rho} - 1) - \frac{1}{{(2\rho)}^2}(\frac{e^{2\rho}-e^{-{2\rho}}}{2}  - {2\rho})\right) 
			= \frac1{2\rho} -\frac{1-e^{-{4\rho}}}{8\rho^2},
		\end{align*}
		where we used $\frac{e^{2\rho} - e^{-{2\rho}}}{2} = \sum_{t \geq 1\text{ odd}} \frac{{(2\rho)}^t}{t!}$ in the first equality in the last line.
		
		Define $\calE$ to be the event that $uj$ and $uj'$ are paired to each other, and the copy of the group $u$ for the pair $(uj, uj')$ is a center of some segment in $\calS$.  Notice that when $\calE$ happens, we will not assign both $j$ and $j'$ to $u$.  So we have
		\begin{align}
			\Pr[\sigma(j) = u, \sigma(j') = u] &= 4y_{uj}y_{uj'} \cdot \big(1 - \Pr[\calE|uj,uj' \in H^\cand]\big) \cdot \frac14 \nonumber\\
			&\leq  4y_{uj}y_{uj'} \cdot \left( 1 - \frac{(1-e^{-2y_{uj}})(1-e^{-2y_{uj'}})}{4y_{uj}y_{uj'}} \cdot \left(\frac{1}{{2\rho}} - \frac{1-e^{-{4\rho}}}{8\rho^2}\right) \cdot \frac49 \right) \cdot \frac14 \nonumber\\
			&=y_{uj}y_{uj'} - \frac{1}{9}(1-e^{-2y_{uj}})(1-e^{-2y_{uj'}})\left(\frac{1}{{2\rho}} - \frac{1-e^{-{4\rho}}}{8\rho^2}\right). \label{equ:bound-assign-to-same-group}
		\end{align}
		
		In the inequality, ${(1-e^{-2y_{uj}})(1-e^{-2y_{uj'}})}{4y_{uj}y_{uj'}}$ is the probability that $uj$ and $uj'$ are marked, under the condition that they are candidate edges; $\frac{1}{{2\rho}} - \frac{1-e^{-{4\rho}}}{8\rho^2}$ is the probability that they are paired with each other, under the condition that they are marked; and $\frac49$ comes from Claim~\ref{claim:break-into-segments-path} and Lemma~\ref{lemma:break-into-segments}. 
		
		We then lower bound $\frac{(1-e^{-2y_{uj}})(1-e^{-2y_{uj'}})}{y_{uj}y_{uj'}}\left(\frac{1}{{2\rho}} - \frac{1-e^{-{4\rho}}}{8\rho^2}\right)$.  For a fixed $y_{uj} + y_{uj'} = 1 -\rho$, we have that  $\frac{(1-e^{-2y_{uj}})(1-e^{-2y_{uj'}})}{y_{uj}y_{uj'}}$ is minimized when $y_{uj} = y_{uj'}$.  This follows from taking the logarithm of the quantity, and the fact that $\ln\frac{1-e^{-x}}{x}$ is a convex function of $x$:
		
		\begin{restatable}{claim}{functionconvex}
			$\ln\frac{1-e^{-x}}{x}$ is convex over the domain $[0, \infty)$. 
		\end{restatable}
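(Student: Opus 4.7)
The cleanest route is via the integral representation $\frac{1 - e^{-x}}{x} = \int_0^1 e^{-tx}\, dt$ for $x > 0$, with the right-hand side extending continuously to $1$ at $x = 0$. Reading the right-hand side as $\mathbb{E}[e^{-xT}]$ with $T$ uniform on $[0,1]$, the statement becomes the standard log-convexity of a moment generating function in its argument. The plan is to derive this in one line from H\"older's inequality: for $\lambda \in [0,1]$ and $x_0, x_1 \geq 0$,
\[
\mathbb{E}\!\left[e^{-(\lambda x_1 + (1-\lambda) x_0) T}\right] = \mathbb{E}\!\left[\left(e^{-x_1 T}\right)^\lambda \left(e^{-x_0 T}\right)^{1-\lambda}\right] \leq \left(\mathbb{E}[e^{-x_1 T}]\right)^\lambda \left(\mathbb{E}[e^{-x_0 T}]\right)^{1-\lambda},
\]
and then take logarithms of both sides.

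If a more self-contained calculus argument is preferred, I would set $f(x) := \ln(1 - e^{-x}) - \ln x$ and compute
\[
f'(x) = \frac{1}{e^x - 1} - \frac{1}{x}, \qquad f''(x) = \frac{1}{x^2} - \frac{e^x}{(e^x - 1)^2}.
\]
Then $f''(x) \geq 0$ reduces to $(e^x - 1)^2 \geq x^2 e^x$; dividing both sides by $e^{x/2}$ (taking a square root) this becomes $e^{x/2} - e^{-x/2} \geq x$, i.e., $2 \sinh(x/2) \geq x$. The latter is immediate from the Taylor expansion of $\sinh$, whose coefficients are all nonnegative.

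The only subtle point, in either approach, is handling the boundary at $x = 0$: one uses $\lim_{x \to 0}\frac{1 - e^{-x}}{x} = 1$ to extend $f$ continuously to $f(0) = 0$, so that convexity on the open half-line $(0, \infty)$ extends to the closed half-line $[0, \infty)$ by continuity. I do not expect any real obstacle here; the claim is essentially a calculus exercise, and I would prefer the H\"older route in the final writeup since it avoids all algebraic manipulation and makes the structural reason for convexity transparent.
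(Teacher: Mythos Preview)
Your proposal is correct. Your calculus route (Approach~2) is essentially the paper's own argument: the paper also computes the second derivative, reduces to the inequality $(1-e^{-x})^2 \ge e^{-x}x^2$ (equivalently your $(e^x-1)^2 \ge x^2 e^x$), and finishes by taking a square root and a short derivative computation rather than your $2\sinh(x/2)\ge x$ via Taylor; these are cosmetic differences.

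Your preferred H\"older route is genuinely different from the paper and arguably cleaner: recognizing $\frac{1-e^{-x}}{x}=\int_0^1 e^{-tx}\,dt=\mathbb{E}[e^{-xT}]$ turns the claim into log-convexity of a Laplace transform, which drops out of H\"older in one line with no calculus at all. The paper does not use this structural observation. What the H\"older approach buys is transparency and generality (the same argument works for any nonnegative measure in place of the uniform law on $[0,1]$); what the calculus approach buys is self-containment without invoking H\"older. Either is perfectly adequate here, and your handling of the $x=0$ endpoint by continuity is fine.
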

					
		So for a fixed $\rho \in [0, 1]$, $\frac19\frac{(1-e^{-2y_{uj}})(1-e^{-2y_{uj'}})}{y_{uj}y_{uj'}}\left(\frac{1}{{2\rho}} - \frac{1-e^{-{4\rho}}}{8\rho^2}\right)$ is minimized when $y_{uj} = y_{uj'} = \frac{1 - \rho}{2}$. In this case, the quantity is equal to
		$\displaystyle \frac49\cdot \left(\frac{1-e^{\rho - 1}}{1-\rho}\right)^2 \cdot \left(\frac{1}{{2\rho}} - \frac{1-e^{-{4\rho}}}{8\rho^2}\right)$.
		
		The function in the domain $[0, 1]$ has the minimum value of at least $0.1561$ (See Figure~\ref{fig:rho-graph} in Appendix~\ref{appendix:wC-plotting} for the function). Therefore, \eqref{equ:bound-assign-to-same-group} is at most $(1 - 0.1561)y_{uj}y_{uj'}  = (1 - \eta)y_{uj}y_{uj'}$, finishing the proof of Property~\ref{property:SN-SN}.

	\section{Scheduling Algorithm Based on Rectangle LP}
	\label{sec:wC-algorithm}
	In this section, we describe our algorithm for the problem of scheduling unrelated jobs to minimize the weighted completion time. The problem is formally defined as follows. We are given a set $M$ of machines, a set $J$ of jobs, and a processing time $p_{ij} \in \Z_{>0} \cup \{\infty\}$ for every $i \in M, j \in J$. Every job has a weight $w_j \in \Z_{>0}$ for every $j \in J$. The output of the problem is an assignment $\sigma: J \to M$ and a total order $\prec_i$ of jobs in $\sigma^{-1}(i)$ for every $i \in M$. The completion time $C_j$ of a job $j$ with $\sigma(j) = i$ in the solution is $p_{ij} + \sum_{j' \in \sigma^{-1}(i): j' \prec_i j} p_{ij'}$. The goal of the problem is to minimize $\sum_{j \in J} w_jC_j$. It is well-known that once the assignment $\sigma:J \to M$ is decided, it is optimum to sort jobs in $\sigma^{-1}(i)$ for every $i \in M$ in non-decreasing order of $p_{ij}/w_j$.

	As in \cite{Li20} and \cite{IS20}, our algorithm is based on the so called rectangle LP, which we describe now. Let $T= \sum_{j \in J} \max_{i: p_{ij} \neq \infty} p_{ij}$ so that any schedule will complete by time $T$. The LP is defined as follows: 
	\begin{align}
		\min \qquad \sum_{j \in J}w_j \sum_{i \in M, s \in [0, T)} x_{ijs}(s+p_{ij}) \label{LP:rectangle}
	\end{align}\vspace*{-15pt}
	
	\noindent\begin{minipage}{0.53\textwidth}
		\begin{align}
			\sum_{i \in M, s\in [0, T)}x_{ijs} &= 1 &\quad &\forall j \in J \label{LPC:rectangle-scheduled}\\
			\sum_{j \in J, s \in [t - p_{ij}, t)}x_{ijs} &\leq 1 &\quad &\forall i \in M, t \in [T] \label{LPC:rectangle-capacity}
		\end{align}
	\end{minipage}\hfill
	\begin{minipage}{0.45\textwidth}
		\begin{align}
			x_{ijs} &= 0 &\quad &\forall i \in M, j \in J, s > T - p_{ij} \label{LPC:rectangle-no-late} \\[15pt]
			x_{ijs} &\geq 0	&\quad &\forall i \in M, j \in J, s \in [0, T) \label{LPC:rectangle-non-negative} \\[3pt]\nonumber
		\end{align}
	\end{minipage}\medskip

	In the correspondent integer program, $x_{ijs}$ for every $ij \in M \times J$ and integer $s \in [0, T)$ indicates if job $j$ is scheduled on machine $i$, with starting time $s$.  The objective gives the weighted completion time of the schedule.  \eqref{LPC:rectangle-scheduled} requires that every job $j$ is scheduled. \eqref{LPC:rectangle-capacity} requires that at any time on machine $i$, at most one job is being processed. \eqref{LPC:rectangle-no-late} ensures that no jobs complete after time $T$. \eqref{LPC:rectangle-non-negative} is the non-negativity constraint. The LP is polynomial-sized when $T$ is polynomially bounded, but it can be solved to an arbitrary precision even when $T$ is super-polynomial in $n$, as shown in \cite{Li20}. 
	
	Im and Shadloo \cite{IS20} showed that given a solution $x$ to LP\eqref{LP:rectangle}, one can round it to an integral schedule, whose weighted completion time in expectation is at most $1.488$ times the value of $x$. In this paper, we improve the ratio to $1.45$.
	
	As in \cite{IS20}, we define $R_{ijs}$ for every variable $x_{ijs} > 0$ to be a rectangle with horizontal span $(s, s + p_{ij}]$ and height $x_{ijs}$. We say that the rectangle is for job $j$, on machine $i$ and with starting time $s$.  So \eqref{LPC:rectangle-scheduled} says the total height of all rectangles for a job $j$ is at least 1, and \eqref{LPC:rectangle-capacity} says the total height of rectangles on any machine $i$ that covers a time point $t$ is at most 1.  Let $x_{ij} = \sum_s x_{ijs}$ be the total height of all rectangles for job $j$ on machine $i$, for every $i \in M, j \in J$.

	\subsection{Comparison of Algorithm to That of Im and Shadloo \cite{IS20}} 
	As showed by Li \cite{Li20}, the following independent rounding algorithm leads to a 1.5-approximation for the problem. For every job $j$, we independently choose an ``anchor rectangle'' $R_{ijs}$ for $j$, so that $R_{ijs}$ is chosen with probability $x_{ijs}$. If the anchor rectangle for $j$ is $R_{ijs}$, we assign $j$ to machine $i$, and define $\theta_j$ to be a random real number in the span $(s, s + p_{ij}]$ of the rectangle. Then we schedule all the jobs assigned to a machine $i$ in increasing order of $\theta_j$ values.  This serves as a starting point of their better-than-1.5 approximation algorithms. 
	
	Similar to Im and Shadloo \cite{IS20}, we break the time horizon into a set of base windows of geometrically increasing length and define when a rectangle belongs to a base window. We choose a shifting parameter $\tau_{ij}$ for every pair $ij$ with $x_{ij} > 0$. Roughly speaking, $R_{ijs}$ belongs to base window $k$ if $s + \tau_{ij}$ is in the base window. We create one group for every machine $i$ and every base window $k$, that contains all the rectangles on machine $i$ belonging to the base window $k$.  We guarantee that the total height of rectangles in a group is at most 1. This way, bad rectangles on machine $i$ with the similar width have a good probability of being grouped together. Then applying Theorem~\ref{thm:SN} with this grouping gives some negative correlation between the events these bad rectangles are chosen as the anchor rectangles. This leads to the better-than-1.5-approximation ratio of the algorithm.

	In our SNC scheme, we do not require that a job is only incident to one group belonging to a machine. So we do not need to choose a representative rectangle $R_{ijs}$ for a $ij$ pair as in Im and Shadloo \cite{IS20}. As a result, our construction automatically satisfies that the total height of rectangles belonging to one group is at most $1$. In contrast, the condition is only satisfied in expectation in \cite{IS20}. Thus, they need to filter $1/2$ fraction of the rectangles in a group and apply Chernoff bound, which only guarantees that the condition holds with a probability smaller than 1. 
	
	\subsection{Formal Description of Rounding Algorithm for Scheduling Problem} 
	In this section, we formally describe the rounding algorithm for the problem $R||\sum_j w_jC_j$.\medskip
	
	\noindent{\bf Step 1: Constructing Base Windows and Choosing Shifting Parameters.}\  Let $\alpha  = 0.3, \beta = 12.1$ be two parameters whose values are chosen to optimize the final approximation ratio; recall that $\eta = 0.1561$ is the parameter in Theorem~\ref{thm:SN}.  We randomly choose a parameter $\rho \in [1, 1+\beta)$ so that $\ln \rho$ is uniformly distributed in $[0, \ln (1+\beta))$.  Then, define a grid point to be a real number of the form $\rho(1+\beta)^k, k \in \Z$.  So the time horizon $(0, \infty)$ is broken into many \emph{base windows} by the grid points: the base window $k$ is the interval $(\rho (1+\beta)^{k-1}, \rho(1+\beta)^k]$. 
		
	For every pair $ij$ with $p_{ij} \neq \infty$, we choose a shifting parameter $\tau_{ij}$, independently and uniformly at random from $[0, p_{ij})$. Let $\vec \tau$ be the vector $(\tau_{ij})_{i\in M, j \in J:p_{ij} \neq \infty}$.  Till the end of this section, we condition on the values of $\rho$ and $\vec \tau$ when dealing with probabilistic events. 
	
	\begin{definition} 
		We say a rectangle $R_{ijs}$ belongs to the base window $k$ if $s \leq \rho (1+\beta)^{k-1} < s + \tau_{ij} \leq \rho(1+\beta)^k$.  Let $\calR_k$ be the set of all rectangles belonging to the base window $k$. We use $R_{ijs} \sim_k R_{ij's'}$ to denote the event that the two rectangles $R_{ijs}$ and $R_{ij's'}$ on the same machine $i$ belong to the same base window $k$. We use $R_{ijs} \sim R_{ij's'}$ to denote the event that there exists some $k$ such that $R_{ijs} \sim_k R_{ij's'}$. 
	\end{definition}
	It is possible that a rectangle $R_{ijs}$ does not belong to any base window: This happens if there are no grid points between $s$ and $s + \tau_{ij}$. We note that the requirement $s \leq \rho(1+\beta)^{k-1}$ will be used in the proof of Lemma~\ref{lemma:total-height-small}. \medskip
	
	\noindent{\bf Step 2: Defining Groups and Applying Theorem~\ref{thm:SN}.}\  To apply the SNC scheme, we need to define the groups $U$, the mapping $g: U \to M$ and the fractional matching $y \in [0, 1]^{U \times J}$.  We view each group $u$ as a set of rectangles and the set $U$ of all groups forms a partition of the rectangles.  
	
	For every machine $i$ and base window $k$, we add a group $u_{ik} := \{R_{ijs} \in \calR_k\}$ to $U$, if the set is not empty. We define $g(u_{ik}) = i$. 	For any $ij$ pair in $M \times J$, we add a group $v_{ij}:=\{R_{ijs} \notin \union_k \calR_k\}$ to $U$ if the set is not empty. We define $g(v_{ij}) = i$. That is, $v_{ij}$ contains all the rectangles for $j$ on $i$ that do not belong to any base window.  Clearly, $U$ forms a partition of all rectangles.  Then for every group $u \in U$ with $g(u) = i$ and $j \in J$, we define
	\begin{align*}
	    y_{uj} = \sum_{s: R_{ijs} \in u} x_{ijs}
	\end{align*}
	to be the total height of rectangles in group $u$ for job $j$. By definition, we have $y_{v_{ij}j'} = 0$ if $j \neq j'$.
	
	\begin{lemma}
	    \label{lemma:total-height-small}
		For every $u \in U$ we have $y(u, J) \leq 1$. For every $j \in J$, we have $y(U, j) = 1$.
	\end{lemma}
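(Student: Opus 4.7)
The plan is to prove the two statements separately, both by reducing to the LP constraints \eqref{LPC:rectangle-scheduled} and \eqref{LPC:rectangle-capacity}. The key structural observation is that the groups in $U$ partition all the rectangles: each $R_{ijs}$ lies either in some $u_{ik}$ (when it belongs to base window $k$) or in $v_{ij}$ (when it does not belong to any base window). This immediately yields the second statement: for any fixed $j \in J$, summing $y_{uj}$ over all $u \in U$ just gives $\sum_{i \in M,\, s \in [0,T)} x_{ijs}$, which equals $1$ by \eqref{LPC:rectangle-scheduled}.

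For the first statement I would split on the type of group. For a group of the form $v_{ij}$ the bound is easy: $y_{v_{ij}j'}$ is nonzero only when $j' = j$, and $y_{v_{ij}j} \leq \sum_s x_{ijs} \leq 1$, again using \eqref{LPC:rectangle-scheduled}.

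The main obstacle is handling a group of the form $u_{ik}$. My plan is to exhibit a single integer time point $t^* \in [T]$ that is covered by every rectangle in $u_{ik}$, so that \eqref{LPC:rectangle-capacity} applied at $t = t^*$ on machine $i$ immediately gives $y(u_{ik}, J) = \sum_{R_{ijs} \in u_{ik}} x_{ijs} \leq 1$. I would choose $t^* := \lfloor \rho(1+\beta)^{k-1} \rfloor + 1$, the smallest integer strictly larger than the left grid point of base window $k$, and then verify $s \in [t^* - p_{ij},\, t^*)$ for every $R_{ijs} \in u_{ik}$ as follows. The definition of a base window requires $s \leq \rho(1+\beta)^{k-1}$, and combined with the integrality of $s$ this yields $s \leq t^* - 1 < t^*$; on the other hand, $s + p_{ij} > s + \tau_{ij} > \rho(1+\beta)^{k-1}$, and since $s + p_{ij}$ is an integer this gives $s + p_{ij} \geq t^*$. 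This is exactly where the emphasized requirement $s \leq \rho(1+\beta)^{k-1}$ from the definition enters. A small sanity check is that $t^* \in [T]$; this follows because nonemptiness of $u_{ik}$ produces some $R_{ijs}$ with $s \geq 0$ and, by \eqref{LPC:rectangle-no-late}, $s + p_{ij} \leq T$, so $1 \leq t^* \leq s + p_{ij} \leq T$.
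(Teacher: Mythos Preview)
Your proof is correct and follows essentially the same approach as the paper: both argue that every rectangle in $u_{ik}$ covers the grid point $\rho(1+\beta)^{k-1}$ (using the requirement $s \leq \rho(1+\beta)^{k-1}$ for one direction and $s + \tau_{ij} > \rho(1+\beta)^{k-1}$, hence $s + p_{ij} > \rho(1+\beta)^{k-1}$, for the other), and then invoke \eqref{LPC:rectangle-capacity}. You are simply more careful than the paper in passing from the real grid point to an integer $t^* \in [T]$ so that \eqref{LPC:rectangle-capacity} literally applies, and your treatment of the $v_{ij}$ case and of $y(U,j)=1$ via the partition property matches the paper's.
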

	\begin{proof}
		For a group $u_{ik} \in U$, we have $y(u_{ik}, J)$ is the total height of all rectangles in $u_{ik}$. 
  Notice that $R_{ijs} \in \calR_k$ only if $[s, s + p_j)$ contains $\rho(1+\beta)^{k-1}$.    
  By \eqref{LPC:rectangle-capacity}, this is at most $1$. $y(v_{ij}, J) = y_{v_{ij}j} \leq 1$ as the total height of rectangles for $j$ is at most 1.  The second statement follows from that $U$ is a partition of all rectangles.	
	\end{proof}	
	
	Now we can apply the algorithm for the SNC scheme in Theorem~\ref{thm:SN} on $U, g$ and $y$ to obtain an assignment $\sigma: J \to U$ of jobs to groups.  For every $j \in J$, we identify an \emph{anchor} rectangle $R_{ijs}$ that resulted in the assignment of $j$ to $\sigma(j)$ in the SNC scheme: assuming $\sigma(j) = u$, we choose the anchor rectangle for $j$ randomly from the all rectangles for $j$ in $u$, with probability proportional to their heights. So, such a rectangle $R_{ijs}$ is chosen with probability $\frac{x_{ijs}}{y_{uj}}$.
	
	We use $i_j$ and $s_j$ to denote the machine and starting time of the anchor rectangle for job $j$.	Abusing notations slightly, we also use $R_{ijs}$ to denote the event that $R_{ijs}$ is the anchor rectangle for $j$.  That is, $R_{ijs}$ holds iff $i_j = i$ and $s_j = s$. We state a lemma that is a direct consequence of Theorem~\ref{thm:SN}. Recall that a machine $i$ dominates a job $j$ if $y(g^{-1}(i), j) = x_{ij} > \frac12$. 
	
	\begin{restatable}{lemma}{applyingSN}
		\label{lemma:applying-SN}
	Let $j, j' \in J$ be two distinct jobs, $i \in M$ and $s, s' \in \Z_{\geq 0}$. Then the following statements hold.
		\begin{enumerate}[label=(\ref{lemma:applying-SN}\alph*)]
			\item \label{property:applying-SN-marginal} $\Pr[R_{ijs}] = x_{ijs}$.
			\item \label{property:applying-SN-independence} $\Pr[R_{ijs}, R_{ij's'}] \leq x_{ijs}x_{ij's'}$.  
			\item \label{property:applying-SN-SN} If $i$ does not dominate any of $j$ and $j'$, then
				\begin{align*}
					\Pr[R_{ijs}, R_{ij's'} |R_{ijs} \sim R_{ij's'}] &\leq (1 - \eta)x_{ijs}x_{ij's'},\text{ where $\eta = 0.1561$ is as in Theorem~\ref{thm:SN}}.
				\end{align*}
		\end{enumerate}
	\end{restatable}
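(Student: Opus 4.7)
\medskip

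\noindent\textbf{Proof Proposal.}\ The plan is to reduce all three statements to the corresponding properties of Theorem~\ref{thm:SN}, using the fact that, given the group assignment $\sigma$, the anchor-rectangle choices for different jobs are mutually independent. The starting observation is that every rectangle $R_{ijs}$ lies in a unique group $u \in U$ with $g(u) = i$: either $R_{ijs} \in u_{ik}$ for the unique base window $k$ to which $R_{ijs}$ belongs, or $R_{ijs} \in v_{ij}$ if it belongs to no base window. Given $\sigma(j) = u$, the anchor rectangle for $j$ is chosen independently of everything else from the rectangles of $j$ in $u$ with probabilities proportional to heights, so $\Pr[\text{anchor}(j) = R_{ijs} \mid \sigma(j) = u] = x_{ijs}/y_{uj}$.

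For part~\ref{property:applying-SN-marginal}, let $u$ be the group containing $R_{ijs}$. Then I would simply write
\[
\Pr[R_{ijs}] = \Pr[\sigma(j) = u] \cdot \frac{x_{ijs}}{y_{uj}} = y_{uj} \cdot \frac{x_{ijs}}{y_{uj}} = x_{ijs},
\]
using property~\ref{property:SN-marginal} of Theorem~\ref{thm:SN}. For part~\ref{property:applying-SN-independence}, let $u, u'$ be the groups containing $R_{ijs}, R_{ij's'}$ respectively; both have $g(u) = g(u') = i$. Since the two anchor choices are conditionally independent given $\sigma$,
\[
\Pr[R_{ijs}, R_{ij's'}] = \Pr[\sigma(j) = u, \sigma(j') = u'] \cdot \frac{x_{ijs}}{y_{uj}} \cdot \frac{x_{ij's'}}{y_{u'j'}} \leq y_{uj} y_{u'j'} \cdot \frac{x_{ijs}}{y_{uj}} \cdot \frac{x_{ij's'}}{y_{u'j'}} = x_{ijs} x_{ij's'},
\]
where the inequality is property~\ref{property:SN-independence} of Theorem~\ref{thm:SN} (which covers both $u = u'$ and $u \neq u'$).

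For part~\ref{property:applying-SN-SN}, the key observation is that, conditioned on $\rho$ and $\vec\tau$, the relation $\sim$ is deterministic, so the conditional probability equals the unconditional probability on the event that $R_{ijs} \sim R_{ij's'}$ holds. Moreover, if $R_{ijs} \sim_k R_{ij's'}$, then by the definition of groups in Step~2, both rectangles lie in the same group $u = u_{ik}$. Applying property~\ref{property:SN-SN} of Theorem~\ref{thm:SN} to this $u$, which is legitimate because $g(u) = i$ does not dominate $j$ or $j'$ by hypothesis, yields $\Pr[\sigma(j) = \sigma(j') = u] \leq (1 - \eta) y_{uj} y_{uj'}$, and then the same conditional-independence argument as in part~\ref{property:applying-SN-independence} gives the $(1-\eta) x_{ijs} x_{ij's'}$ bound.

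I do not expect any serious technical obstacle here, since the hard work has already been done in Theorem~\ref{thm:SN} and in Lemma~\ref{lemma:total-height-small} which certified the hypothesis $y(u, J) \leq 1$ needed to apply the SNC scheme. The only subtlety to be careful about is the interpretation of conditioning on $R_{ijs} \sim R_{ij's'}$: because $\rho, \vec\tau$ are fixed, this is a deterministic event, and the statement is vacuous (and trivially true) when the event fails. The remaining bookkeeping is to verify that the group $u$ used in the application of Theorem~\ref{thm:SN}\ref{property:SN-SN} is exactly the group containing both rectangles, which is immediate from the construction $u_{ik} = \{R_{ijs} \in \mathcal{R}_k\}$.
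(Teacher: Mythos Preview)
Your proposal is correct and follows essentially the same route as the paper: identify the unique group containing each rectangle, invoke the corresponding property of Theorem~\ref{thm:SN} for $\sigma$, and multiply by the independent anchor-selection probabilities $x_{ijs}/y_{uj}$. The paper's proof is only cosmetically different in that it makes the conditioning on $\rho,\vec\tau$ explicit throughout and then de-conditions at the end (including a final de-conditioning on the base-window index $k$ for part~\ref{property:applying-SN-SN}), whereas you rely directly on the convention stated in Step~1 that $\rho,\vec\tau$ are already fixed so that $\sim$ is deterministic.
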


	\noindent{\bf Step 3: Construction of Final Schedule.}\  
	We define 
	\begin{align*}
		\theta_j= \begin{cases}
			(1+\alpha)s_j + \tau_{i_jj} & \text{if $i_j$ does not dominate $j$}\\
			(1+\alpha)s_j + \tau_{i_jj} + 0.2\cdot p_{i_jj} & \text{if $i_j$ dominates $j$}
		\end{cases}
	\end{align*}
 
	Recall that $\alpha = 0.3$ is the global parameter we chose at the beginning of the algorithm. As observed in the algorithms of \cite{Li20} and \cite{IS20}, the hard cases come from the rectangles $R_{ijs}$ with $s \ll p_{ij}$ and $x_{ij}$ close to 0. This is because we can charge some delay to $s$ when it is big and a self-``charging'' is possible when $x_{ij}$ is large.
 Thus we can afford to add the term $\alpha s_i$ to the definition of $\theta_j$; when $x_{i_jj} > \frac12$, we can even afford to add the term $0.2\cdot p_{i_jj}$.
	
	In the final schedule, every job $j$ is scheduled on the machine $i_j$. Jobs assigned to a same machine $i$ are scheduled in increasing order of their $\theta_j$ values (with probability 1, no two jobs have the same $\theta_j$ values). Notice that once the assignment of jobs to machines is fixed, it is the best to sort jobs using the Smith ratios on each machine. However, as in \cite{Li20} and \cite{IS20}, it is more convenient to analyze this (possibly) sub-optimal schedule.   This finishes the description of the algorithm.

\section{Analysis of Rounding Algorithm for Scheduling Problem}
    \label{sec:wC-analysis}
	In this section, we analyze the approximation ratio achieved by the rounding algorithm in Section~\ref{sec:wC-algorithm}. We fix a job $j^*$ and let $C^*$ be the completion time of $j^*$ in the schedule produced by the rounding algorithm. Most of the time we also fix a machine $i$ and our goal is to prove
	\begin{align}
		\E[C^* | i_{j^*}  = i] \leq \frac{1.45}{x_{ij^*}}\sum_{s^*} x_{ij^*s^*}(s^* + p_{ij^*}). \label{inequ:bound-C*-i}
	\end{align}
	This shall prove that $\E[C^*] \leq 1.45\sum_{i,s^*} x_{ij^*s^*}(s^* + p_{ij^*})$, which in turn proves that the expected weighted completion time of the schedule is at most $1.45$ times the value of the rectangle LP. 
	
	As in \cite{IS20}, we can bound the ratio in \eqref{inequ:bound-C*-i} by $1.5 - \frac{1/2}{2} + 0.2 =  1.45$, for the  case when $i$ dominates $j^*$. The $-\frac{1/2}{2}$ term comes from that $x_{ij^*} > 1/2$ and the $0.2$ term comes the additive term $0.2\cdot p_{i_{j^*}j^*}$ in the definition of $\theta_{j^*}$.  We defer the proof of following lemma to the appendix, as it follows from the argument in \cite{IS20}. 
	\begin{restatable}{lemma}{casedominates}
		\label{lemma:case-dominiates}
		If the machine $i$ dominates $j^*$, then \eqref{inequ:bound-C*-i} holds.
	\end{restatable}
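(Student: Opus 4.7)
The plan is to follow the Schulz--Skutella--style LP-integration argument, specialized to the dominating case and modified to account for the additional $0.2\,p_{i_{j^*}j^*}$ term that now appears in $\theta_{j^*}$. I aim to establish the per-anchor bound
\[ \E\bigl[C^* \,\big|\, R_{ij^*s^*}\bigr] \;\le\; 1.45\,(s^* + p_{ij^*}) \qquad \text{for each $s^*$ with $x_{ij^*s^*} > 0$,} \]
which, after averaging with weights $\Pr[R_{ij^*s^*} \mid i_{j^*} = i] = x_{ij^*s^*}/x_{ij^*}$ (Property~\ref{property:applying-SN-marginal}), implies~\eqref{inequ:bound-C*-i} immediately.

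Setting $T := \theta_{j^*} = (1+\alpha)s^* + \tau_{ij^*} + 0.2\,p_{ij^*}$, I expand
\[ \E[C^* \mid R_{ij^*s^*}] \;=\; p_{ij^*} + \sum_{j' \ne j^*,\, s'} p_{ij'}\cdot \Pr[R_{ij's'},\,\theta_{j'} < T \mid R_{ij^*s^*}] \]
and bound each joint probability by $x_{ij's'} \cdot \min\bigl(1,\,(T-(1+\alpha)s')^+/p_{ij'}\bigr)$: the first factor is Property~\ref{property:applying-SN-independence}, and the second uses that $\tau_{ij'}$ is uniform on $[0, p_{ij'})$ and stays so after conditioning on the two anchor events (a short verification is needed, since the base-window groups in Step~2 of Section~\ref{sec:wC-algorithm} depend on $\tau$, but the SNC rounding only conditions on which group each rectangle lies in, not on $\tau_{ij'}$ within its range). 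Rewriting $\min(p_{ij'},(T-(1+\alpha)s')^+) = \int_0^T \mathbf{1}[(1+\alpha)s' \le u < (1+\alpha)s' + p_{ij'}]\,du$, substituting $v = u/(1+\alpha)$, and applying~\eqref{LPC:rectangle-capacity} bounds the sum over \emph{all} jobs (with $j'=j^*$ included) by $T$. Averaging over $\tau_{ij^*} \sim \mathrm{Unif}[0,p_{ij^*})$ gives $\E[T] = 1.3\,s^* + 0.7\,p_{ij^*}$, and subtracting $j^*$'s self-contribution---which amounts to $x_{ij^*}\cdot p_{ij^*}/2$ by the bookkeeping in \cite{IS20}---yields
\[ \E[C^* \mid R_{ij^*s^*}] \;\le\; 1.3\,s^* + (1.7 - x_{ij^*}/2)\,p_{ij^*}. \]

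Comparing directly with $1.45(s^* + p_{ij^*})$, the target inequality reduces to $0 \le 0.15\,s^* + (x_{ij^*}/2 - 0.25)\,p_{ij^*}$, which holds whenever $x_{ij^*} \ge 1/2$---exactly the dominating-case hypothesis. The main obstacle is the clean extraction of the $x_{ij^*}/2 \cdot p_{ij^*}$ self-saving in the presence of the new $0.2\,p_{ij^*}$ shift in $T$: the self-contribution $\sum_{s''} x_{ij^*s''}\min(p_{ij^*},(T-(1+\alpha)s'')^+)$ depends on $\tau_{ij^*}$ through both the capacity integral and the threshold $T$ itself, so obtaining the clean lower bound $x_{ij^*}p_{ij^*}/2$ requires the same convexity/averaging argument as in \cite{IS20}, applied with updated constants. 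A secondary technical point is the $\tau_{ij'}$-uniformity verification mentioned above; it follows from Theorem~\ref{thm:SN}'s marginal-respecting property together with the fact that $\tau$-values enter the SNC scheme only through the rectangle-to-group map, not through the rounding performed inside the groups.
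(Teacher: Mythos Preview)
Your volume-integration argument is essentially the paper's, but there is a genuine gap in the self-saving step. You aim for a per-anchor bound $\E[C^*\mid R_{ij^*s^*}]\le 1.45(s^*+p_{ij^*})$ and claim that, after averaging over $\tau_{ij^*}$ alone, $j^*$'s self-contribution equals $x_{ij^*}p_{ij^*}/2$. This fails for a fixed $s^*$: take $x_{ij^*0}=\epsilon$ and $x_{ij^*S}=\tfrac12$ with $S\gg p_{ij^*}$, so that $x_{ij^*}=\tfrac12+\epsilon$ and $i$ dominates $j^*$. At $s^*=0$ one has $T=\tau^*+0.2\,p_{ij^*}\le 1.2\,p_{ij^*}$, and the self-term
\[
\sum_{s''}x_{ij^*s''}\min\!\bigl(p_{ij^*},\,(T-(1+\alpha)s'')^+\bigr)
\]
receives nothing from $s''=S$ and only $\epsilon\cdot\min(p_{ij^*},T)\le\epsilon\,p_{ij^*}$ from $s''=0$. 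Its $\tau^*$-average is thus $O(\epsilon)\,p_{ij^*}$, not $(\tfrac14+\tfrac{\epsilon}{2})\,p_{ij^*}$; your displayed inequality $\E[C^*\mid R_{ij^*s^*}]\le 1.3s^*+(1.7-x_{ij^*}/2)p_{ij^*}$ is therefore unavailable at $s^*=0$, and the method only delivers roughly $1.7\,p_{ij^*}>1.45\,p_{ij^*}$ there.

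The value $\tfrac{x_{ij^*}p_{ij^*}}{2}$ is a \emph{double} average. It comes from the symmetry identity
\[
\frac{1}{x_{ij^*}}\sum_{s^*}x_{ij^*s^*}\cdot\frac{1}{p_{ij^*}}\int_0^{p_{ij^*}}\vol_{i,j^*}(s^*+\tau^*)\,\sfd\tau^*
\;=\;\frac{x_{ij^*}p_{ij^*}}{2},
\]
which is just $\sum_{s^*,s'^*}\iint x_{ij^*s^*}x_{ij^*s'^*}\,\mathbf{1}(s'^*+\tau'^*<s^*+\tau^*)\,\sfd\tau^*\sfd\tau'^*=\tfrac12(x_{ij^*}p_{ij^*})^2$ after rescaling. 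The outer average over $s^*$ with weights $x_{ij^*s^*}/x_{ij^*}$ is essential and cannot be skipped. Accordingly, the paper bounds $\E[C^*\mid i_{j^*}=i]$ directly rather than a per-anchor quantity, and subtracts the \emph{unshifted} volume $\vol_{i,j^*}(s^*+\tau^*)$ (not the one at threshold $T$), precisely because this is the expression admitting the clean symmetry identity. Your argument is easily repaired by moving the $s^*$-average inside before invoking the self-saving; as written, the per-anchor route does not close.
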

	
	\subsection{When $i$ Does Not Dominate $j^*$}
	From now on we consider the case where $i$ does not dominate $j^*$. We fix $j^*$, $i$ and a starting time $s^*$. We shall upper bound $\E[C^* | R_{ij^*s^*}]$ by $1.45(s^* + p_{ij^*})$. De-conditioning on $s^*$ proves \eqref{inequ:bound-C*-i}. Throughout this section we use $\tau^*$  and $\theta^*$ as shorthands for $\tau_{ij^*}$ and $\theta_{j^*}$ respectively. Notice that $\theta^* = (1 + \alpha)s^* + \tau^*$ conditioned on $R_{ij^*s^*}$.  Most of the time we also condition on the value $\tau^*$. So for convenience, we use $\widehat \Pr[\cdot]$ and $\widehat \E[\cdot]$ to denote $\Pr[\cdot|R_{ij^*s^*}, \tau^*]$ and $\E[\cdot|R_{ij^*s^*}, \tau^*]$ respectively. 
	
	As in \cite{Li20} and \cite{IS20}, we define configurations as follows:
	\begin{definition}
		A configuration $f$ is a set of rectangles on $i$ for jobs in $J \setminus \{j^*\}$ with disjoint spans. That is, for every two distinct rectangles $R_{ijs}, R_{ij's'} \in f$, we have $(s, s+p_{ij}]$ and $(s', s' + p_{ij'}]$ are disjoint. 
	\end{definition}
	
	It is well-known that we can find a family $\calF$ of configurations\footnote{Here, a configuration implies a complete schedule on a machine, i.e. an ordered subset of jobs. Thus, each job in a configuration $f$ can be thought of as a rectangle of height $z_f$.}, each $f \in \calF$ with a value $z_f \in [0, 1]$ such that $\displaystyle \sum_{f \in \calF} z_f = 1$ and $\displaystyle \sum_{f \in \calF: R_{ijs} \in f} z_f = x_{ijs}$ for every rectangle $R_{ijs}$ with $j \neq j^*$.  For every $f \in \calF$, we define 
	\begin{align*}
		\Phi(f) := z_f \sum_{R_{ijs} \in f}\frac{1}{x_{ijs}} \cdot \widehat\Pr[R_{ijs}, \theta_j < \theta^*]\cdot p_{ij}.
	\end{align*}
	
	The usefulness of the definition $\Phi(f)$ can be seen from the following claim:
	\begin{restatable}{claim}{ECbyPhif}
	    \label{claim:bound-E-C*-by-Phi-f}
	    $\displaystyle\widehat \E[C^*] - p_{ij^*} = \sum_{f \in \calF} \Phi(f)$.
	\end{restatable}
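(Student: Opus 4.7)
The plan is to verify the identity by expanding both sides as sums indexed by rectangles $R_{ijs}$ for jobs $j \neq j^*$, and then appealing to the configuration decomposition $\sum_{f \in \calF : R_{ijs} \in f} z_f = x_{ijs}$ to match the coefficients.

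First, I would unpack the left-hand side. Under the conditioning event $R_{ij^*s^*}$, the job $j^*$ is assigned to machine $i$, and the jobs scheduled before $j^*$ on this machine are exactly those other jobs $j$ with $i_j = i$ and $\theta_j < \theta^*$. Since no two jobs share a $\theta$-value almost surely, the completion time of $j^*$ satisfies $C^* = p_{ij^*} + \sum_{j \neq j^*} p_{ij} \cdot \mathbb{1}[i_j = i,\ \theta_j < \theta^*]$. Taking $\widehat\E[\cdot]$ on both sides and using linearity of expectation together with the fact that the event $\{i_j = i,\ \theta_j < \theta^*\}$ decomposes as the disjoint union over $s$ of the events $\{R_{ijs}\} \cap \{\theta_j < \theta^*\}$, I obtain
\begin{align*}
\widehat\E[C^*] - p_{ij^*} = \sum_{j \neq j^*} \sum_{s} p_{ij} \cdot \widehat\Pr[R_{ijs},\ \theta_j < \theta^*].
\end{align*}

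Next, I would expand the right-hand side. Swapping the order of summation in $\sum_{f \in \calF} \Phi(f)$ gives
\begin{align*}
\sum_{f \in \calF} \Phi(f) = \sum_{R_{ijs} : j \neq j^*} \frac{p_{ij}}{x_{ijs}} \cdot \widehat\Pr[R_{ijs},\ \theta_j < \theta^*] \cdot \Bigl(\sum_{f \in \calF : R_{ijs} \in f} z_f\Bigr).
\end{align*}
By the defining property of the configuration family $\calF$, the inner sum equals $x_{ijs}$, which cancels the $\tfrac{1}{x_{ijs}}$ factor. This yields exactly the expression obtained for $\widehat\E[C^*] - p_{ij^*}$, completing the identity.

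There is no real obstacle here: the claim is essentially a bookkeeping rearrangement once one observes that the delay contribution of a rectangle $R_{ijs}$ with $j \neq j^*$ to $\widehat\E[C^*]$ is $p_{ij} \cdot \widehat\Pr[R_{ijs},\ \theta_j < \theta^*]$, and that the configuration decomposition reproduces the marginal $x_{ijs}$ on each such rectangle. The only subtlety worth flagging is the implicit use of the fact that a rectangle $R_{ijs}$ appears in each configuration at most once, so that the equation $\sum_{f : R_{ijs} \in f} z_f = x_{ijs}$ is well-defined; this is immediate from the definition of a configuration as a set.
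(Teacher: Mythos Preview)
Your proposal is correct and follows essentially the same approach as the paper: both arguments expand $\widehat\E[C^*]-p_{ij^*}$ as $\sum_{j\neq j^*,s} p_{ij}\,\widehat\Pr[R_{ijs},\theta_j<\theta^*]$, then use the configuration identity $\sum_{f:R_{ijs}\in f} z_f = x_{ijs}$ together with an interchange of summation to reach $\sum_{f\in\calF}\Phi(f)$. The only cosmetic difference is that you expand both sides separately and match them, whereas the paper chains the equalities from left to right.
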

	
	As $\sum_{f \in \calF}z_f = 1$, our goal then is to upper bound $\frac{\Phi(f)}{z_f}$ for every $f \in \calF$, considerably below $\theta^*$. This will bound $\hat \E[C^*]$ away from $\theta^* + p_{ij^*} = (1 + \alpha)s^* + \tau^* + p_{ij^*}$. As $\E[\tau^*| i_{j^*} = i] = \frac{p_{ij^*}}{2}$, we shall obtain a better-than-1.5-approximation.  We bound $\frac{\Phi(f)}{z_f}$ in the following crucial lemma. For any $\rho \in [1, 1+\beta)$ and a real $t > 0$, let $h_\rho(t)$ be the smallest grid point that is at least $t$, under the value of $\rho$: $h_\rho(t) = \rho(1+\beta)^k$ for the smallest integer $k$ with $\rho(1+\beta)^k \geq t$.
	\begin{restatable}{lemma}{Phiftozf}
		\label{lemma:bound-Phi-f}
		For every $f \in \calF$, we have 
		\begin{align}
			\frac{\Phi(f)}{z_f} &\leq \theta^* - \min\left\{
				\eta\cdot \E_\rho \int_0^{\theta^*} {\bf1}\left(s^*< \frac{h_\rho(s^* + \tau^*)}{1+\beta} <  \tau < h_\rho(s^* + \tau^*)\right)\cdot \sfd\tau, \quad 0.102\theta^*\right\}. \label{inequ:bound-Phi-f}
		\end{align}
	\end{restatable}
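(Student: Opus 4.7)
The plan is to bound the sum $\Phi(f)/z_f = \sum_{R_{ijs}\in f} \frac{p_{ij}}{x_{ijs}}\widehat\Pr[R_{ijs}, \theta_j < \theta^*]$ via two ingredients: the marginal bound from Property~\ref{property:applying-SN-independence} together with disjointness of spans in $f$ (giving the baseline $\theta^*$), and the strong-negative-correlation improvement from Property~\ref{property:applying-SN-SN}. I would begin by conditioning on the uniform random shift $\tau_{ij}\in[0,p_{ij})$ for each rectangle in $f$ and writing
\begin{align*}
\widehat\Pr[R_{ijs}, \theta_j < \theta^*] = \frac{1}{p_{ij}} \int_0^{p_{ij}} \widehat\Pr[R_{ijs}\mid\tau_{ij}] \cdot \mathbf{1}\!\left(\theta_j(\tau_{ij}) < \theta^*\right) \sfd\tau_{ij},
\end{align*}
where $\theta_j(\tau_{ij}) = (1+\alpha)s + \tau_{ij}$ if $i$ does not dominate $j$ and is shifted by $0.2\,p_{ij}$ otherwise. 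The crude estimate $\widehat\Pr[R_{ijs}\mid\tau_{ij}] \leq x_{ijs}$ from Property~\ref{property:applying-SN-independence} then yields per-rectangle contribution $\min\{p_{ij}, (\theta^* - (1+\alpha)s)^+\}$, which charges a prefix of the span $(s,s+p_{ij}]$ ending no later than $\theta^* - \alpha s \leq \theta^*$; summing over the (disjoint) spans in $f$ recovers the baseline $\Phi(f)/z_f \leq \theta^*$.

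To extract the SNC improvement I would combine Properties~\ref{property:applying-SN-independence} and~\ref{property:applying-SN-SN} by splitting on the event $R_{ijs}\sim R_{ij^*s^*}$:
\begin{align*}
\widehat\Pr[R_{ijs}\mid\tau_{ij}] \leq \bigl(1 - \eta\cdot\mathbf{1}(R_{ijs}\sim R_{ij^*s^*})\bigr)\,x_{ijs},
\end{align*}
valid whenever $i$ dominates neither $j$ nor $j^*$. For fixed $\rho$, this $\sim$-event is exactly $s, s^* \leq h_\rho(s^*+\tau^*)/(1+\beta)$ together with $s + \tau_{ij}\in(h_\rho(s^*+\tau^*)/(1+\beta), h_\rho(s^*+\tau^*)]$. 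Substituting $\tau := s + \tau_{ij}$ into the inner integral and using disjointness of spans in $f$, the $\eta$-saving on rectangle $R_{ijs}$ equals $\eta$ times the length of the intersection of $(s, s+p_{ij}]$ with the base window of $R_{ij^*s^*}$. Summing over rectangles in $f$ and taking expectation over $\rho$---folding the boundary condition $s^* < h_\rho(s^*+\tau^*)/(1+\beta)$ into the indicator---reproduces the first term inside the $\min$.

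The second term $0.102\,\theta^*$ I would derive from a separate, configuration-independent accounting needed in regimes where the base-window geometry (and hence the SNC integral) is fragile---for instance, when $R_{ij^*s^*}$ itself barely belongs to any base window, or when the rectangles of $f$ do not meaningfully overlap the window. In such regimes one exploits the $\alpha s$ slack hidden in the baseline charge together with the partial SNC improvement integrated over $\rho$; the constant $0.102$ falls out by numerical optimization against $\alpha = 0.3$, $\beta = 12.1$, and $\eta = 0.1561$. The $\min$-form is exactly what is needed by the outer integration over $\tau^*$ in Section~\ref{sec:wC-analysis}, since neither $A$ nor $0.102\theta^*$ alone is strong enough across the full parameter range.

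The hard part will be the SNC charging in the third step and its interface with the floor argument: dominated jobs in $f$ fall outside Property~\ref{property:applying-SN-SN} yet may still lie inside the base window of $R_{ij^*s^*}$, and rectangles whose spans straddle the window boundary contribute fractional $\eta$-savings. Organizing the accounting so that no double-counting occurs, the boundary behaviour at $\theta^*$ is respected, and dominated versus non-dominated rectangles are cleanly separated---while still recovering the compact integral form in the lemma---is where the main technical effort concentrates, paralleling (but substantially simplifying, thanks to the flexibility of our new SNC scheme) the corresponding argument in~\cite{IS20}.
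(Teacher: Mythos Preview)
Your plan has a structural gap. You propose to bound $\Phi(f)/z_f$ by $\theta^*$ minus an SNC saving \emph{summed over all rectangles in $f$}, and then claim that this sum ``reproduces the first term inside the $\min$.'' It does not: the quantity you obtain depends on the configuration $f$, whereas the lemma's bound is configuration-independent. Concretely, after your substitution $\tau := s + \tau_{ij}$, the $\eta$-saving on rectangle $R_{ijs}$ is $\eta$ times the measure of
\[
(s, s+p_{ij}] \;\cap\; \bigl(h_\rho(s^*+\tau^*)/(1+\beta),\, h_\rho(s^*+\tau^*)\bigr] \;\cap\; [0,\, \theta^* - \alpha s],
\]
and only when the additional indicator $\max\{s, s^*\} < h_\rho(s^*+\tau^*)/(1+\beta)$ holds. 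You cannot simply drop the condition $s < h_\rho/(1+\beta)$ (you only fold in $s^*<h_\rho/(1+\beta)$), nor does the union of the truncated spans equal $[0,\theta^*]$. So summing over $f$ and taking $\E_\rho$ does \emph{not} give the stated integral.

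The paper's argument is organized differently, and that organization is where the two terms in the $\min$ come from. One first notes that only the \emph{unique} rectangle $R_{ijs}\in f$ straddling $\theta^*/(1+\alpha)$ matters (rectangles to its left contribute at most $s$ total; those to its right contribute $0$), giving
\[
\frac{\Phi(f)}{z_f} \;\le\; s + q \;-\; \eta\cdot\E_\rho\int_0^q \mathbf{1}\!\Bigl(\max\{s,s^*\} < \tfrac{h_\rho(s^*+\tau^*)}{1+\beta} < s+\tau < h_\rho(s^*+\tau^*)\Bigr)\,\sfd\tau,
\]
with $q=\min\{p_{ij},\,\theta^*-(1+\alpha)s\}$. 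One then \emph{maximizes} the right side over $(s,q)$ subject to $(1+\alpha)s+q=\theta^*$. In the regime where $s<h_\rho/(1+\beta)$ is implied by $s^*<h_\rho/(1+\beta)$ (i.e., $s\le s^*$ or $(1+\beta)s\le s^*+\tau^*$), a monotonicity argument shows the maximum is at $s=0$, which gives exactly the first term. In the complementary regime $s\ge s^*$ and $(1+\beta)s>s^*+\tau^*$, the binding indicator is $s<h_\rho/(1+\beta)$, and there one needs both the slack $\alpha s=\theta^*-(s+q)$ \emph{and} the residual SNC integral; minimizing their sum over $x=s/\theta^*\in[0,1]$ yields the numerical constant $0.102$. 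So the $0.102\,\theta^*$ is not about $R_{ij^*s^*}$'s position in its window (as you suggest) but about the straddling rectangle having large start $s$, and the $\alpha s$ slack---which your crude baseline $\sum q_k\le\theta^*$ discards---is essential there. Finally, if the straddling rectangle is for a dominated job, SNC does not apply at all; the $0.2\,p_{ij}$ shift in the definition of $\theta_j$ then supplies the needed $\ge 0.8\eta\theta^*$ saving directly, which is separately shown to dominate the first term.
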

	We defer the proof to the appendix, and elaborate more on the inequality and proof here. For simplicity, assume $i$ does not dominate any job involved in $f$; this is the bottleneck case. We shift a rectangle $R_{ijs}$ to the right by $\alpha s$, so that the value of $\theta_j$ will be the starting time of the anchor rectangle for $j$ after shifting, plus $\tau_{ij}$. After shifting, the horizontal span of rectangles in the configuration $f \in \calF$ are disjoint, as we shift the rectangles on the right more than the rectangles on the left.  Then, without using the negative correlation, we can bound $\frac{\Phi(f)}{z_f}$ by $\theta^*$.  The negative term in \eqref{inequ:bound-Phi-f} comes from two cases, corresponding to the two terms in the min operator. The first term, which is the main term, comes from the negative correlation between $R_{ij^*s^*}$ and  the unique rectangle $R_{ijs}$ in $f$ that covers $\theta^*$ after shifting.  We shall show that the worst case happens when $s = 0$. In the integration, $\tau$ is the $\tau_{ij}$ value and the condition inside $\bf1(\cdot)$ is for the event $R_{ijs} \sim R_{ij^*s^*}$. The second term $0.102\theta^*$ in the min operator comes from the case where the rectangle $R_{ijs}$ has $s \geq s^*$.

	\medskip
	
	To avoid the issue of dividing by $0$, we assume $s^* > 0$. One can show that the case where $s^* = 0$ will be covered by the case $s^* > 0$ and $p_{ij^*}$ tends to $\infty$. For every $r \in [0, \infty)$,  we define
	\begin{align}
		Q^\circ(r) &= \eta\cdot\E_\rho \int_{0}^{1+\alpha + r} {\bf1}\Big(1 < \frac{h_\rho(1 + r)}{1+\beta} < y < h_\rho(1 + r)\Big)\cdot \sfd y, \\
		Q(r) &= \min\left\{Q^\circ(r), \quad 0.102(1+\alpha + r)\right\}.  \label{equ:define-Q}
	\end{align}

	\begin{corollary}
		Let $r = \tau^*/s^*$. Then we have $\widehat\E[C^*] - p_{ij^*} \leq \big(1+\alpha + r - Q(r)\big)s^*$.
	\end{corollary}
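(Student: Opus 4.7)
The plan is to derive the corollary in essentially one line once we combine Claim~\ref{claim:bound-E-C*-by-Phi-f} with Lemma~\ref{lemma:bound-Phi-f}. Claim~\ref{claim:bound-E-C*-by-Phi-f} tells us that
\[
\widehat\E[C^*] - p_{ij^*} = \sum_{f\in\calF}\Phi(f),
\]
while Lemma~\ref{lemma:bound-Phi-f} gives a per-configuration bound on $\Phi(f)/z_f$. Since $\sum_{f\in\calF}z_f = 1$, it suffices to show that the right-hand side of \eqref{inequ:bound-Phi-f} equals $\big(1+\alpha+r-Q(r)\big)s^*$ after conditioning on $\tau^* = rs^*$. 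The first term simplifies as $\theta^* = (1+\alpha)s^* + \tau^* = (1+\alpha+r)s^*$, so the work is entirely in rewriting the expectation appearing inside the $\min$.

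The main step is a scaling argument for $h_\rho$. Because $\ln\rho$ is uniform on $[0,\ln(1+\beta))$, the random grid $\{\rho(1+\beta)^k : k\in\Z\}$ is invariant in distribution under multiplication by any positive scalar: writing $\rho' := s^*\rho\cdot(1+\beta)^{-\lfloor \log_{1+\beta}(s^*\rho)\rfloor}\in[1,1+\beta)$, we have $\ln\rho'$ uniform on $[0,\ln(1+\beta))$ and
\[
h_\rho(s^* t) \;=\; s^*\, h_{\rho'}(t) \qquad \text{for every } t>0.
\]
Applying this with $t = 1+r$ turns $h_\rho(s^*+\tau^*) = h_\rho(s^*(1+r))$ into $s^*\, h_{\rho'}(1+r)$, and substituting $\tau = s^* y$ in the integral (so $\sfd\tau = s^*\sfd y$ and the upper limit becomes $1+\alpha+r$) gives
\begin{align*}
\eta\cdot\E_\rho\!\int_0^{\theta^*}\!\!{\bf1}\!\left(s^*<\tfrac{h_\rho(s^*+\tau^*)}{1+\beta}<\tau<h_\rho(s^*+\tau^*)\right)\sfd\tau
&= s^*\cdot\eta\cdot\E_{\rho'}\!\int_0^{1+\alpha+r}\!\!{\bf1}\!\left(1<\tfrac{h_{\rho'}(1+r)}{1+\beta}<y<h_{\rho'}(1+r)\right)\sfd y \\
&= s^*\cdot Q^\circ(r),
\end{align*}
where the indicator transformed cleanly because all three comparisons scale by the same factor $s^*$. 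The other term inside the $\min$ in \eqref{inequ:bound-Phi-f} is simply $0.102\,\theta^* = 0.102(1+\alpha+r)s^*$, so the $\min$ equals $s^*\cdot Q(r)$ by the definition \eqref{equ:define-Q}.

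Combining the above with Lemma~\ref{lemma:bound-Phi-f},
\[
\frac{\Phi(f)}{z_f} \;\leq\; (1+\alpha+r)s^* - s^*\,Q(r) \;=\; \big(1+\alpha+r-Q(r)\big)s^*,
\]
and summing $\Phi(f) \leq z_f\big(1+\alpha+r-Q(r)\big)s^*$ over $f\in\calF$ using $\sum_f z_f = 1$ and Claim~\ref{claim:bound-E-C*-by-Phi-f} yields $\widehat\E[C^*] - p_{ij^*} \leq \big(1+\alpha+r-Q(r)\big)s^*$, as required. The only nontrivial point is the scale invariance of $h_\rho$ under the log-uniform distribution of $\rho$, which is clean once one recognizes that the distribution of grid points is a homogeneous Poisson process in the $\ln t$ coordinate shifted by an arbitrary constant.
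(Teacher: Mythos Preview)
Your proof is correct and follows essentially the same approach as the paper: apply Lemma~\ref{lemma:bound-Phi-f} to each configuration, rescale by $s^*$ to match the definition of $Q(r)$, then sum using $\sum_f z_f=1$ and Claim~\ref{claim:bound-E-C*-by-Phi-f}. Two small slips to fix: your explicit formula for $\rho'$ has the scaling inverted (you need the representative of $\rho/s^*$ in $[1,1+\beta)$, not of $s^*\rho$, so that $\{\rho(1+\beta)^k\}=s^*\{\rho'(1+\beta)^m\}$ and hence $h_\rho(s^*t)=s^*h_{\rho'}(t)$), and the grid in $\ln t$ coordinates is a uniformly shifted lattice rather than a Poisson process---but the translation invariance you use holds regardless.
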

	\begin{proof}
		First, for every $f \in \calF$, we have $\frac{\Phi(f)}{z_f} \leq \theta^* - Q(r)\cdot s^*$.
				This follows from Lemma~\ref{lemma:bound-Phi-f} by defining $r = \frac{\tau^*}{s^*}$ and using the fact that $\ln \rho$ is uniformly distributed over $[0, \ln(1+\beta))$. 
		By Claim~\ref{claim:bound-E-C*-by-Phi-f}, we have
		\begin{flalign*}
			&& \widehat\E[C^*] - p_{ij^*} &= \sum_{f \in \calF} \Phi(f) \leq \sum_{f \in \calF} (\theta^* - Q(r)s^*)z_f = (1+\alpha)s^* + \tau^* - Q(r)s^* &&\\
			& &&= \big(1 + \alpha + r - Q(r)\big) s^*. && 
		\end{flalign*}\vspace*{-30pt}
		
		\hfill
	\end{proof}

	Now we decondition on $\tau^*$ and bound the final ratio for the case that $i$ does not dominate $j^*$. Let $p^* = p_{ij^*}$, $o = \frac{p^*}{s^*}$, and $r = \frac{\tau^*}{s^*}$.
		\begin{lemma} \label{lemma:bound-C^*-decondition-tau^*}
	$\displaystyle \frac{\E[C^*|R_{ij^*s^*}]}{s^* + p^*} \leq \frac{1}{1+o}\cdot \left(1 + \alpha + \frac{3o}{2} - \frac{1}{o}\int_{0}^o Q(r)\sfd r\right)$.
\end{lemma}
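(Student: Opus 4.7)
The plan is to combine the preceding corollary with a deconditioning over $\tau^*$. The first task is to verify that, under $\Pr[\cdot \mid R_{ij^*s^*}]$, the shift $\tau^* = \tau_{ij^*}$ remains uniform on $[0, p^*)$, so that $r = \tau^*/s^*$ is uniform on $[0, o)$. I would argue this by showing that for every realization of the random parameters chosen in Step~1, the probability that $R_{ij^*s^*}$ is selected as the anchor equals $x_{ij^*s^*}$, independent of those parameters: if $u$ denotes the (random) group containing $R_{ij^*s^*}$, then the marginal property of the SNC scheme (Theorem~\ref{thm:SN}, first property) gives $\Pr[\sigma(j^*) = u \mid \rho, \vec\tau] = y_{uj^*}$, while the anchor-selection rule contributes a further conditional factor $x_{ij^*s^*}/y_{uj^*}$; these telescope to $x_{ij^*s^*}$. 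By Bayes, the posterior of $\vec\tau$ given $R_{ij^*s^*}$ coincides with its prior.

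Next, I would invoke the preceding corollary, which states $\widehat\E[C^*] \le p^* + (1+\alpha + r - Q(r))\,s^*$, and take expectation over the uniform $r \in [0, o)$:
\begin{align*}
\E[C^* \mid R_{ij^*s^*}]
&\le p^* + s^*\cdot \E_{r \sim U[0,o)}\bigl[1 + \alpha + r - Q(r)\bigr] \\
&= p^* + s^*\left(1 + \alpha + \frac{o}{2} - \frac{1}{o}\int_0^o Q(r)\,\sfd r\right).
\end{align*}
Dividing through by $s^* + p^* = (1+o)s^*$ and using $p^*/s^* = o$, the $p^*$ term contributes $o/(1+o)$, which combines with $(1 + \alpha + o/2)/(1+o)$ to give $(1 + \alpha + 3o/2)/(1+o)$; this is exactly the claimed expression.

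The main (mild) obstacle is the deconditioning step in the first paragraph; the rest is a one-line integration. All the substantive content has already been established in Lemma~\ref{lemma:bound-Phi-f} and its corollary. This lemma is effectively a bookkeeping step that repackages the $\tau^*$-dependent bound as a single-variable inequality in $o$, which (together with the dominating case of Lemma~\ref{lemma:case-dominiates}) is then ready to be numerically optimized over $o \in (0, \infty)$ to yield the $1.45$ approximation ratio.
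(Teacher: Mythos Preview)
Your proposal is correct and follows essentially the same approach as the paper: decondition on $\tau^*$ using the corollary, integrate, and divide by $(1+o)s^*$. You are more explicit than the paper about justifying that $\tau^*$ remains uniform on $[0,p^*)$ given $R_{ij^*s^*}$; the paper simply writes $\E[C^*\mid R_{ij^*s^*}] = \int_0^{p^*}\E[C^*\mid R_{ij^*s^*},\tau^*]\,\frac{\sfd\tau^*}{p^*}$ without comment, relying implicitly on the fact (already established in Property~\ref{property:applying-SN-marginal}) that $\Pr[R_{ij^*s^*}\mid \rho,\vec\tau]=x_{ij^*s^*}$ for every realization of $\rho,\vec\tau$.
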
\vspace*{-10pt}
\begin{flalign*}
	\textit{Proof.} &&\E[C^*|R_{ij^*s^*}] - p^* &\leq \int_{0}^{p^*} \E\big[C^*|R_{ij^*s^*}, \tau^*\big]\cdot \frac{\sfd  \tau^*}{p^*} - p^* \leq \frac{{s^*}^2}{p^*} \cdot \int_{0}^{p^*/s^*} (1 + \alpha + r - Q(r)) \sfd  r&&\\
	&& &=s^* \cdot\frac1o \int_0^o (1+\alpha + r - Q(r)) \sfd r =s^*\left(1+\alpha + \frac o2 - \frac1o\int_0^o Q(r)\sfd r\right). &&
\end{flalign*}\vspace*{-5pt}
\begin{flalign*}
	\text{Then} && \frac{\E[C^*|R_{ij^*s^*}]}{s^* + p^*} &\leq \frac{1}{(1+o)s^*}\cdot s^*\left(1+\alpha + \frac {3o}2 - \frac1o\int_0^o Q(r)\sfd r\right) &&\\
	&& &=\frac{1}{1+o}\cdot \left(1+\alpha + \frac {3o}2 - \frac1o\int_0^o Q(r)\sfd r\right). &&
\end{flalign*}

It remains to show that $\frac{1}{1+o}\cdot \left(1+\alpha + \frac {3o}2 - \frac1o\int_0^o Q(r)\sfd r\right) \leq 1.45$ for every $o \geq 0$. This will prove $\E[C^*|R_{ij^*s^*}] \leq 1.45(s^* + p^*)$, which proves \eqref{inequ:bound-C*-i} for the case where $i$ does not dominate $j^*$.  The proof of the inequality is by tedious elementary mathematical techniques and so we defer most of the proofs to the appendix. 
	
First, we can obtain the closed form for $Q^\circ(r)$: 
	\begin{restatable}{lemma}{Qcircr}
		\label{lemma:Q-circ-r}
		\begin{align*}
			Q^\circ(r) = \begin{cases}
				\frac{\eta}{\ln(1+\beta)}\big((1 + \alpha + r)\cdot\ln(1+r) - r\big) & r \in [0, \beta - \alpha]\\
				\frac{\eta}{\ln(1+\beta)}\left((1 + \alpha + r) \ln\frac{(1 + \beta)(1 + r)}{1 + \alpha + r} - (\beta - \alpha)\right) & r \in (\beta - \alpha, \beta]\\
				\frac{\eta}{\ln(1+\beta)}\left(\alpha  - \frac{\beta(1 + r)}{1 + \beta} + (1 + \alpha + r)\ln\frac{(1+\beta)(1 + r)}{1 + \alpha + r}\right) & r > \beta
			\end{cases}
		\end{align*}
	\end{restatable}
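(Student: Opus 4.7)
The strategy is to reparametrize the expectation over $\rho$ using the new variable $H := h_\rho(1+r)$, perform the resulting inner integral over $y$ as a piecewise-linear function of $H$, and then do a case split on $r$ based on where $1+\beta$ and $1+\alpha+r$ sit relative to the support of $H$. Each case reduces to an elementary integral of $\frac{c}{H}$ plus a linear term.

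\textbf{Step 1: Reparametrization.} Fix $r \geq 0$. Since the grid points are $\rho(1+\beta)^k$ and $\rho$ ranges over $[1,1+\beta)$ with $\ln\rho$ uniform, the map $\rho \mapsto H = h_\rho(1+r)$ is a measure-preserving bijection from $[1,1+\beta)$ onto $[1+r,(1+r)(1+\beta))$ in the log-uniform sense. Concretely, $\ln H$ is uniformly distributed on an interval of length $\ln(1+\beta)$, so $H$ has density $\frac{1}{H \ln(1+\beta)}$ on $[1+r,(1+r)(1+\beta))$. Thus
\[
Q^\circ(r) \;=\; \frac{\eta}{\ln(1+\beta)} \int_{1+r}^{(1+r)(1+\beta)} \frac{1}{H} \cdot L(H)\, \sfd H,
\]
where $L(H)$ denotes the inner integral, namely the length of $\bigl(\max\{1,\tfrac{H}{1+\beta}\},\,\min\{H,\,1+\alpha+r\}\bigr)$ when this interval is nonempty and $H > 1+\beta$, and $0$ otherwise.

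\textbf{Step 2: Simplifying $L(H)$.} The requirement $\frac{H}{1+\beta} > 1$ kicks in iff $H > 1+\beta$, in which case $y > 1$ is automatic. The upper endpoint is $H$ if $H \leq 1+\alpha+r$ and $1+\alpha+r$ otherwise. Hence on the effective domain,
\[
L(H) \;=\; \begin{cases} H \cdot \frac{\beta}{1+\beta} & \text{if } 1+\beta < H \leq 1+\alpha+r,\\[2pt] (1+\alpha+r) - \frac{H}{1+\beta} & \text{if } 1+\alpha+r < H < (1+r)(1+\beta).\end{cases}
\]
One checks that $(1+\alpha+r)(1+\beta) > (1+r)(1+\beta)$ (using $\alpha > 0$), so the second expression stays positive over its range.

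\textbf{Step 3: Case split on $r$.} I would split according to where $1+\beta$ and $1+\alpha+r$ sit in $[1+r,(1+r)(1+\beta))$:
\begin{itemize}
\item \emph{Case $r\in[0,\beta-\alpha]$:} Here $1+\alpha+r \leq 1+\beta$, so the effective domain is $(1+\beta,(1+r)(1+\beta))$ and only the second branch of $L$ contributes. The integral becomes $(1+\alpha+r)\ln(1+r) - r$ after computing $\int\frac{1}{H}\sfd H$ and $\int \frac{1}{1+\beta}\sfd H$.
\item \emph{Case $r\in(\beta-\alpha,\beta]$:} Now $1+\beta < 1+\alpha+r < (1+r)(1+\beta)$ (the right inequality uses $\alpha < r\beta$, which follows from $r > \beta-\alpha$ provided $\alpha(1+\beta) < \beta^2$, easily verified for the chosen numerical values). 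Both branches contribute; combining gives $(1+\alpha+r)\ln\frac{(1+\beta)(1+r)}{1+\alpha+r} - (\beta-\alpha)$ after the constant/linear terms collapse using $(\alpha+r)(1+\beta) + 1 - \beta^2 = (1+\beta)(\alpha+r) + (1-\beta)(1+\beta)$.
\item \emph{Case $r>\beta$:} The lower endpoint $1+r$ itself already exceeds $1+\beta$, so the effective domain is all of $[1+r,(1+r)(1+\beta))$. Split at $H = 1+\alpha+r$; both branches contribute, and the arithmetic yields $\alpha - \frac{\beta(1+r)}{1+\beta} + (1+\alpha+r)\ln\frac{(1+\beta)(1+r)}{1+\alpha+r}$.
\end{itemize}

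\textbf{Expected obstacle.} The derivation itself is pure bookkeeping once Steps 1 and 2 are in place; no inequality estimation is needed. The only mildly delicate point is verifying that the three case boundaries match up (continuity of $Q^\circ$ at $r = \beta-\alpha$ and $r = \beta$), and confirming the secondary inequalities such as $1+\alpha+r < (1+r)(1+\beta)$ in Cases B and C so that the split point lies in the integration interval. These sanity checks are what I would organize carefully in the write-up; the integrations themselves are immediate.
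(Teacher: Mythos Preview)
Your proposal is correct and follows essentially the same route as the paper. The paper also reparametrizes the expectation over $\rho$ by substituting a variable tied to $h_\rho(1+r)$ (it uses $\varrho = h_\rho(1+r)/(1+\beta)$ in the first two cases and $\varrho = h_\rho(1+r)/(1+r)$ in the third, whereas you use $H = h_\rho(1+r)$ throughout), computes the inner integral as a piecewise-linear function of that variable, and does the same three-way case split at $r=\beta-\alpha$ and $r=\beta$; the resulting elementary integrals and algebraic simplifications are identical.
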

	
	Figure~\ref{fig:Qr} in Appendix~\ref{appendix:wC-plotting} shows the function $Q^\circ(r)$ over $r \geq 0$, the function $0.102(1+\alpha + r)$ over $r \in [0, \beta - \alpha)$ and the function $0.1 r$ over $r \in [\beta - \alpha, \infty)$. It shows that $Q^\circ(r) \leq 0.102(1+\alpha + r)$ (this is the other term in the definition of $Q(r)$) when $r \in [0, \beta - \alpha]$ and $Q^\circ(r) \geq 0.1 r$ when $r \geq \beta - \alpha$.  For formality, we prove the later inequality analytically as $r$ can go to $\infty$ in this case. 
		\begin{restatable}{lemma}{Qrratio}
		\label{lemma:Q-r-ratio}
		When $r \geq \beta - \alpha$, we have $Q^\circ(r) \geq 0.1r$. 
	\end{restatable}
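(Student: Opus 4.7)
The plan is to use the closed-form expressions from Lemma~\ref{lemma:Q-circ-r} and analyze the two pieces $r\in[\beta-\alpha,\beta]$ and $r>\beta$ separately. Write $c := \eta/\ln(1+\beta)$ and $h(r) := Q^\circ(r) - 0.1\,r$; the goal is $h(r)\geq 0$ for $r\geq \beta-\alpha$.

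On $[\beta-\alpha,\beta]$, a direct differentiation of the middle case of Lemma~\ref{lemma:Q-circ-r} gives
\[
h'(r) \;=\; c\left(\ln\tfrac{(1+\beta)(1+r)}{1+\alpha+r} \;+\; \tfrac{\alpha}{1+r}\right) \;-\; 0.1.
\]
The quantity in parentheses has $r$-derivative $\tfrac{\alpha}{(1+r)(1+\alpha+r)}-\tfrac{\alpha}{(1+r)^2} = -\tfrac{\alpha^2}{(1+r)^2(1+\alpha+r)}<0$, so $h'$ is minimized at $r=\beta$. I would first numerically verify, using $\eta=0.1561$, $\alpha=0.3$, $\beta=12.1$, that $h'(\beta^-)>0$ (the value is $\approx 0.056$), yielding $h'\geq 0$ on the interval. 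Then I would verify the base case $h(\beta-\alpha)\geq 0$ by plugging into the formula: $c\bigl((1+\beta)\ln(1+\beta-\alpha)-(\beta-\alpha)\bigr) \geq 0.1(\beta-\alpha)$. The two facts together give $h\geq 0$ on $[\beta-\alpha,\beta]$.

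On $(\beta,\infty)$, differentiating the third case of Lemma~\ref{lemma:Q-circ-r} yields
\[
\tfrac{d}{dr}Q^\circ(r) \;=\; c\left(\ln\tfrac{(1+\beta)(1+r)}{1+\alpha+r} + \tfrac{\alpha}{1+r} - \tfrac{\beta}{1+\beta}\right),
\qquad
\tfrac{d^2}{dr^2}Q^\circ(r) \;=\; -\tfrac{c\alpha^2}{(1+r)^2(1+\alpha+r)} \;<\; 0,
\]
so $\tfrac{d}{dr}Q^\circ$ is strictly decreasing on $(\beta,\infty)$ with limit $L := c\bigl(\ln(1+\beta)-\tfrac{\beta}{1+\beta}\bigr)$ as $r\to\infty$. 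A strictly decreasing function that converges to $L$ from above satisfies $\tfrac{d}{dr}Q^\circ(r)\geq L$ on $(\beta,\infty)$. The crux is then to verify $L\geq 0.1$, equivalently $\eta\bigl(1 - \tfrac{\beta}{(1+\beta)\ln(1+\beta)}\bigr)\geq 0.1$, for the chosen constants. Combined with continuity of $Q^\circ$ at $r=\beta$ and the bound $h(\beta)\geq 0$ from Case~1, this forces $h(r)\geq 0$ for all $r\geq\beta$.

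The main obstacle is the asymptotic-derivative check: with the tuned constants one gets $L\approx 0.1001$, an extremely slim margin above $0.1$. A rigorous argument needs either tight enclosures of $\ln(1+\beta)=\ln 13.1$ or an explanation that the parameters $\alpha,\beta,\eta$ are chosen precisely so this inequality holds with a strict, if tiny, gap. The remaining computations (the sign of $h''$, the base-case numerical checks, and the comparison $h(\beta-\alpha)\geq 0$) are routine calculus.
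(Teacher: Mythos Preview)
Your proposal is correct and takes a natural alternative route to the paper's argument. The paper works with the \emph{ratio} $Q^\circ(r)/r$: it computes $\tfrac{d}{dr}\bigl(Q^\circ/r\bigr)=\tfrac{1}{r}\bigl(Q^{\circ\prime}-Q^\circ/r\bigr)$ and shows this is positive on $[\beta-\alpha,\beta]$ and negative on $[\beta,\infty)$, so the minimum of $Q^\circ/r$ over $[\beta-\alpha,\infty)$ is attained either at $r=\beta-\alpha$ or as $r\to\infty$; both endpoints are then checked numerically. You instead work with the \emph{difference} $h(r)=Q^\circ(r)-0.1r$ and exploit the concavity $Q^{\circ\prime\prime}<0$ (your second-derivative computation) to show $h'\geq 0$ throughout: on the first piece via $h'(\beta)>0$, and on the second via $Q^{\circ\prime}(r)\geq \lim_{r\to\infty}Q^{\circ\prime}(r)=L$. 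The two routes converge on the same two decisive numerical checks, namely $Q^\circ(\beta-\alpha)\geq 0.1(\beta-\alpha)$ and $\eta\bigl(1-\tfrac{\beta}{(1+\beta)\ln(1+\beta)}\bigr)\geq 0.1$; the latter is exactly the tight $\approx 0.1001$ inequality you flagged, and the paper relies on it too (it is their $\lim_{r\to\infty}Q^\circ(r)/r$, equal to your $L$ by L'H\^opital). Your approach trades the paper's sign analysis of $Q^{\circ\prime}-Q^\circ/r$ for a concavity argument plus one extra numerical check ($h'(\beta)>0$); either is perfectly fine.
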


	We define $Q'(r)$ for every $r \geq 0$ as follows:
	\begin{align*}
		Q'(r) = \begin{cases}
			Q^\circ(r) = \frac{\eta}{\ln(1+\beta)}\big((1+\alpha+r)\cdot\ln(1+r) - r\big) & r \in [0, \beta - \alpha]\\
			0.1r & r  > \beta - \alpha
		\end{cases}
	\end{align*}
	Therefore, we have $Q'(r) \leq Q(r)$.  It remains to prove the following lemma, whose proof is in the appendix.
		\begin{restatable}{lemma}{boundcstar}
			\label{lemma:bound-C^*-decondition}
			We have $\displaystyle\frac{1}{1+o}\cdot \left(1+\alpha + \frac {3o}2 - \frac1o\int_0^o Q'(r)\sfd r\right)\leq 1.45$.
	\end{restatable}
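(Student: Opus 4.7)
My plan is to reduce the claim to a one-variable calculus inequality. For $o>0$, multiplying both sides by the positive quantity $o(1+o)$ and using $\alpha=0.3$ so that $1.45-(1+\alpha)=0.15$ and $1.45-3/2=-0.05$, the desired inequality rearranges to
\[
  H(o)\;:=\;0.15\,o \;-\; 0.05\,o^2 \;+\; \int_0^o Q'(r)\,\sfd r \;\geq\; 0\qquad\text{for every }o\geq 0.
\]
The case $o=0$ is immediate since $F(0)=1+\alpha=1.3$, so it remains to treat $o>0$. Because $H(0)=0$, it suffices to prove $H'(o)=0.15-0.1\,o+Q'(o)\geq 0$ everywhere.

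The region $o>\beta-\alpha$ is trivial: the piecewise definition gives $Q'(o)=0.1\,o$, so $H'(o)=0.15>0$. For $o\in[0,\beta-\alpha]$, I would split at $o=3/2$. On $[0,3/2]$ the two easy bounds $0.15-0.1\,o\geq 0$ and $Q'(r)\geq 0$ suffice; the latter follows from $(1+\alpha+r)\ln(1+r)\geq r$, itself a consequence of the standard inequality $\ln(1+r)\geq r/(1+r)$ combined with $\alpha\geq 0$. On $[3/2,\beta-\alpha]$, setting $c:=\eta/\ln(1+\beta)$, the second derivative is
\[
  H''(o)\;=\;-0.1+c\!\left(\ln(1+o)+1+\frac{\alpha}{1+o}\right),
\]
whose bracketed part has derivative $(1+o-\alpha)/(1+o)^2>0$; hence $H''$ itself is strictly increasing in $o$. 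A single numerical check of $H''(3/2)>0$ (equivalent to $\ln(5/2)+1+2\alpha/5\approx 2.036>0.1/c\approx 1.648$) gives $H''>0$ throughout $[3/2,\beta-\alpha]$, so $H'$ is strictly increasing there, and the boundary value $H'(3/2)=Q'(3/2)\geq 0$ closes the case.

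The main obstacle I expect is that the $1.45$ bound is asymptotically tight: as $o\to\infty$ one has $F(o)\to 1.45$ exactly, so there is essentially no slack. Any coarsening of $Q'$ on $[0,\beta-\alpha]$ (for instance, replacing it by the linear piece $0.1\,r$ the way we do for $r>\beta-\alpha$) would break the proof, which is precisely why $Q'$ retains the logarithmic form from Lemma~\ref{lemma:Q-circ-r} on that interval. The tuned values $\alpha=0.3,\beta=12.1,\eta=0.1561$ must also work in close concert; the inequality $H''(3/2)>0$ holds with margin only about $0.024$, so the numerical substitutions need to be performed carefully (this is where the paper's ``numerical tools'' come in). Beyond these verifications, the calculation itself is a routine calculus exercise.
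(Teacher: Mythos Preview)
Your overall strategy---rewriting the inequality as $H(o):=0.15\,o-0.05\,o^2+\int_0^oQ'(r)\,\sfd r\geq 0$ and then showing $H'(o)\geq 0$---is sound, and in fact more analytic than what the paper does (the paper simply computes the closed form of $F(o)$ on $[0,\beta-\alpha]$ and appeals to a plot to check the maximum is $\approx 1.445$, then handles $o>\beta-\alpha$ by rewriting $F(o)=1.45+A/o-B/(1+o)$ and arguing the extremum is at the endpoints). However, your execution contains a derivative error that breaks the argument on $[3/2,\beta-\alpha]$.

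Differentiating $Q'(r)=c\big((1+\alpha+r)\ln(1+r)-r\big)$ gives
\[
\frac{\sfd Q'}{\sfd r}=c\Big(\ln(1+r)+\frac{1+\alpha+r}{1+r}-1\Big)=c\Big(\ln(1+r)+\frac{\alpha}{1+r}\Big),
\]
so the correct second derivative is $H''(o)=-0.1+c\big(\ln(1+o)+\alpha/(1+o)\big)$, \emph{without} the extra $+1$ you wrote. With the correct formula, $H''(3/2)=-0.1+c\big(\ln(5/2)+2\alpha/5\big)\approx -0.1+0.0607\cdot 1.036\approx -0.037<0$, so your boundary check fails and $H'$ is \emph{not} increasing on all of $[3/2,\beta-\alpha]$. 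In fact $H''$ (which is indeed strictly increasing, as you note) vanishes near $o^*\approx 3.89$, so $H'$ decreases on $[3/2,o^*]$ and increases thereafter. The repair is to evaluate $H'$ at this unique interior critical point: one finds $H'(o^*)\approx 0.025>0$, so the conclusion $H'\geq 0$ survives, but it requires this additional numerical check at $o^*$ rather than at $3/2$.
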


\section{Minimizing $L_k$-norms of Machine Loads}
    \label{sec:lk}

In this section we consider the other objective of minimizing $L_k$-norms of machine loads. As before, we are given as input $\{p_{ij}\}_{j \in J, i \in M}$ where $p_{ij}$ indicates $j$'s processing time (or size) on machine $i$. Formally, for an assignment $\sigma: J \rightarrow M$ of jobs to machines, $\sum_{j \in \sigma^{-1}(i)} p_{ij}$ is machine $i$'s load, i.e., the total processing time of jobs assigned to machine $i$. It is also 
equivalent to the maximum completion time of jobs in $\sigma^{-1}(i)$ when processing them on machine $i$ with no idle times. Neither migration nor preemption is allowed. 
The goal is to find an assignment that minimizes the following objective:
$$ 
\left(\sum_{i \in M}\left(\sum_{j \in \sigma^{-1}(i)} p_{ij}\right)^k \right)^{1/ k}
$$
We will optimize the objective after taking the $k$th power. In other words, we will optimize the $L_k^k$-norm of machine loads and take the $k$th root at the end.

\subsection{Linear Programming}

We again use a time indexed LP to find an approximate solution:

\begin{align}
\min  \sum_{j \in J, i \in M, s \in [T]} w_{ijs}  x_{ijs}  \nonumber\\
	\sum_{i \in M, s \in [T]} x_{ijs}  &\geq 1 &\forall j \in J  \nonumber \\
	\sum_{j \in J, s \in (t - p_{ij}, t]} x_{ijs} &\leq 1 &\forall i \in M, t \in [T] \label{eqn:lk-capacity} \\
	x_{ijs} & = 0 &\forall i \in M,j \in J,s > T - p_{ij} \nonumber \\
	x_{ijs} & \in [0, 1] &\forall i \in M,j \in J,s  \in [T], \nonumber
\end{align}
where $w_{ijs} := (s + p_{ij})^k - s^k$.

 The time indexed LP is almost identical to the LP presented at the beginning of Section~\ref{sec:wC-algorithm}. The main difference lies in the objective where job $j$ adds to the $k$th power of machine loads by $w_{ijs}$ when job $j$ starts execution at time $s$ on machine $i$. To better understand the objective, consider the IP before relaxation; so, $x_{ijs} \in \{0, 1\}$. Fix an assignment of jobs to machines and say jobs $j_1, j_2, \ldots j_H$ are scheduled on machine $i$ in this order and let $C_{j_1}, C_{j_2},  \ldots , C_{j_H}$ be their respective completion time---ordering doesn't affect machine loads so we assume an arbitrary fixed ordering.  Then, machine $i$ contributes to the $L^k_k$-norm objective in the schedule by 
$$\left(C_{j_H}\right)^k = 
\sum_{h' = 1}^H \left((C_{j_{h'}})^k - (C_{j_{h'-1}})^k \right),$$
where $C_{j_0} := 0$ for brevity.
So, when job $j_{h'}$ starts at time $s = C_{j_{h'-1}}$, i.e., $x_{ij_{h'}s} = 1$, we can pretend that it contributes to the objective by $w_{ij_{h'}s} = (s+ p_{ij_{h'}})^k - s^k$. Thus, the above objective captures the $L_k^k$-norm objective of the IP.

All the constraints remain unchanged. The only simplification is that we don't explicitly prohibit job $j$ from being assigned to machines $i$ such that $p_{ij} > T$, but this is only for notational convenience. Thus, the above LP is a valid LP relaxation for minimizing the $L_k^k$-norm objective.

\subsection{Splittable Rectangle Packing} 

To better understand the structure of the LP solution $x$, we consider the following 
fractional packing problem for each machine $i$, which we term \emph{splittable rectangle packing} (\srp). 

Fix a machine $i$. Suppose we have a rectangle of width $x_{ij} := \sum_{s} x_{ijs}$ and height $p_{ij}$ for each job $j$.\footnote{Here, we override the definition of height that was used in the previous sections.} We are allowed to split the rectangle into smaller ones of the \emph{same height}, subject to the constraint that the total width of the resulting rectangles equals the original width, $x_{ij}$. We are asked to pack the rectangles into a container of width 1 with no overlap. Then, we can view the packing as a collection $\cI_i$ of subsets of rectangles of the same width $\lambda(I)$, where $\sum_{I \in \cI_i} \lambda(I) = 1$.
Let $p(I)$ denote the total height of rectangles in $I$. Then the packing's cost is $\sum_{I \in \cI_i} \lambda(I) \cdot  (p(I))^k$. See Figure~\ref{fig:packing} for illustration.

\begin{figure}[t]
\centering
\includegraphics[width = .5\textwidth]{./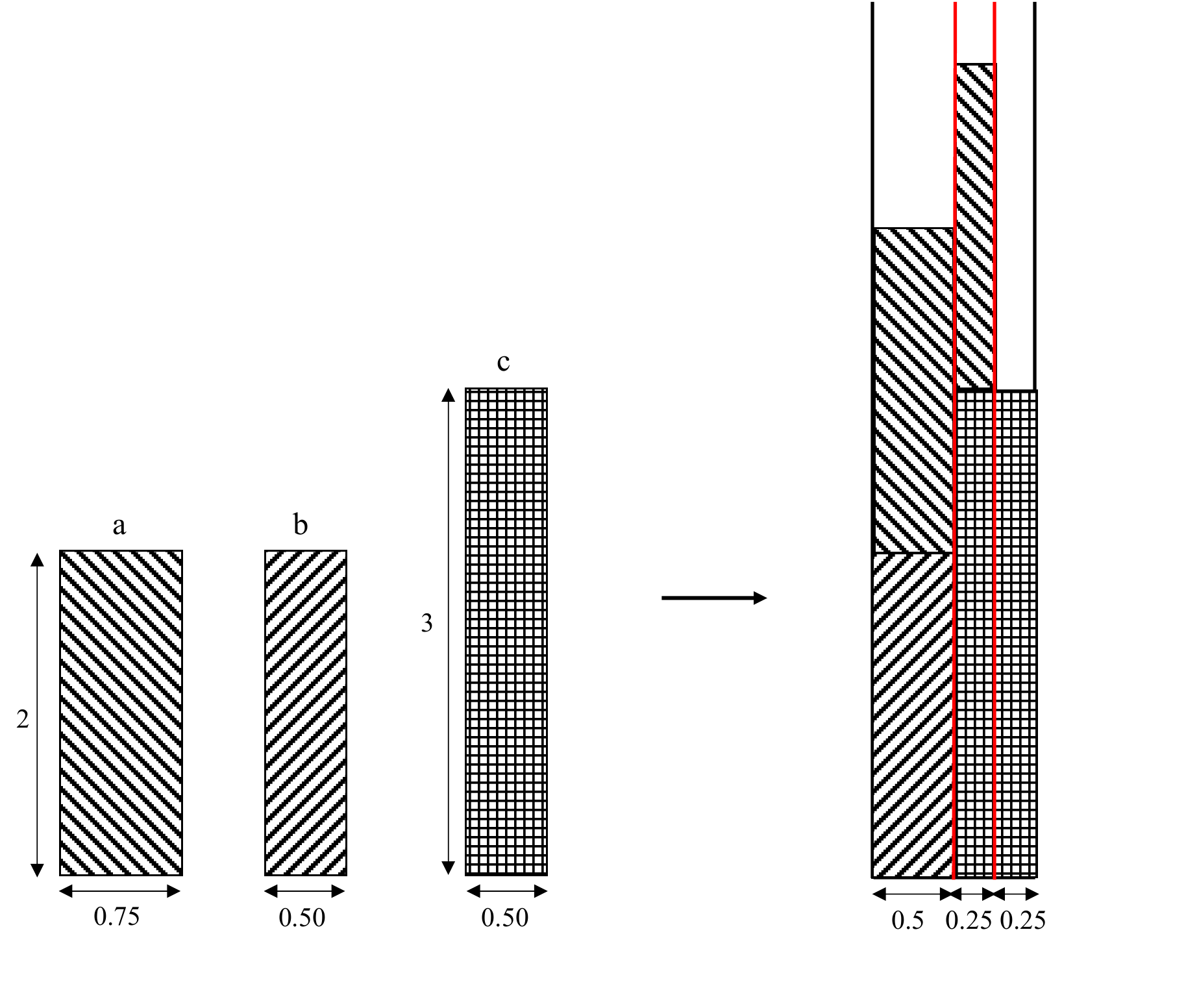}
\caption{Illustration of the splittable rectangle packing. In this example, we are given three rectangles, $a$, $b$, and $c$. They have height $2$, $2$, $3$ and width $0.75, 0.5, 0.5$, respectively. Intuitively they correspond to jobs of processing times $2$, $2$, $3$ that are fractionally assigned to a machine by $0.75, 0.5, 0.5$. The right figure shows a packing of the rectangles. The three groups $I_1 = \{a, b\}$, $I_2 = \{a, c\}$, $I_3 = \{c\}$ have width $0.5, 0.25, 0.25$ respectively. Note that rectangle $a$ was split into two, which appear in $I_1$ and $I_2$. Since $p(I_1) = 2 + 2$, $p(I_2) = 5, p(I_3) = 3$, the objective of the packing is $0.5 * 4^k + 0.25 * 5^k + 0.25 * 3^k$. It is an easy exercise to see that this is indeed an optimum packing for any $k > 1$. Lemma~\ref{lem:packing-equal} shows that this is what the LP optimizes on each machine given a set of jobs fractionally assigned to the machine. 
}
\label{fig:packing}
\end{figure}

Note that $I \in \cI_i$ may include more than one rectangle originating form the same job.\footnote{We can avoid this by using configuration LPs instead of time-indexed LPs although we don't have to.} So, if we associate rectangles with jobs that they originate from, we can view $I$ as a multi-set of jobs and $P(I)$ is the total size of the jobs in the multiset. Let $\srp^*(\{x_{ij}\}_{j})$ denote the optimum solution for  instance $\{x_{ij}\}_{j}$ of the splittable rectangle problem, or its objective. 

If a rectangle is placed at distance $s$ from the bottom of the container, we  say that it starts at time $s$. This is natural because there is a natural mapping from jobs to rectangles as discussed above.

\begin{lemma}
    \label{lem:packing-equal}
	For an optimum solution $x$ to the LP and for each machine $i$, $\srp^*(\{x_{ij}\}_{j})$ is equal to $x$'s contribution to the LP objective on machine $i$, i.e., $\sum_{s, j} w_{ijs} x_{ijs}$. 
\end{lemma}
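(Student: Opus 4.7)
The plan is to prove two matching inequalities on each machine $i$: (i) for any LP-feasible $x$ we have $\sum_{j,s} w_{ijs} x_{ijs} \geq \srp^*(\{x_{ij}\}_j)$, and (ii) starting from such $x$, one can produce an LP-feasible $x'$ with identical marginals $x'_{ij} = x_{ij}$ whose contribution on machine $i$ is exactly $\srp^*(\{x_{ij}\}_j)$. Given these, if $x^*$ is an LP optimum and its contribution on some $i$ were strictly above $\srp^*(\{x^*_{ij}\}_j)$, replacing the machine-$i$ block by the $x'$ from (ii) would preserve feasibility (since only marginals $x_{ij}$ are visible to other machines and to the covering constraint $\sum_{i,s} x_{ijs}\geq 1$) while strictly decreasing the objective, contradicting optimality. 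So equality holds on every $i$.

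For (i), view each nonzero $x_{ijs}$ as a rectangle of width $x_{ijs}$ and height $p_{ij}$ occupying the vertical interval $[s, s+p_{ij})$. The capacity constraint \eqref{eqn:lk-capacity} says that at any time $t$ the total width of rectangles covering $t$ is at most $1$. A standard strip-packing / interval-coloring argument (scan the finite set of event times bottom-to-top and horizontally split any rectangle that conflicts with a piece already placed, keeping its height intact) assigns all rectangles to a container of width $1$ without overlap; a rectangle may be broken into several pieces of the same height whose widths sum to $x_{ijs}$. This produces a collection $\cI_i$ of columns $I$ with widths $\lambda(I)$ summing to $1$, each column holding its pieces at disjoint vertical positions $[s_1, s_1+p_1), \ldots, [s_m, s_m+p_m)$ with $s_{\ell+1}\geq s_\ell + p_\ell$. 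Since $x\mapsto (x+p)^k - x^k$ is nondecreasing on $x\geq 0$ for $k\geq 1$, shifting each $s_\ell$ down to the tight value $s_\ell^\star := \sum_{\ell'<\ell} p_{\ell'} \leq s_\ell$ can only weaken the contribution, and after the shift the sum telescopes to $(\sum_\ell p_\ell)^k = p(I)^k$. Summing over columns with the $\lambda(I)$ weights gives $\sum_{j,s} w_{ijs} x_{ijs} \geq \sum_I \lambda(I)\, p(I)^k \geq \srp^*(\{x_{ij}\}_j)$.

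For (ii), take an optimum splittable rectangle packing of $\{x_{ij}\}_j$ on machine $i$. Within each column $I$ stack its pieces tightly from the bottom, so that the $\ell$-th piece starts at $s_\ell = \sum_{\ell'<\ell} p_{\ell'}$. Define $x'_{ijs} := \sum_{I} \lambda(I)\cdot \mathbf{1}[\text{column } I \text{ contains a piece of } j \text{ starting at } s]$. The marginals $\sum_s x'_{ijs}=x_{ij}$ are preserved by construction; the capacity bound holds because the load at time $t$ equals $\sum_{I:\,p(I)>t}\lambda(I)\leq \sum_I \lambda(I)=1$; and by the same telescoping identity applied in each column, $\sum_{j,s} w_{ijs} x'_{ijs} = \sum_I \lambda(I)\,p(I)^k = \srp^*(\{x_{ij}\}_j)$.

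The main obstacle is making the horizontal-split strip-packing in (i) rigorous, i.e., producing a well-defined collection of columns with consistent piece assignments across event times from the load bound $\sum_{j,s:\,t\in[s,s+p_{ij})} x_{ijs}\leq 1$; this is a classical fractional-scheduling argument (in the spirit of Shmoys--Tardos) that is routine but must be spelled out carefully. Everything after that reduces to the monotonicity of $(x+p)^k - x^k$ in $x$ and a telescoping identity for powers.
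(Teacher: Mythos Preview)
Your proposal is correct and follows essentially the same two-direction argument as the paper: convert the LP solution on machine $i$ into a feasible SRP packing to get $\sum_{j,s} w_{ijs}x_{ijs}\ge \srp^*(\{x_{ij}\}_j)$, and convert an optimal SRP packing into an LP-feasible block on machine $i$ to get the reverse inequality via optimality of $x$. The only noteworthy difference is that in direction (i) the paper simply asserts ``cost is preserved'' when packing each $R_{ijs}$ at its original start time $s$, whereas you are more careful: you observe that columns may have gaps and invoke the monotonicity of $x\mapsto (x+p)^k-x^k$ together with telescoping to obtain the inequality $\sum_\ell\big((s_\ell+p_\ell)^k-s_\ell^k\big)\ge p(I)^k$. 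This extra step is a genuine (if small) tightening of the paper's exposition, since the SRP cost is defined as $\sum_I\lambda(I)\,p(I)^k$ and does not see gaps. Otherwise the arguments coincide, including the hand-waved strip-packing step that you correctly flag as the main thing needing care.
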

\begin{proof}
    Fix a machine $i$ and the value of $x_{ij}$ for all $j$. 
    For the sake of contradiction suppose $\srp^*(\{x_{ij}\}_{j}) < \sum_{s, j} w_{ijs} x_{ijs}$. To draw a contradiction, we convert 
    the optimum packing $\cI_i$ to the SRP into a local LP solution on machine $i$. As discussed above, each split rectangle corresponds to a job and its start time is naturally defined in each subset $I$ in $\cI_i$; the rectangles are ordered in an arbitrary but fixed order). Initially $x_{ijs} = 0$ for all $j$ and $s$. For each split rectangle in $I \in \cI_i$ that start at time $s$ and corresponds to $j$, we add $\lambda(I)$ to $x_{ijs}$. Since the rectangles are packed into the container of width 1 with no overlaps, the resulting local LP solution must satisfy constraints~(\ref{eqn:lk-capacity}). This means we can improve the optimum LP solution locally on machine $i$, which is a contradiction.     Therefore, we have
	$\srp^*(\{x_{ij}\}_{j}) \geq \sum_{s, j} w_{ijs} x_{ijs}$.
	
Conversely, we can convert the LP solution on machine $i$ to a valid solution to SRP without changing cost. To this end, create a rectangle $R_{ijs}$ for each $x_{ijs} > 0$. The rectangle has height $p_{ij}$ and width $x_{ijs}$. Sort the rectangles $\{R_{ijs}\}_{j, s}$ in non-decreasing order of their start time and pack them into the container of width 1---here, we split the considered rectangle if necessary. Then, it is easy to see that we can pack $R_{ijs}$ starting it at time $s$ due to the LP constraint (\ref{eqn:lk-capacity}).  As we didn't change the start times, the cost is preserved. Thus, we have $\srp^*(\{x_{ij}\}_{j}) \leq \sum_{s, j} w_{ijs} x_{ijs}$.
\end{proof}

We have shown the equivalence of two views: fractional optimal scheduling in the LP and fractional splittable rectangle packing for fixed $\{x_{ij}\}_j$. Henceforth we will use both views in the analysis. In particular, we will interchangeably use jobs and rectangles.

\subsection{Algorithm}

We use the same algorithm as Kalaitzis \etal \cite{Ola2017unrelated} used for minimizing weighted completion time of jobs with uniform smith ratios, which is essentially the celebrated algorithm developed by Shmoys and Tardos for the Generalized Assignment Problem \cite{shmoys1993approximation}. For completeness we reproduce the algorithm below. 

Sort jobs on each machine $i$ in non-increasing order of their processing time, $p_{ij}$, and fractionally pack them into buckets of unit capacity, denoted as $B(i, 1), B(i, 2), B(i, 3) \ldots$, opening a new bucket only when necessary. Formally, let $X_{ij} := \sum_{j': p_{ij'} \geq p_{ij}} x_{ij'}$ breaking ties in an arbitrary but fixed way. 
If $\lceil X_{ij} - x_{ij} \rceil = \lceil X_{ij} \rceil$, then $j$ is fractionally matched with $B(i, \lceil X_{ij} \rceil)$ by $x_{ij}$ where  $x_{ij} := \sum_s x_{ijs}$ as defined before. Otherwise, job $j$ is fractionally matched with $B(i, \lceil X_{ij} - x_{ij} \rceil)$ by 
$\lceil X_{ij} - x_{ij} \rceil - (X_{ij} - x_{ij})$ and with $B(i, \lceil X_{ij} \rceil)$ by $X_{ij} - \lfloor X_{ij} \rfloor$.\footnote{The only degenerate case is when $X_{ij} - x_{ij}$ and $X_{ij}$ are both integers. In that case, $j$ is fully matched with $B(i, X_{ij})$.} In either case, job $j$ is matched with buckets by $x_{ij}$ in total. If it is matched with more than bucket, it is matched with exactly two consecutive buckets on machine $i$.

For convenience, we add dummy jobs of size 0 (which doesn't affect machine loads), so 
every bucket is fully matched. Thus, we have defined a fractional perfect matching between jobs and buckets. It is well known that a fractional perfect matching can be decomposed into a convex combination of polynomially many integral matchings. We sample an  integral matching at random from the combination/distribution. Clearly, it assigns each job to exactly one bucket, therefore to one machine.

\subsection{Analysis}

Fix a machine $i$. As a result of the above rounding, suppose machine $i$ gets assigned a set $I$ of jobs with probability $\lambda(I)$. Clearly we have $\sum_{I} \lambda(I) = 1$.   Because $j$ is assigned to $i$ with probability $x_{ij}$\footnote{It is an easy exercise to show this as the fractional matching constructed in the algorithm matches $i$ and $j$ by $x_{ij}$ units.}, we can view the rounding outcome as a feasible solution  to the instance $\{x_{ij}\}_j$ of $\srp$, which we denote as $A_i$. Thanks to Lemma~\ref{lem:packing-equal}, our goal becomes upper-bounding this solution's cost relative to $\srp^*(\{x_{ij}\}_{j})$. But the solution can be complicated to analyze due to correlations among jobs that exist on other machines. The key idea in \cite{Ola2017unrelated} lies in  observing a structural property of the rounding scheme's outcome and upper bounding the cost by the worst packing solution satisfying the property.

For brevity, henceforth we fix a machine $i$ and consider the instance $\{x_{ij}\}_{j}$ of SPR. The following definition articulates the key property. Recall that a rectangle's height is equal to the corresponding job' size.  

\begin{definition}
    A solution $\cI$ with $\lambda(I), I \in \cI$ to the SRP is balanced if for any $I, I' \in \cI$ and any integer $q > 0$, the $q$th highest rectangle in $I$ has no smaller height than  the $q+1$th highest rectangle in $I'$. 
\end{definition}

Recall that $A_i$ denotes the solution to the $\srp$ resulting from the rounding; for convenience, we may let it also denote its cost. The following observation is immediate from the algorithm's definition.

\begin{observation}
    It is the case that $A_i$ is balanced. 
\end{observation}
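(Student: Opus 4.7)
\medskip

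\noindent\textbf{Proof Proposal.}
The plan is to unpack what bucket each job lands in under the sampled matching and exploit the fact that the bucket packing is done in a non-increasing order of processing time. Fix machine $i$ and let $j_1, j_2, \ldots$ denote the jobs sorted so that $p_{ij_1} \geq p_{ij_2} \geq \cdots$. The algorithm fills $B(i,1)$ greedily with mass $x_{ij_1}, x_{ij_2}, \ldots$ of these jobs until it is full, then opens $B(i,2)$, and so on. The key structural fact I would extract first is that for every bucket index $q$, if we let $U_q$ (resp.\ $L_q$) denote the largest (resp.\ smallest) value of $p_{ij}$ among the jobs that contribute positive mass to $B(i,q)$, then $L_q \geq U_{q+1}$, with equality exactly when some job $j$ straddles the boundary between $B(i,q)$ and $B(i,q+1)$.

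Next I would describe each outcome of the rounding in terms of buckets: after padding with zero-size dummies, the matching becomes a perfect matching between jobs and buckets, so each realized outcome $I$ assigns exactly one job $j^{(q)}_I$ to each bucket $B(i,q)$ that is used on machine $i$. By the greedy packing above, $j^{(q)}_I$ is one of the jobs contributing mass to $B(i,q)$, hence $p_{ij^{(q)}_I} \in [L_q, U_q]$. Applying $L_q \geq U_{q+1}$ chain-wise gives $p_{ij^{(1)}_I} \geq p_{ij^{(2)}_I} \geq \cdots$, so sorting the multiset $I$ in non-increasing order of height reproduces exactly the bucket order; in particular, the $q$th highest rectangle in the column corresponding to $I$ in the \srp{} picture is $j^{(q)}_I$.

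Balancedness is now an immediate comparison across outcomes. For arbitrary outcomes $I, I'$ and any $q \geq 1$, the $q$th highest rectangle in $I$ has height $p_{ij^{(q)}_I} \geq L_q$, while the $(q+1)$th highest rectangle in $I'$ has height $p_{ij^{(q+1)}_{I'}} \leq U_{q+1}$, and $L_q \geq U_{q+1}$ by the structural fact. Dummy buckets (if $I$ has fewer than $q$ real jobs) contribute height $0$, which does not affect the inequality. The only subtlety to double-check is the straddling case where $L_q = U_{q+1}$ is realized by a common job $j$; because the matching is integral, $j$ is assigned to exactly one of $B(i,q), B(i,q+1)$ in a given realization, and in either situation the inequality $p_{ij^{(q)}_I} \geq p_{ij^{(q+1)}_{I'}}$ still holds. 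I expect this boundary-case bookkeeping to be the only mildly delicate part; everything else is a direct consequence of the greedy, sorted packing defining the buckets.
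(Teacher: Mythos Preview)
Your proposal is correct and is exactly the natural unpacking of why the observation holds; the paper itself gives no proof, stating only that it is ``immediate from the algorithm's definition.'' Your key points---that each integral matching in the Birkhoff--von Neumann decomposition only matches a job to a bucket where it has positive fractional mass, that the greedy sorted packing yields $L_q \geq U_{q+1}$, and hence that the $q$th bucket's job in any outcome dominates the $(q{+}1)$th bucket's job in any other outcome---are precisely what makes the observation true, and your treatment of the straddling and dummy cases is sound.
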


Let $O_i$ denote the optimum solution's cost, i.e., $\sum_{j,s} w_{ijs} x_{ijs}$. 
Knowing that $A_i$ is balanced, we will consider the worst balanced packing solution and compare its cost to $O_i$ for the sake of analysis. Intuitively, the packing gets worse when more bigger jobs are grouped together. 

Formally the worst balanced packing is defined as follows. Sort jobs in decreasing order of $p_{ij}$, breaking ties arbitrarily. If $j_1, j_2, ...,$ is the order, they are mapped to intervals $(0, x_{ij_1}], (x_{ij_1}, x_{ij_1} + x_{ij_2}], \ldots$. Let $f(z)$ denote the size of the job whose interval includes $z$.  See Figure~\ref{fig:f}. Let $F(z) := f(z) + f(z+1) + f(z+2) + \ldots$ and $F_r(z) := F(z) - f(z)$, which are both defined over $z \in [0, 1]$. See Figures~\ref{fig:f-partition} and \ref{fig:bigF}.

\begin{figure}[H]
\centering
\includegraphics[width = .7\textwidth]{./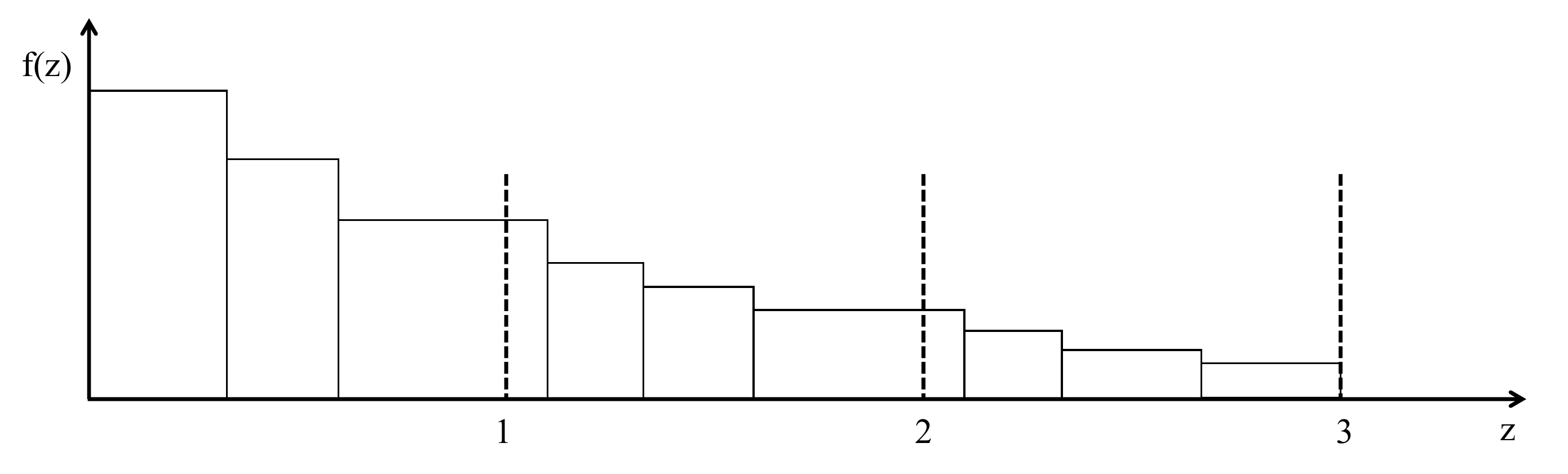}
\caption{Illustration of $f$.}
\label{fig:f}
\end{figure}

\begin{figure}[H]
\centering
\includegraphics[width = .7\textwidth]{./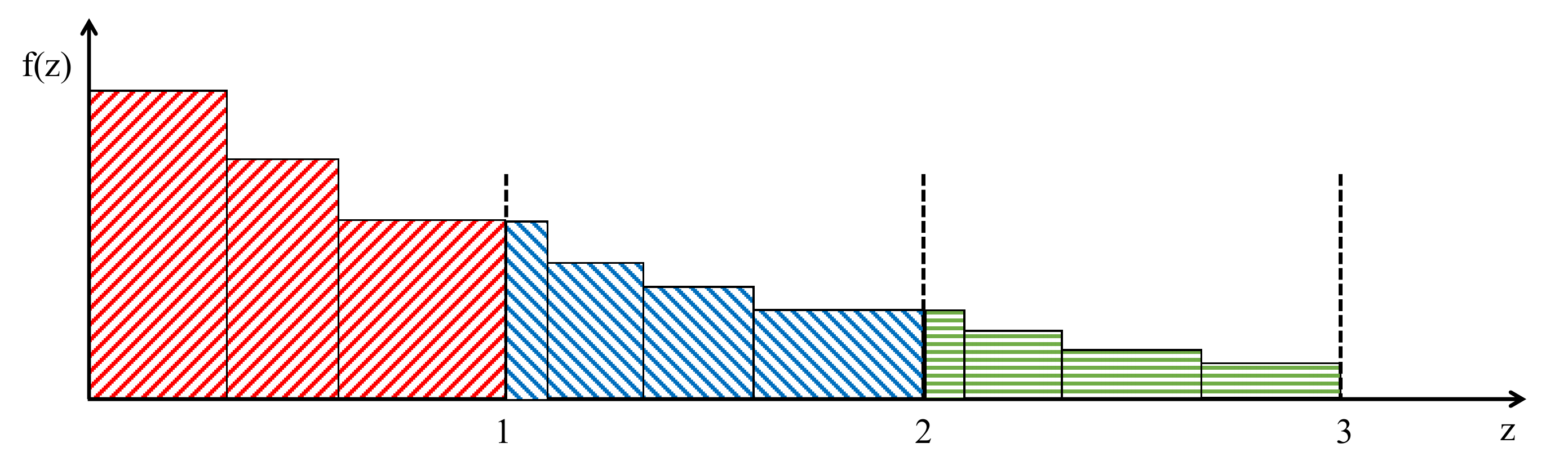}
\caption{Partitioning  $f$.}
\label{fig:f-partition}
\end{figure}

\begin{figure}[H]
\begin{subfigure}{.5\textwidth}
  \centering
  \includegraphics[width=.8\linewidth]{./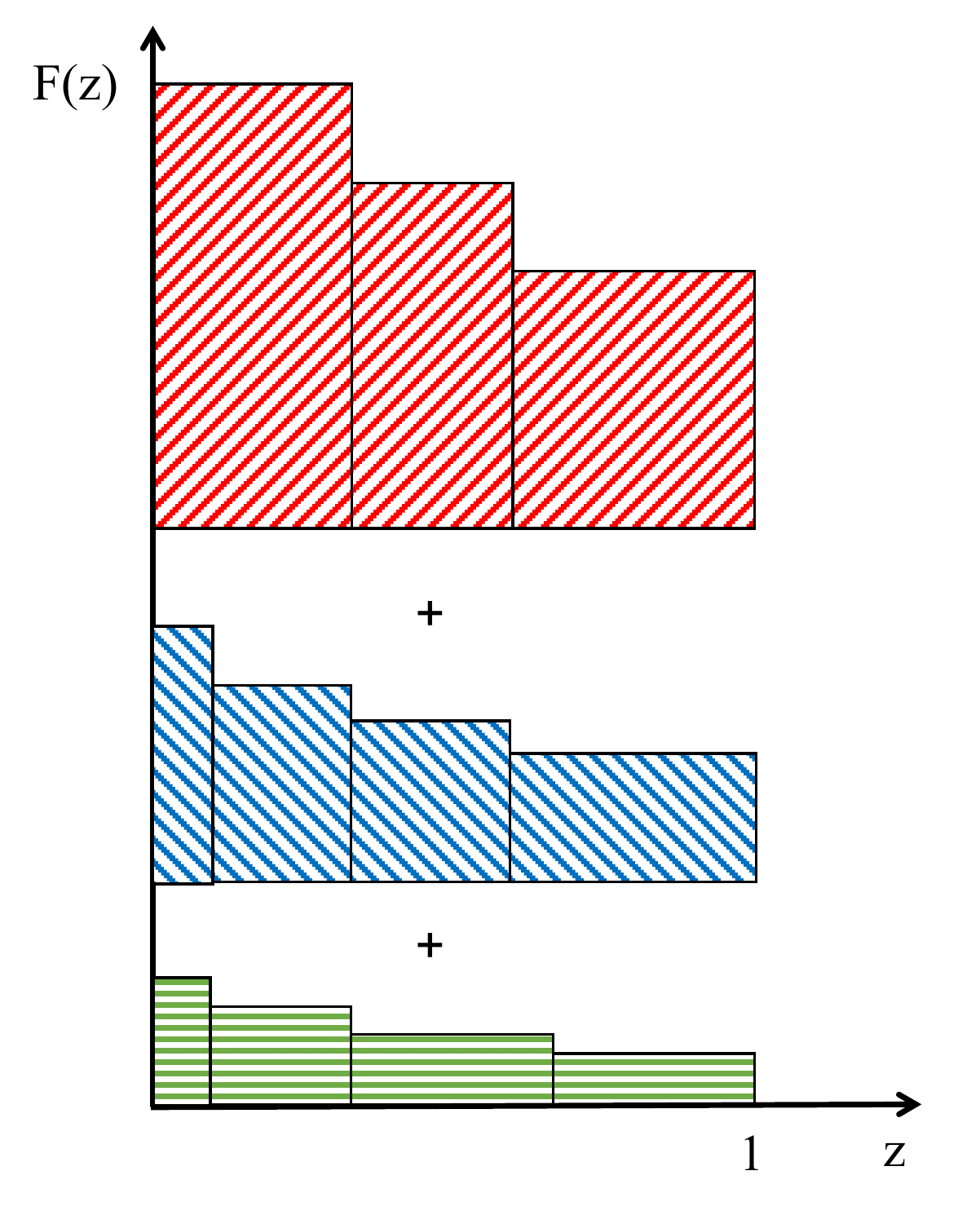}
  \label{fig:sub1}
\end{subfigure}%
\begin{subfigure}{.5\textwidth}
  \centering
  \includegraphics[width=.8\linewidth]{./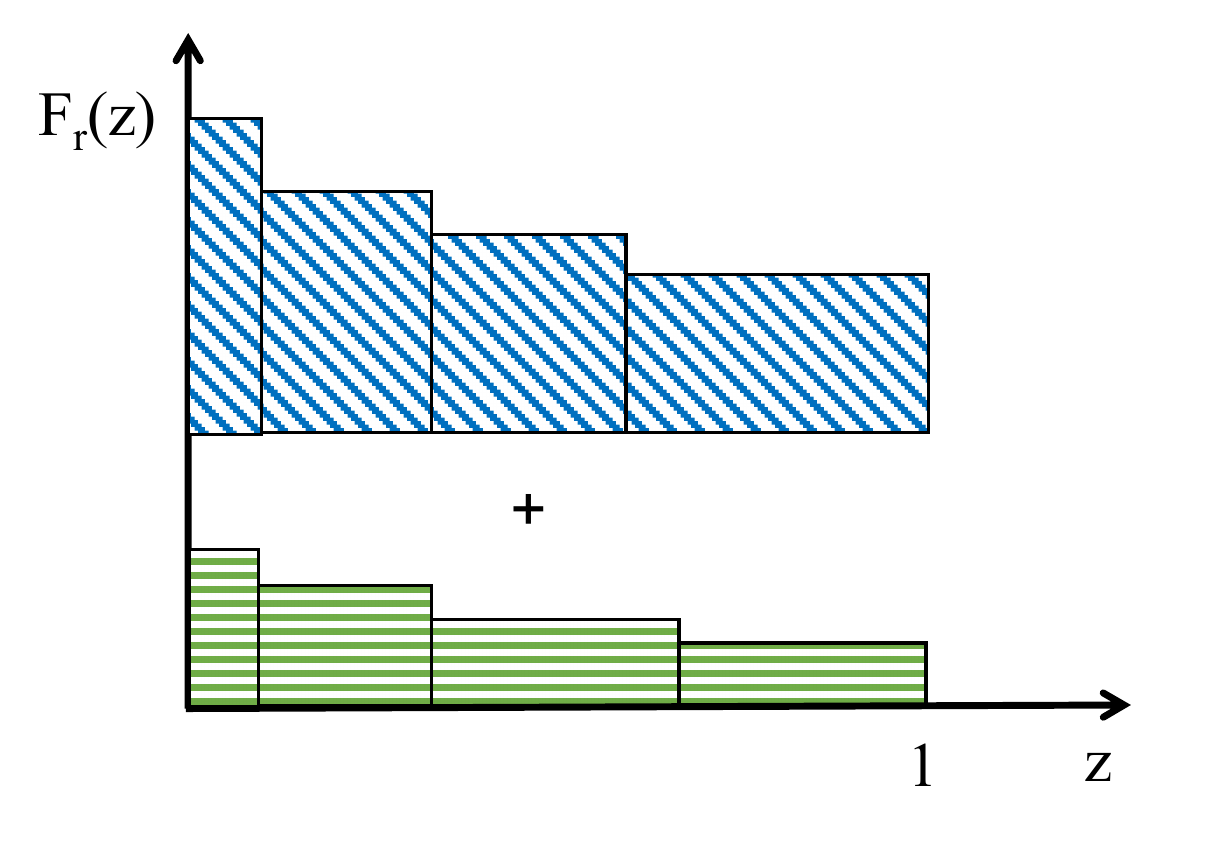}
  \label{fig:sub2}
\end{subfigure}
\caption{Illustration of $F$ and $F_r$.}
\label{fig:bigF}
\end{figure}

\begin{lemma}
	$A_i \leq \int_{0}^1 (F(z))^k d z$.
\end{lemma}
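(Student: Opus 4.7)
The plan is to prove $A_i \leq \int_0^1 F(z)^k\,dz$ by showing that the canonical packing whose column at position $z$ has height $F(z)$ is extremal among all balanced packings, and that any balanced packing can be ``straightened'' into it without decreasing the objective. Since $x \mapsto x^k$ is convex for $k \geq 1$, the argument is essentially a sequence of mass-spreading exchanges guarded by the balanced property.

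First I would order the groups of $\cI$ as disjoint subintervals of $[0, 1]$ in non-increasing order of $p(I)$ and write $A_i = \sum_I \lambda(I) p(I)^k = \int_0^1 G(z)^k\,dz$ where $G(z) = p(I(z))$, so that $G$ is a non-increasing step function. The total integrals agree, $\int_0^1 G = \int_0^1 F = \sum_j x_{ij} p_{ij}$, since both equal the total area of the pieces; this is the baseline ``total load'' constraint that is preserved throughout the argument.

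The core step is a level-swap. Suppose $\cI$ is not ``monotone'', meaning there are two groups $I_m, I_{m'}$ with $m<m'$ (hence $p(I_m) \geq p(I_{m'})$) and some level $q$ with $h_q(I_m) < h_q(I_{m'})$, where $h_q(I)$ denotes the height of the $q$-th tallest rectangle in $I$. After first splitting pieces so both groups carry a common width, I would exchange the level-$q$ rectangles between $I_m$ and $I_{m'}$. Two properties then need to be verified: (a) the exchange preserves balance, because the two swapped rectangles continue to lie in the level-$q$ tier of heights across all groups, and hence $h_q(I) \geq h_{q+1}(I')$ still holds for every $I, I', q$; and (b) the quantity $p(I_m) + p(I_{m'})$ is unchanged while $|p(I_m) - p(I_{m'})|$ strictly grows, so by convexity of $x^k$ the contribution $\lambda_m p(I_m)^k + \lambda_{m'} p(I_{m'})^k$ weakly increases, and therefore $A_i$ does not decrease.

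After iterating until no such pair exists, one reaches a monotone balanced packing $\cI^{*}$ with $h_q(I_1) \geq h_q(I_2) \geq \cdots$ at every level $q$. The last step is to match $\cI^{*}$ with the canonical packing: reading the level-$q$ heights left to right across groups gives the decreasing rearrangement of all level-$q$ rectangles in the packing, which by the balanced property coincides with the restriction of $f$ to the corresponding window in $[q-1, q]$. Hence, for $z$ in the interval of $I_m$, $h_q(I_m) = f(z + q - 1)$, and summing over $q$ yields $G(z) = F(z)$ pointwise, so $A_i(\cI^{*}) = \int_0^1 F(z)^k\,dz$. The main obstacle will be making the alignment rigorous when the level-$q$ subregion of $f$ has width strictly less than $1$ (because some groups contain fewer than $q$ pieces), and ensuring that the swap process is well-defined after rectangle splittings; a clean way to handle both is to treat pieces in a measure-theoretic/continuous fashion and to use a potential function such as $\sum_m m \cdot p(I_m)$ to witness monotonic progress of the swap procedure.
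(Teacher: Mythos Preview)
Your proposal is correct and is precisely the swapping argument the paper invokes in one line: the paper identifies each group with bucket positions $z_1\in[0,1),\,z_2\in[1,2),\ldots$ and asserts that the expected $k$-th power of $\sum_q f(z_q)$ is maximized when the $z_q$ are perfectly coupled as $z_q=z+q-1$ for a uniform $z$, leaving the swap to the reader. Your level-swap among balanced packings fills in exactly that detail; the one place to tighten is that the identification of the level-$q$ multiset with $f\big|_{[q-1,q]}$ in your final step follows from the bucket construction of the rounding rather than from balancedness alone.
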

\begin{proof}
The randomized rounding chooses exactly one job from each bucket. In other words, it chooses $z_1 \in [0, 1)$, $z_2 \in [1, 2), \ldots$ at random and the $k$th power of machine $i$'s load is $(f(z_1) + f(z_2) + \ldots)^k$. It is an easy exercise to see that the expected $k$th power is maximized when $z_1 = z$, $z_2 = z+1 \ldots$ for $z$ sampled from $[0, 1)$ uniformly at random. 
In this case the expected $k$th power is exactly the desired upper bound.
The proof follows from a straightforward swapping argument.
\end{proof}	

Having upper bounded $A_i$ by the ``worst" balanced packing's cost, we will compare the worst  balanced packing to the optimum packing. 

\begin{lemma}
    \label{lem:lk-middle-lemma}
    Let $O_i := \sum_{j, s} w_{ijs} x_{ijs}$ and $A'_i := \int_{0}^1 (F(z))^k d z$. If $A'_i \leq \alpha O_i$, then the randomized rounding gives a $\alpha^{1/ k}$-approximation for minimizing $L_k$-norms of machine loads. 
\end{lemma}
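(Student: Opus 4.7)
The plan is a short chain connecting the LP value to the optimum and then to the expected rounded cost, followed by a Jensen step to convert the bound on $\mathbb{E}[L_k^k]$ into a bound on $\mathbb{E}[L_k]$. Throughout, it is important to remember that $O_i$ is the LP's contribution to the $L_k^k$-objective from machine $i$, that $A_i$ is the \emph{expected} value of the $k$-th power of machine $i$'s realized load under the randomized rounding, and that $A'_i$ is the purely deterministic upper bound on $A_i$ derived from the balanced-packing analysis.

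First I would recall that the time-indexed LP is a valid relaxation of the integer problem of minimizing the $L_k^k$-norm of machine loads: for any optimum integral schedule, setting $x_{ijs}=1$ whenever $j$ starts at time $s$ on $i$ is feasible and its LP objective is exactly the schedule's $L_k^k$-norm. Denoting the optimum $L_k$-norm by $\mathrm{OPT}$, this gives
\begin{align*}
    \sum_{i \in M} O_i \;=\; \sum_{i,j,s} w_{ijs}x_{ijs} \;\leq\; \mathrm{OPT}^k.
\end{align*}

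Next, for each machine $i$, the preceding two lemmas give $A_i \leq A'_i$ and, by hypothesis, $A'_i \leq \alpha O_i$. Since $A_i$ is precisely $\mathbb{E}\bigl[(\sum_{j\in\sigma^{-1}(i)} p_{ij})^k\bigr]$ (the expected contribution of machine $i$ to the $L_k^k$-objective of the rounded schedule), linearity of expectation and the previous display yield
\begin{align*}
    \mathbb{E}\Bigl[\sum_{i\in M}\Bigl(\sum_{j\in\sigma^{-1}(i)} p_{ij}\Bigr)^k\Bigr] \;=\; \sum_{i \in M} A_i \;\leq\; \alpha \sum_{i\in M} O_i \;\leq\; \alpha\cdot \mathrm{OPT}^k.
\end{align*}

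Finally, since $k\geq 1$ the map $t\mapsto t^{1/k}$ is concave on $[0,\infty)$, so Jensen's inequality gives
\begin{align*}
    \mathbb{E}\Bigl[\Bigl(\sum_{i}\bigl(\textstyle\sum_{j\in\sigma^{-1}(i)} p_{ij}\bigr)^k\Bigr)^{1/k}\Bigr] \;\leq\; \Bigl(\mathbb{E}\Bigl[\sum_i\bigl(\textstyle\sum_{j\in\sigma^{-1}(i)} p_{ij}\bigr)^k\Bigr]\Bigr)^{1/k} \;\leq\; \bigl(\alpha\cdot \mathrm{OPT}^k\bigr)^{1/k} \;=\; \alpha^{1/k}\cdot\mathrm{OPT},
\end{align*}
which is precisely the $\alpha^{1/k}$-approximation guarantee in expectation. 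There is no real obstacle here; the only place where one might slip is conflating the (per-machine) $L_k^k$-bound with an $L_k$-bound without the Jensen step, or forgetting that the LP lower-bounds $\mathrm{OPT}^k$ (the $k$-th power) rather than $\mathrm{OPT}$ directly.
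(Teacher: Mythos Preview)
Your proof is correct and follows the same approach as the paper's. The paper's argument is terser---it just writes $\E\sum_i A_i \leq \sum_i A'_i \leq \alpha \sum_i O_i$ and then says ``by taking the $k$th root, the lemma follows''---whereas you spell out both the comparison to $\mathrm{OPT}^k$ via the LP relaxation and the Jensen step explicitly, which is a welcome clarification.
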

\begin{proof}
    We have $\E \sum_i A_i \leq \sum_i A'_i \leq \alpha \sum_i O_i$. Thus, the algorithm yields an assignment that is $\alpha$-approximate against the LP solution for the $L_k^k$-norm objective. By taking the $k$th root, the lemma follows. 
\end{proof}

Therefore, our remaining goal becomes upper bounding $A'_i / O_i$. We will make several modifications of $f$  (and therefore $F$ and $F_r$ accordingly) in a sequence without decreasing $A'_i / O_i$. In each step we convert the current $f$ into $f'$, and accordingly $F$ and $F_r$ into $F'$ and $F'_r$, respectively. For convenience, we will assume that $f$ has domain $[0, \infty)$ by adding dummy jobs of zero size. Note that $F(z)$ and $F_r(z)$ have domain $[0, 1]$.

\paragraph{First step.} $F' = F$; $f'(z) = F(z) - F(1)$, $z \in [0, 1]$; and $F'_r(z) = F(1)$. Further, jobs corresponding to $F'_r(z)$ are infinitesimal. See Figure~\ref{fig:step1}.

\begin{figure}[H]
\centering
\includegraphics[width = .7\textwidth]{./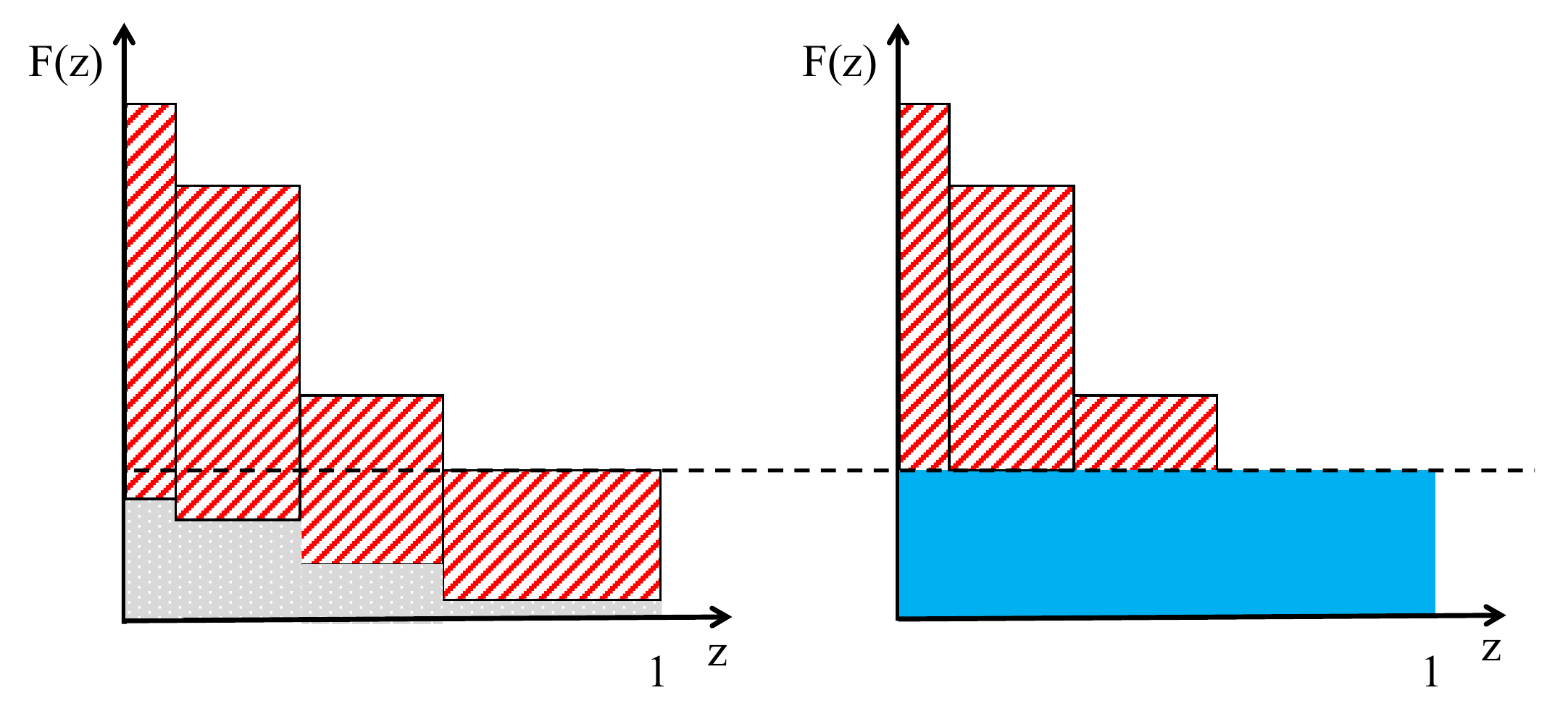}
\caption{Illustration of the first step. In the left the red area illustrates $f(z)$ for $z \in [0, 1]$ and the grey area illustrates $F_r(z)$.  In this step we replace the area below the dotted line with infinitesimal jobs, which are colored blue in the right figure.}
\label{fig:step1}
\end{figure}

Here, it is important to notice that we only replace each job (or rectangle) with smaller jobs (or shorter rectangles). In particular, rectangles corresponding to $f(z)$, $z \in [0, 1]$ can only be replaced into shorter rectangles. This is possible due to the following lemma. 

\begin{lemma}
	For all $z \in [0, 1]$ we have $F_r(z) \leq F(1)$. 
\end{lemma}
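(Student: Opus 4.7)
The plan is to leverage the fact that jobs are sorted in non-increasing order of size, so the function $f$ is monotonically non-increasing on $[0,\infty)$. Once we have this monotonicity, the inequality follows by a termwise comparison of the two series defining $F_r(z)$ and $F(1)$.

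More concretely, I would first observe from the construction of $f$ that $f$ is a right-continuous step function which is non-increasing on $[0,\infty)$: the intervals $(0, x_{ij_1}], (x_{ij_1}, x_{ij_1}+x_{ij_2}], \dots$ partition $[0,\infty)$ in the order determined by decreasing $p_{ij}$, so the value of $f$ on each successive interval is no larger than on the previous one. Dummy zero-size jobs extend $f$ to all of $[0,\infty)$ without breaking monotonicity.

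Next, I would expand the two quantities we are comparing. By definition,
\begin{align*}
F_r(z) = F(z) - f(z) = \sum_{k=1}^{\infty} f(z+k), \qquad F(1) = \sum_{k=1}^{\infty} f(k).
\end{align*}
For any $z \in [0,1]$ and any integer $k \geq 1$ we have $z + k \geq k$, and therefore by monotonicity $f(z+k) \leq f(k)$. Summing this termwise inequality over $k \geq 1$ yields $F_r(z) \leq F(1)$, which is what we wanted.

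There is no real obstacle here: the only thing to double check is that the series converge (they are finite since only finitely many jobs have nonzero size, so the sums are actually finite) and that the $+\infty$ extension by dummy jobs is harmless. The essence of the argument is simply the monotonicity of $f$ combined with the translation-by-$z$ comparison.
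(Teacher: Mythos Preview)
Your proposal is correct and follows essentially the same argument as the paper: both expand $F_r(z)=\sum_{k\ge 1} f(z+k)$ and $F(1)=\sum_{k\ge 1} f(k)$, use the monotonicity of $f$ to get $f(z+k)\le f(k)$ for each $k\ge 1$, and sum termwise. The paper's version is just a two-line statement of exactly this comparison.
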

\begin{proof}
    By definition of $f$ we have 
    $f(z + 1) \leq f(1)$, $f(z+2) \leq f(2) \ldots$. The claim follows by adding up the inequalities.
\end{proof}

Since $F' = F$, $A'_i$ remains fixed. Further, $O_i$ can only decrease as we replaced each rectangle with shorter rectangles and thus more flexible packings are allowed. Therefore, the first step doesn't decrease $A'_i / O_i$.

\paragraph{Second step.} Since all rectangles corresponding to $f(z)$,  $z \geq 1$ have infinitesimal height, the optimal packing just  `waterfills' the slope defined by $f(z)$, $z \in [0, 1]$. This is shown in the left figure in Figure~\ref{fig:step2}. We would like to increase $A'_i$ keeping $O_i$ the same. Thus, we only modify the slope under the water. Precisely, we find $m' \leq m \in [0, 1]$ such that 
i) $f'(z) = f(z)$ for all $z \in [0, m']$, ii) $f'(z)$ is constant for all $z \in [m', m]$, iii) $f'(z) = 0$ for all $z \in [m, 1]$, and iv) $f'(m) (1 - m)  = \int_{z = 0}^1 F_r(z) dz$. This process is illustrated in Figure~\ref{fig:step2}. Note that this step doesn't change the value of $O_i$.

\begin{figure}[H]
\begin{subfigure}{.50\textwidth}
  \centering
  \includegraphics[width=.7\linewidth]{./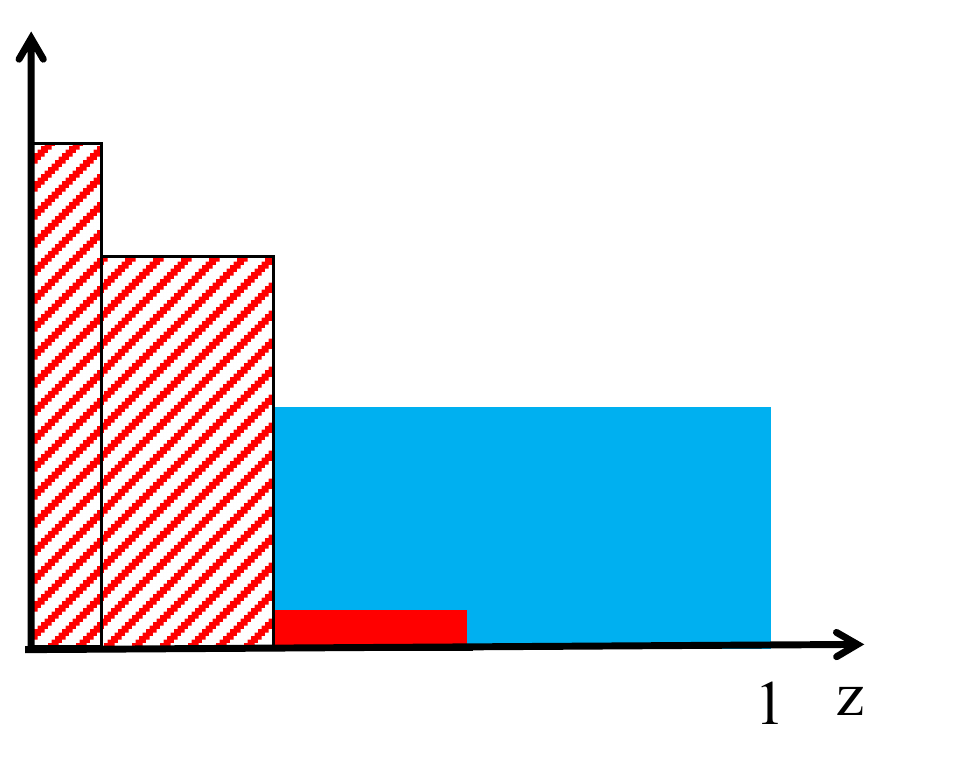}
  \label{fig:sub1}
\end{subfigure}%
\begin{subfigure}{.50\textwidth}
  \centering
  \includegraphics[width=.7\linewidth]{./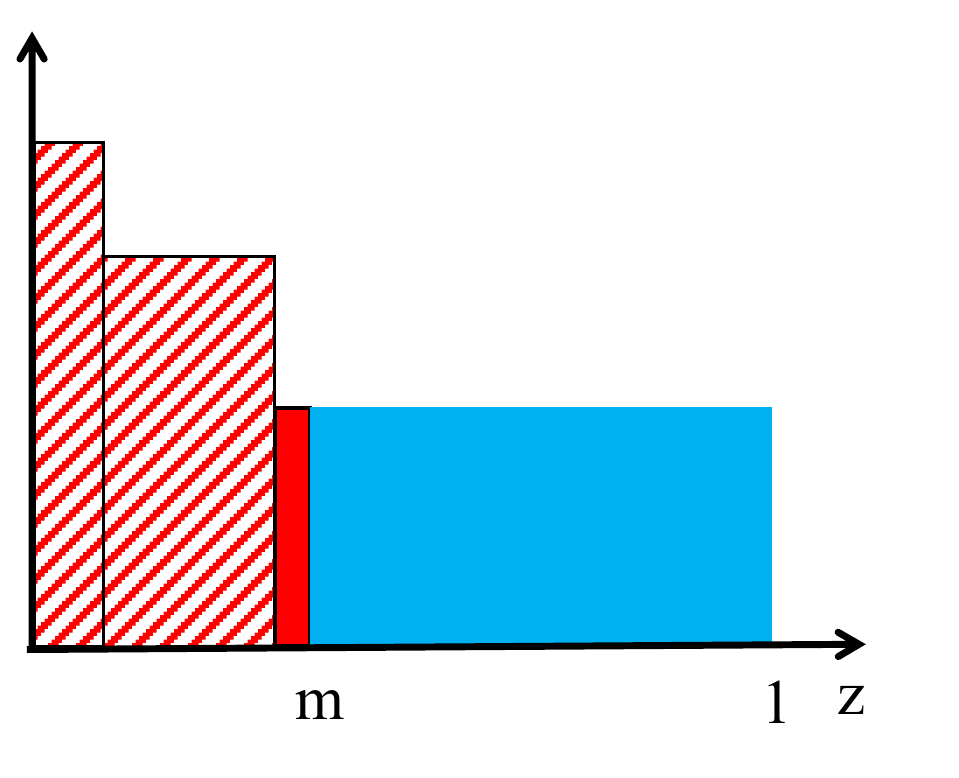}
  \label{fig:step22}
\end{subfigure}
\caption{Illustration of the second step. The left figure shows the optimum packing. The right figure shows the modification: without changing the volume and skyline of infinitesimal rectangles colored blue, shift the volume of $f(z), z \in [0, 1]$ to the left as much as possible. The right figure shows the resulting $F'$, along with $f'(z)$, $z \in [0, 1]$ and $F'_r$.}
\label{fig:step2}
\end{figure}

Further, it is straightforward to see that this step can only increase $A'_i$ as illustrated in Figure~\ref{fig:step2-before-after}.

\begin{figure}[H]
\begin{subfigure}{.50\textwidth}
  \centering
  \includegraphics[width=.7\linewidth]{./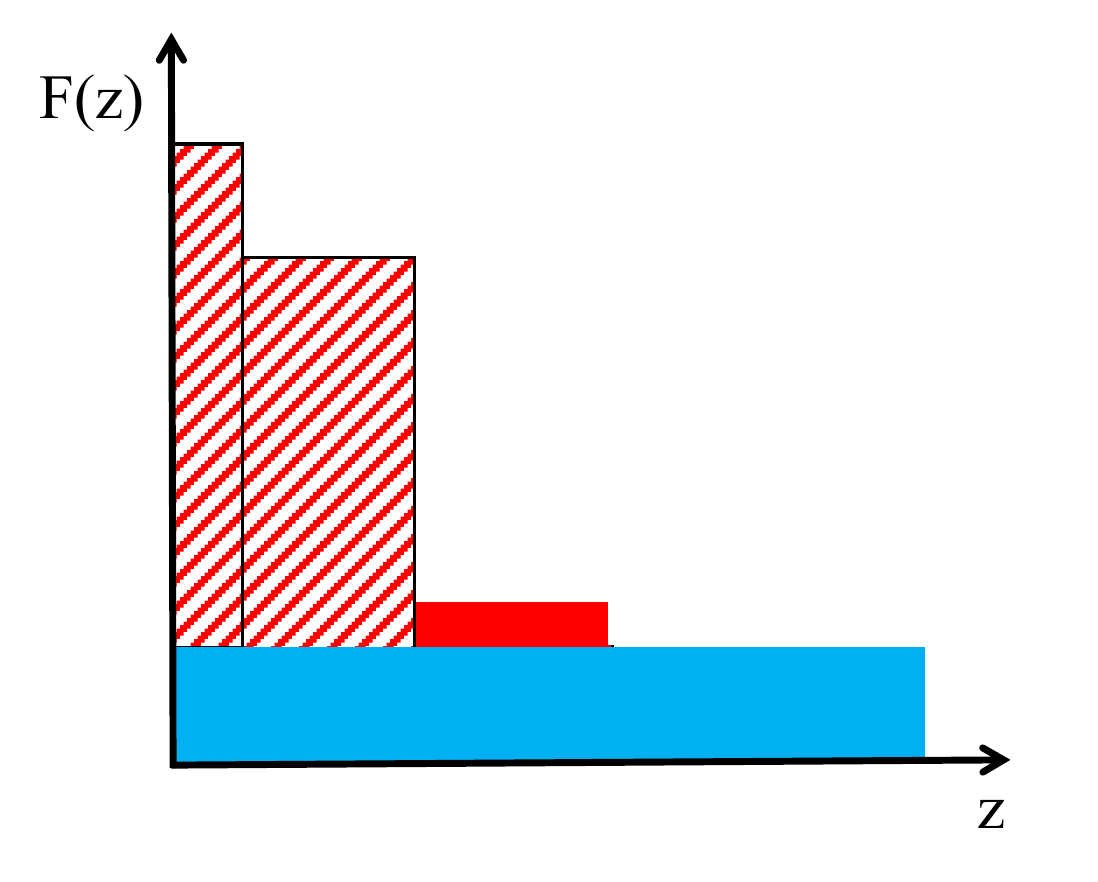}
  \end{subfigure}%
\begin{subfigure}{.50\textwidth}
  \centering
  \includegraphics[width=.7\linewidth]{./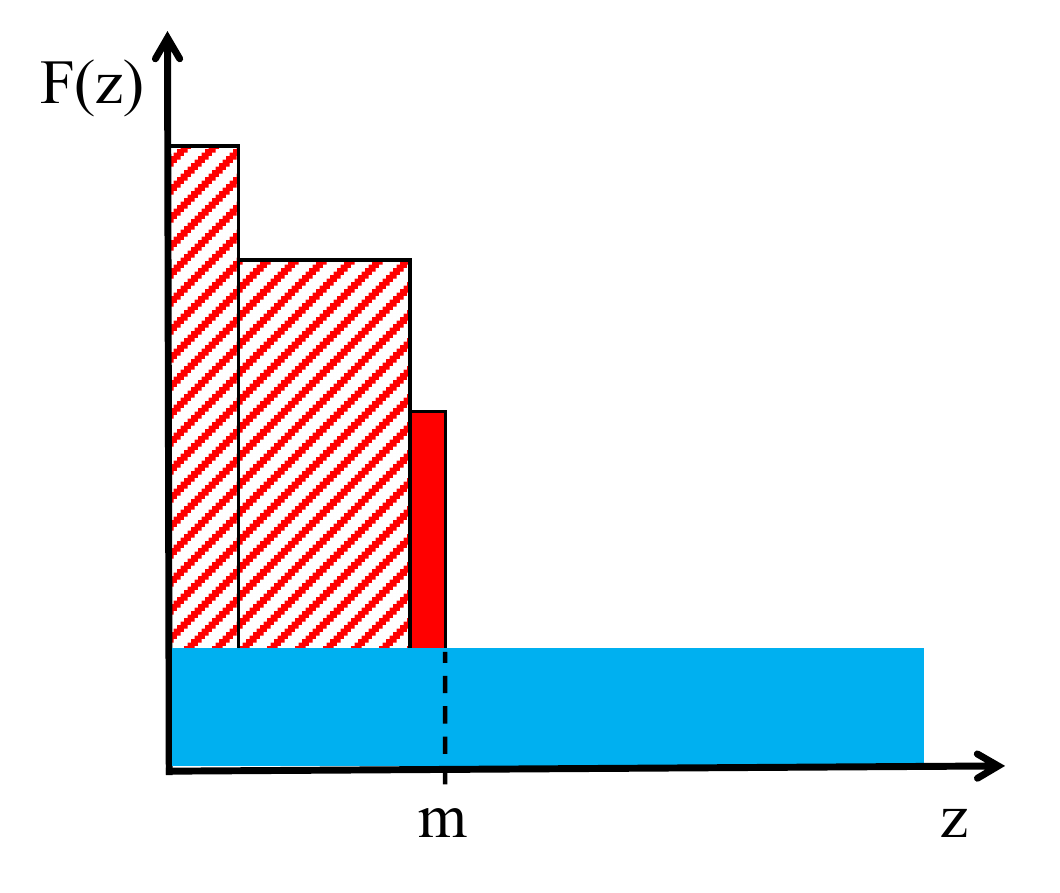}
\end{subfigure}
\caption{Illustration of $F$ before and after the second step.}
\label{fig:step2-before-after}
\end{figure}

\paragraph{Third step.} Change $f$ to $f'$ so that $f'(z)$ is constant for $z \in [0, m]$. In this step we do not change $f(z), z > m$. Here we ensure $A'_i / O_i$ never decreases. Let's see why this is possible. Let $r := F(1)$; in other words, the total height of infinitesimal rectangles in any group of $A_i'$ is $r$. Thanks to the simple structure of $f$, we have 
$$
\frac{A'_i}{O_i} = \frac{\int_{z = 0}^m (r + f(z))^k   dz  + (1 - m)r^k}{\int_{z = 0}^m (f(z))^k   dz	+ (1 - m) (f(m))^k}
$$

Let's consider any two $z_1, z_2 \in [0, m]$ such that $f(z_1) \neq f(z_2)$. Let $I_1$ be the maximal interval including $z_1$ where $f(z)$ is constant for $z \in I_1$. Similarly $I_2$ is defined. We observe that we can make $f$ more uniform without decreasing the value of $A'_i / O_i$.

\begin{lemma}
    It is possible to not to decrease the value of $A'_i / O_i$ by setting $f'(z)  = f(z_1)$ or $f'(z)  = f(z_2)$ for all $z \in I_1 \cup I_2$ ; and $f'(z) = f(z)$ for all $z \notin I_1 \cup I_2$.
\end{lemma}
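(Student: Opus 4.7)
The plan is to apply a mediant (weighted-median-of-ratios) argument, but first I would correct the bookkeeping for $O_i$. Set $H := r/(1-m)$; after Step 2 the equality $f(m)(1-m)=r$ forces $f(m)=H$, and because $f$ is non-increasing on $[0,m]$, $f(z)\ge H$ for every $z\in[0,m]$. Under this invariant I claim that the true optimum SRP cost can be written
\[
O_i \;=\; \int_0^m (f(z))^k\,\mathrm{d}z \;+\; (1-m)\,H^k,
\]
since the infinitesimals (total volume $r$) fit entirely on $[m,1]$ waterfilled to height $H$, and have no incentive to raise any column of $[0,m]$ already at height $\ge H$; this expression depends only on the multiset of values of $f$ on $[0,m]$. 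The invariant survives the lemma's modification because both candidate values $a=f(z_1)$ and $b=f(z_2)$ are themselves values of $f$ on $[0,m]$ and hence $\ge H$.

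Next I would isolate the parts independent of the modification. Let
\[
P_0 := \int_{[0,m]\setminus(I_1\cup I_2)}(r+f)^k\,\mathrm{d}z + (1-m)r^k,\quad Q_0 := \int_{[0,m]\setminus(I_1\cup I_2)}f^k\,\mathrm{d}z + (1-m)H^k,
\]
and write $\ell := \ell_1+\ell_2$. Setting $f'\equiv c$ on $I_1\cup I_2$ for $c\in\{a,b\}$ gives $A'_i(c)=P_0+\ell(r+c)^k$ and $O_i(c)=Q_0+\ell c^k$, while the unmodified values are $A'_i^{\text{orig}}=P_0+\ell_1(r+a)^k+\ell_2(r+b)^k$ and $O_i^{\text{orig}}=Q_0+\ell_1 a^k+\ell_2 b^k$. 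A direct check yields the key identities $A'_i^{\text{orig}}=\frac{\ell_1}{\ell}A'_i(a)+\frac{\ell_2}{\ell}A'_i(b)$ and $O_i^{\text{orig}}=\frac{\ell_1}{\ell}O_i(a)+\frac{\ell_2}{\ell}O_i(b)$, exhibiting $A'_i^{\text{orig}}/O_i^{\text{orig}}$ as a positively-weighted mediant of $R(a):=A'_i(a)/O_i(a)$ and $R(b):=A'_i(b)/O_i(b)$.

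By the standard mediant inequality (any positively-weighted mediant of two positive ratios lies between them), $\max\{R(a),R(b)\}\ge A'_i^{\text{orig}}/O_i^{\text{orig}}$, so choosing $c\in\{a,b\}$ to be the maximizer proves the lemma. The main obstacle I expect is not the algebra itself but the careful handling of the $(1-m)(f(m))^k$ term written in the $O_i$ formula earlier in this section: when $m\in I_1\cup I_2$ the modification changes $f(m)$ even though the actual waterfill height $H$ does not move, and using the literal $f(m)$ would destroy the linearity of $O_i$ in the $(a,b)$-mix that the mediant identity depends on. Replacing $f(m)$ by $H$, justified by the invariant $f\ge H$, is the single technical point that makes both $A'_i$ and $O_i$ simultaneously linear in $c$ and lets the mediant argument close.
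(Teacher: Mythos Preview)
Your proposal is correct and uses essentially the same argument as the paper: both express $A'_i/O_i$ as a ratio of quantities that are affine in the proportion of $I_1\cup I_2$ carrying the value $f(z_1)$ versus $f(z_2)$, and both conclude via the mediant inequality (equivalently, monotonicity of a linear-fractional function) that the ratio is maximized at one of the two endpoints. Your explicit replacement of $(1-m)(f(m))^k$ by the fixed waterfill term $(1-m)H^k$ makes rigorous a point the paper's proof absorbs silently into its constant $c_d$, but otherwise the two proofs coincide.
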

\begin{proof}
    Let $y_1 = f(z_1)$ and $y_2 = f(z_2)$. Let $\lambda_1 = |I_1|$ and $\lambda_2 = |I_2|$. Then, for some $c_n, c_d > 0$, before the change, we have
    $$
    \frac{A'_i}{O_i} = \frac{ \lambda_1 (y_1 + r)^k + \lambda_2 (y_2 + r)^k + c_n}{\lambda_1 (y_1)^k + \lambda_2 (y_2)^k + c_d }
    $$
    Let $\lambda_1 = \lambda$ and $\lambda_2 = \lambda - \lambda_1$. Then, both the numerator and denominator are linear in $\lambda$. It is an easy exercise to show that $A'_i / O_i$ is maximized when $\lambda = 0$ or $\lambda = \lambda_1 + \lambda_2$.
\end{proof}

The above process of making the function values uniform must terminate in a finite time steps assuming that $f$ has a finite number of distinct function values. Thus, at the end $f(z)$ is constant for $z \in [0, m]$ and $f(z) = 0$ for $z \in [m, 1]$. Technically speaking $f(z)$ should be a sufficiently small constant for $z \in [m, 1]$ to capture infinitesimal sizes, but we pretend that the value is zero for convenience.

After all the modification steps, the final $f$ is illustrated in Figure~\ref{fig:step23-4}.

\begin{figure}[H]
\begin{subfigure}{.50\textwidth}
  \centering
  \includegraphics[width=.7\linewidth]{./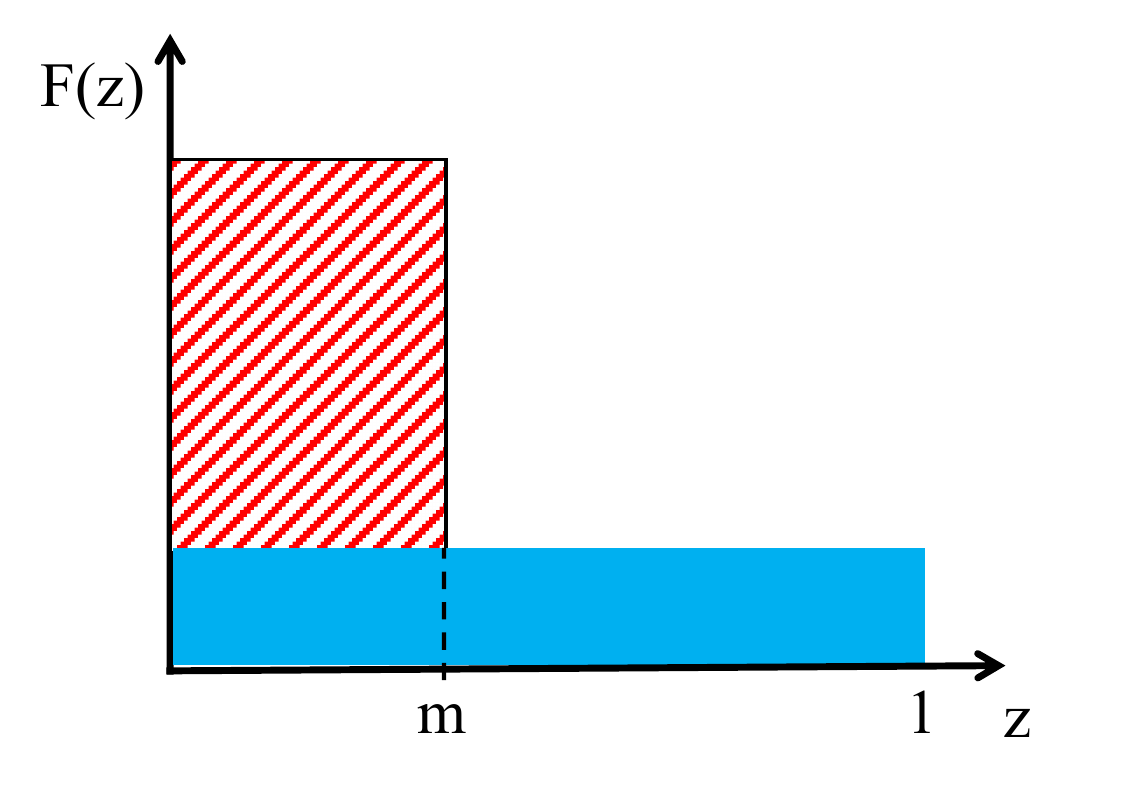}
  \end{subfigure}%
\begin{subfigure}{.50\textwidth}
  \centering
  \includegraphics[width=.6\linewidth]{./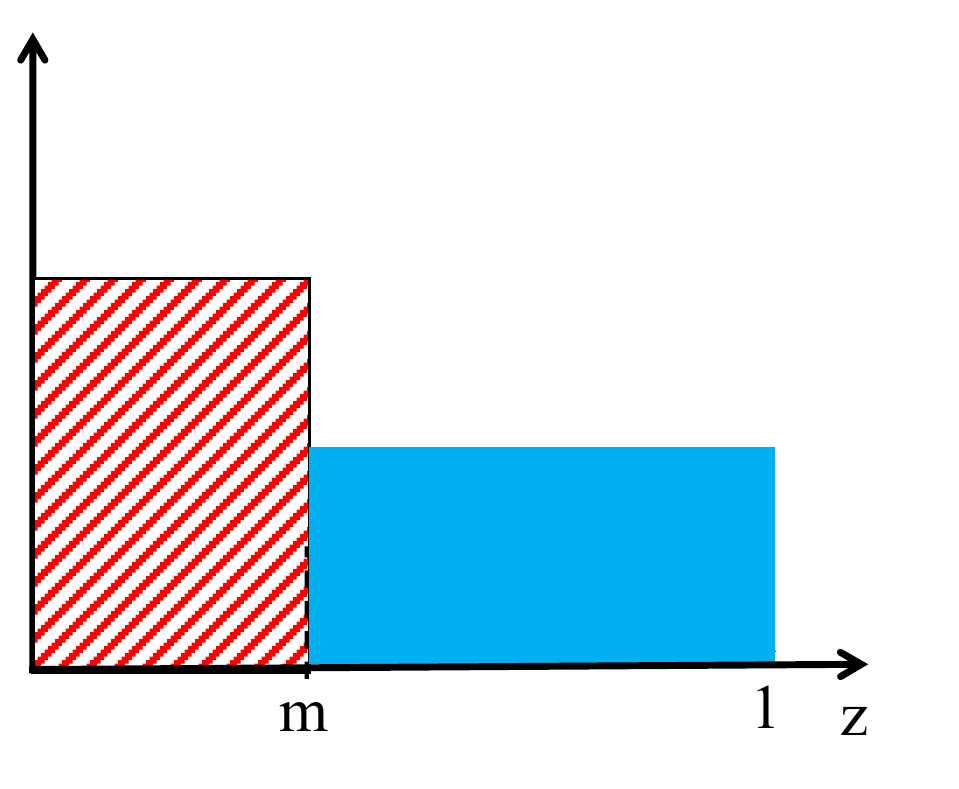}
\end{subfigure}
\caption{After all the modification steps, $f(z), F_r(z), z\in [0, 1]$ are colored in red and blue, respectively. The right figure shows the optimal packing.}
\label{fig:step23-4}
\end{figure}

To upper bound $A'_i / O_i$, by scaling we can assume wlog that 
\begin{align*}
f(z) &= 
\begin{cases}
    1 & z \in [0, m] \\
    0 & z \in [m, 1], 
\end{cases}\\
F_r(z) &= (1- m) y \quad z \in [0, 1]
\end{align*}
for some $y \in [0, 1]$. Define 

\begin{equation}
    \label{eqn:g}
g(m, y) := \frac{ m \cdot (1+ (1 - m)y)^k + (1- m) \cdot ((1- m)y)^k}{m \cdot 1^k + (1 - m) \cdot y^k}
\end{equation}

Note that $A'_i / O_i = g(m, y)$. Thus, thanks to Lemma~\ref{lem:lk-middle-lemma} we have Theorem~\ref{thm:Lk}, which we reproduce below for easy reference. 

\begin{theorem}
   There is a randomized algorithm that is 
    $\alpha^{1/k}$-approximate for minimizing the $L_k$-norm of machine loads, where 
    $\alpha := \sup_{m \in (0, 1], y \in [0, 1]} g(m, y)$.
\end{theorem}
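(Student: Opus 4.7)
The plan is to combine the Shmoys--Tardos style bucketing/matching rounding with a structural analysis that reduces the worst case to a single two-variable optimization. After solving the LP and applying the rounding, each machine $i$ receives a random set of jobs whose distribution, viewed as a feasible solution $A_i$ to the splittable rectangle packing instance $\{x_{ij}\}_j$, is \emph{balanced} (the rounding picks at most one job per bucket and buckets are filled in non-increasing size order). By the swapping argument already sketched, $\E[A_i] \leq A'_i := \int_0^1 F(z)^k \, dz$, where $F(z) = \sum_{\ell \geq 0} f(z+\ell)$ and $f$ is the non-increasing size function induced by the marginals on machine $i$. By Lemma~\ref{lem:lk-middle-lemma}, it therefore suffices to show $A'_i / O_i \leq \alpha$ for every machine, then take $k$-th roots.

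To bound this ratio I would perform the three monotone modifications of $f$ already introduced, each of which can only weakly increase $A'_i/O_i$. The first replaces all mass of $f$ above the level $F(1)$ on the strip $z \in [0,1]$ by a carpet of infinitesimal rectangles of total height $F(1)$; this is legal because $F_r(z) \leq F(1)$ pointwise (which in turn follows from $f(z+\ell) \leq f(\ell)$), and it only shortens rectangles, so $O_i$ can only drop while $F$ is unchanged. The second modification shifts the remaining non-infinitesimal mass of $f$ on $[0,1]$ as far left as possible into a single rectangular block on $[0,m]$ whose height is fixed by volume conservation; this keeps $O_i$ invariant (the optimal waterfill is unchanged) but makes $F$ pointwise larger, so $A'_i$ grows. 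The third modification applies a local exchange argument on any two sub-intervals where $f$ takes distinct constant values: the ratio $A'_i/O_i$ is linear-fractional in the width parameter, hence maximized at an endpoint of the admissible range, and iterating collapses $f$ on $[0,m]$ to a single flat block (normalized to height $1$ by scaling).

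After the reductions, $f$ equals $1$ on $[0,m]$ and $0$ on $[m,1]$, while $F_r$ is the constant $(1-m) y$ for some $y \in [0,1]$. A direct calculation gives $A'_i = m(1 + (1-m)y)^k + (1-m)((1-m)y)^k$ from the two regions of $F$, and $O_i = m \cdot 1^k + (1-m) y^k$ from the optimal waterfilled packing, so the ratio equals precisely $g(m,y)$ as defined in~\eqref{eqn:g}. Supping over $(m,y) \in (0,1] \times [0,1]$ yields $\alpha$, and invoking Lemma~\ref{lem:lk-middle-lemma} translates the per-machine bound $\E[A_i]/O_i \leq \alpha$ into an $\alpha^{1/k}$-approximation for the $L_k$-norm objective. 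I expect the third reduction to be the main obstacle: one must verify that the linear-fractional objective is maximized at the endpoints of the width parameter while keeping numerator and denominator positive, and argue termination after finitely many exchanges (straightforward when $f$ has finitely many distinct values, which we may assume by a preliminary discretization). The remaining steps are computational once the canonical form is reached.
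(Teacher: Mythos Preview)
Your proposal follows essentially the same three-reduction argument as the paper's own proof. One caution on your second step: collapsing \emph{all} of $f|_{[0,1]}$ into a single rectangular block does not in general leave the waterfill skyline (hence $O_i$) invariant---the paper instead shifts only the \emph{under-water} portion of $f$ left into a block on $[m',m]$, leaving $f$ unchanged on $[0,m']$ where it exceeds the water level; it is then the third exchange step that flattens $f$ on all of $[0,m]$, not by preserving $O_i$ but by keeping the ratio $A'_i/O_i$ monotone.
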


As mentioned before, we can compute an upper bound on the claimed approximation ratio $\alpha^{1/ k}$ for small values of $k$. See Table~\ref{table:k}. When $k = 2$, we analytically show that $\alpha = 4/3$ in Appendix~\ref{sec:k-2}.

\bibliographystyle{plain}
\bibliography{unrelated}

\appendix
\section{Missing Proofs in Section~\ref{sec:SNC}}
\label{appendix:proofs-SNC}
\breakintosegments*
\begin{proof}
	If the total length of $C$ is an integer multiply of $4$, then we can partition the edges of $C$ into segments of length $4$. There are two different ways to do this, and we output one uniformly at random. Clearly, every group in $C$ is the center of a segment with probability $1/2$. 
	
	Otherwise, the length of $C$ is $4z + 2$ for some integer $z \geq 1$. If the length of $C$ is at least $18$, we do the following. We randomly choose a job in $j \in C \cap J$ from $C$, remove it and its two adjacent edges from $C$, break the resulting path of length $4z$ into $z$ segments of length $4$, and output the $z$ segments.  The probability that a vertex is the center of a segment is $\frac{z}{2z + 1} \geq \frac49$.
	
	It remains to consider the case where the length of $C$ is $10$ or $14$. In this case, we show that $C$ contains a valid segment $S$ of length $6$. As two groups $u$ and $u'$ adjacent to a same job $j$ in $C$ have $g(u) \neq g(u')$, the multi-set $M':=\{g(u): u \in C \cap U\}$ contains at least $3$ distinct machines. 
	\begin{itemize}
		\item If $C$ has length $10$, then $M'$ has size $5$ and every machine appears at most twice in $M'$.  Among all the 5 sub-paths of $C$ of length $6$ that start and end at groups, at most $4$ of them are invalid segment. This holds since a pair of the same machine in $M'$ can invalidate two segments of length $6$.
		\item If $C$ has length $14$, then $M'$ has size $7$ and a machine appears at most 3 times in $M'$. If a machine appears 3 times, then the three groups $u, u', u''$ with $g(u) = g(u') = g(u'')$ breaks $C$ into paths of length $4$, $4$ and $6$ respectively. The path of length $6$ is a valid segment as the two end-groups have the same $g$ value.  Now assume no machine appears 3 times in $M'$, then as before we can invalidate at most $6$ segments of length $6$, and there is one (valid) segment of length $6$ in $C$. 
	\end{itemize}
	With the segment $S$ of length $6$, we show how to output the segments.  With probability $1/2$, we output $S$, breaking the remaining edges into segments of length $4$ (there are 1 or 2 of them) and output these segments also. With the remaining probability $1/2$,  we do the following. Let $j$ be the job between the two centers of $S$. We remove $j$ and its two adjacent edges from $C$, and break the remaining path into segments of length $4$.   Clearly one can show that every group is a center of some segment with probability $1/2$. 
\end{proof} 

		\functionconvex*
		\begin{proof}
			The derivative of the function is $\frac{e^{-x}}{1-e^{-x}} - \frac1x = \frac1{1-e^{-x}} - 1 - \frac1x$. The second-order derivative is $\frac{-e^{-x}}{(1-e^{-x})^2} + \frac1{x^2}$. It remains to show that $(1-e^{-x})^2 \geq e^{-x}x^2$ for every $x \geq 0$, which is equivalent to $e^{-x}  + e^{-x/2}x \leq 1$ for every $x \geq 0$.  The inequality holds with equality when $x = 0$. The derivative of the function is $-e^{-x} - e^{-x/2}x/2 + e^{-x/2} = e^{-x/2}(1 - e^{-x/2} - x/2) \leq 0$. 
		\end{proof}

\section{Missing Proofs in Section~\ref{sec:wC-algorithm}}
	\applyingSN*
	\begin{proof}
		$U$, $g$ and $y$ are determined when we fixed $\rho$ and $\vec{\tau}$. To see \ref{property:applying-SN-marginal}, notice that we assign $j$ to a group $u$ with probability $y_{uj}$, which is the total height of rectangles in $u$ for $j$. Under this event, we choose the anchor rectangle from them at random, proportional to their heights.
		
		Now we consider \ref{property:applying-SN-independence}. Let $u$ be the group containing $R_{ijs}$; define $u'$ similarly for $R_{ij's'}$. 
		We have $g(u) = g(u') = i$. \ref{property:SN-independence} in Theorem~\ref{thm:SN} says that $\Pr[\sigma(j) = u, \sigma(j') = u' | \rho, \vec{\tau}] \leq y_{uj} y_{u'j'}$.  Then $\Pr[R_{ijs}, R_{ij's'}|\rho, \vec \tau, \sigma(j) = u, \sigma(j') = u'] = \frac{x_{ijs}}{y_{uj}}\cdot \frac{x_{ij's'}}{y_{u'j'}}$ by the way we choose anchor rectangles.  Combining the two inequalities gives $\Pr[R_{ijs}, R_{ij's'} | \rho, \vec \tau] \leq x_{ijs}x_{ij's'}$ for any $\rho$ and $\vec \tau$. Then de-conditioning on $\rho$ and $\vec \tau$ proves \ref{property:applying-SN-independence}.
		
		Finally we turn to \ref{property:applying-SN-SN}. Assume $i$ does not dominate any of $j$ and $j'$. Fix a base window $k$. For a fixed $\rho$ and $\vec{\tau}$ for which $R_{ijs} \sim_k R_{ij's'}$, we have $\Pr[\sigma(j)  = u_{ik}, \sigma(j')  = u_{ik} | \rho, \vec{\tau}] \leq (1 - \eta) y_{u_{ik}j} y_{u_{ik}j'}$, by the Property \ref{property:SN-SN} in Theorem~\ref{thm:SN}. Again by the way we choose the anchor rectangles, we have 
		\begin{align*}
			\Pr[R_{ijs}, R_{ij's'} | \rho, \vec{\tau}] &\leq \Pr[\sigma(j)  = u_{ik}, \sigma(j')  = u_{ik} | \rho, \vec{\tau}]\cdot \Pr[R_{ijs}, R_{ij's'} | \sigma(j)  = u_{ik}, \sigma(j')  = u_{ik}, \rho, \vec{\tau}] \\
			&\leq (1 - \eta) y_{u_{ik}j} y_{u_{ik}j'} \cdot \frac{x_{ijs}}{y_{u_{ik}j}}\cdot \frac{x_{ij's'}}{y_{u_{ik}j'}} = (1-\eta) x_{ijs} x_{ij's'}.
		\end{align*}
		Then de-conditioning on $\rho$ and $\vec{\tau}$, we have $\Pr[R_{ijs}, R_{ij's'} | R_{ijs}\sim_k R_{ij's'}] \leq (1-\eta) x_{ijs} x_{ij's'}$. De-conditioning on $k$ proves \ref{property:applying-SN-SN}.\hfill
	\end{proof}	

\section{Missing Proofs in Section~\ref{sec:wC-analysis}}
	\casedominates*
	
	\begin{proof} For every $j \in J$ and real $t \geq 0$, let $\vol_{i, j}(t)$ be the total volume of portions of rectangles on $i$ for $j$ before time $t$. For every $i, J' \subseteq J$ and $t$, let $\vol_{i,J'}(t) = \sum_{j \in J'} \vol_{i, j}(t)$. Then, 
		\begin{align*}
			\E[C^* | i_{j^*}  = i] - p_{ij^*} &\leq \frac{1}{x_{ij^*}}\sum_{s^*} x_{ij^*s^*}\int_{0}^{p_{ij^*}} \vol_{i, J \setminus j^*}((1+\alpha)s^*+\tau^*+0.2 \cdot  p_{ij^*})\frac{\sfd\tau^*}{p_{ij^*}}\\
			&\leq \frac{1}{x_{ij^*}}\sum_{s^*} x_{ij^*s^*}\int_{0}^{p_{ij^*}} \big((1+\alpha)s^*+\tau^* + 0.2 \cdot p_{ij^*} - \vol_{i,j^*}(s^*+\tau^*)\big)\frac{\sfd\tau^*}{p_{ij^*}}\\
			&= \frac{1}{x_{ij^*}}\sum_{s^*} x_{ij^*s^*}\left((1+\alpha)s^*+\frac{p_{ij^*}}{2}  + 0.2 \cdot  p_{ij^*}- \int_{0}^{p_{ij^*}}  \vol_{i,j^*}(s^*+\tau^*)\frac{\sfd\tau^*}{p_{ij^*}}\right)\\
			&= \frac{1+\alpha}{x_{ij^*}}\sum_{s^*} x_{ij^*s^*}\cdot s^* +0.7p_{ij^*} - \frac{x_{ij^*}p_{ij^*}}{2}\\
			&\leq  \frac{1+\alpha}{x_{ij^*}}\sum_{s^*} x_{ij^*s^*}\cdot s^* +0.45p_{ij^*}.
		\end{align*}
		We explain the inequality in the first line. The probability of $R_{ij^*s^*}$ is $\frac{x_{ij^*s^*}}{x_{ij^*}}$, conditioned on $i_{j^*}  = i$.  As $i$ dominates $j^*$, we have $\theta_{j^*} = (1+\alpha) s^* + \tau^* + 0.2 \cdot  p_{ij^*}$ conditioned on $R_{ij^*s^*}$ and $\tau_{ij^*} = \tau^*$. Under these conditions, the expected starting time of $j^*$ is at most $\vol(\theta_{j^*})$: for a rectangle $R_{ijs}$ with $j \neq j^*$, we have $\Pr[R_{ijs}, \theta_{ij} < \theta_{j^*} | R_{ij^*s^*}, \tau_{ij^*} = \tau^*] \leq \Pr[R_{ijs},s + \tau_{ij} < \theta_{j^*} | R_{ij^*s^*}, \tau_{ij^*} = \tau^*] \leq x_{ijs}\cdot \Pr[s + \tau_{ij} < \theta_{j^*} | R_{ij^*s^*}, \tau_{ij^*} = \tau^*]$. This is $\frac{1}{p_{ij}}$ times the volume of the rectangle $R_{ijs}$ before time $\theta_j^*$. If the event happens, the rectangle contributes $p_{ij}$ to the starting time of $j^*$.  So the inequality holds.
		
		The second inequality used that the total volume of all rectangles on machine $i$ before time $(1+\alpha)s^* + \tau^* + \alpha p_{ij^*}$ is at most $(1+\alpha)s^* + \tau^* + 0.2 \cdot  p_{ij^*}$, and $\vol_{i, j^*}(s^* + \tau^*) \leq \vol_{i, j^*}((1+\alpha)s^* + \tau^* + 0.2 \cdot  p_{ij^*})$.
		
		We then prove 
			\begin{align*}
				\frac{1}{x_{ij^*}}\sum_{s^*}x_{ij^*s^*} \frac{1}{p_{ij^*}}\int_0^{p_{ij^*}}\vol(s^*+\tau^*)\sfd\tau^* = \frac{x_{ij^*}p_{ij^*}}{2},
			\end{align*}
		which is used in the equality in the fourth line. Notice $\sum_{s^*, s'^*}\int_0^{p_{ij^*}}\int_0^{p_{ij^*}} x_{ij^*s^*}x_{ij^*s'^*}{\bf1}(s'^* + \tau'^* < s^* + \tau^*) \sfd\tau^*\sfd \tau'^* = \frac{x_{ij^*}^2p_{ij^*}^2}{2}$. Scaling both sides by $\frac1{x_{ij^*}p_{ij^*}}$ gives the inequality. 
		
		The last inequality used that $x_{ij^*} > \frac12$ since $i$ dominates $j^*$.  So, $\E[\tilde C_{j^*} | i_{j^*}  = i] \leq \frac{1}{x_{ij^*}}\sum_{s^*} x_{ij^*s^*}\cdot \left((1+\alpha)s^* +1.45p_{ij^*}\right) \leq \frac{1.45}{x_{ij^*}}\sum_{s^*}x_{ij^*s^*}(s^* + p_{ij^*})$. \hfill
	\end{proof}

	\ECbyPhif*
	\vspace*{-20pt}
    \begin{align*}
	    \textit{Proof.} && \widehat\E[C^*] - p_{ij^*} &= \sum_{j \neq j^*, s} \widehat\Pr[R_{ijs}, \theta_j < \theta^*] \cdot p_{ij} &&\\
		&& &= \sum_{j \neq j^*, s} \widehat\Pr[R_{ijs}, \theta_j < \theta^*]\cdot p_{ij} \cdot \frac{1}{x_{ijs}}\cdot  \sum_{f \in \calF: R_{ijs} \in f} z_f &&\\
		&& &= \sum_{f \in \calF} \sum_{j \neq j^*, s:R_{ijs} \in f} z_f \cdot \frac{1}{x_{ijs}} \cdot \widehat\Pr[R_{ijs}, \theta_j < \theta^*] \cdot p_{ij} = \sum_{f \in F} \Phi(f). && 
    \end{align*}

\Phiftozf*
\begin{proof}
	We first give a coarse upper bound for the first term in the min operator, so that we can use it to absorb some bound we obtain along the way. 
	\begin{align*}
		&\quad \eta\cdot \E_\rho \int_0^{\theta^*} {\bf1}\left(s^*< \frac{h_\rho(s^* + \tau^*)}{1+\beta} <  \tau < h_\rho(s^* + \tau^*)\right)\cdot \sfd\tau \\
		&\leq \eta\cdot \E_\rho \int_0^{\theta^*} {\bf1}\left(\frac{h_\rho(\theta^*/(1+\alpha))}{1+\beta} < \tau\right)\cdot \sfd\tau  = \frac{\eta}{\ln(1+\beta)}\int_{1}^{1+\beta}\frac{\sfd\varrho}{\varrho}\cdot \left(\theta^* - \frac{\varrho\theta^*}{(1+\alpha)(1+\beta)}\right)\\
		&=\frac{\eta}{\ln(1+\beta)}\left(\theta^*\ln(1+\beta) - \frac{\beta\theta^*}{(1+\alpha)(1+\beta)}\right) = \eta\theta^*\left(1 - \frac{\beta}{(1+\alpha)(1+\beta)\ln(1+\beta)}\right) \leq 0.8\eta \theta^*.
	\end{align*}
	The inequality comes from that the event in $\bf1(\cdot)$ on the right side of the inequality is easier to satisfy than that on the left side. The first equality follows by defining $\varrho = \frac{h_\rho(\theta^*/(1+\alpha))}{\theta^*/(1+\alpha)}$ and that $\ln \varrho$ is uniformly distributed in $[0, \ln(1+\beta)]$.
	
	For a rectangle $R_{ijs} \in f$ with $s + p_{ij} \leq \frac{\theta^*}{1 + \alpha}$, we have $\widehat\Pr[R_{ijs}, \theta_j < \theta^*] \leq \widehat\Pr[R_{ijs}] \leq x_{ijs}$ due to \ref{property:applying-SN-independence}. Thus $\frac{1}{x_{ijs}}\widehat\Pr[R_{ijs}, \theta_j < \theta^*] \leq 1$. For a rectangle $R_{ijs}$ with $s > \frac{\theta^*}{1 + \alpha}$, we have $\widehat\Pr[R_{ijs}, \theta_j < \theta^*] = 0$ as $(1+\alpha)s + \tau_{ij} \geq (1+\alpha)s > \theta^*$ with probability $1$.  Let $R_{ijs}$ be the unique rectangle with $s \leq \frac{\theta^*}{1+\alpha} < s + p_{ij}$ in the configuration. We assume $R_{ijs}$ exists: Otherwise $\frac{\Phi(f)}{z_f} \leq \frac{\theta^*}{1 + \alpha} \leq  \theta^* - \eta\theta^*$, and $\eta \theta^*$ is at least the first term in the min operator in \eqref{inequ:bound-Phi-f}.  So \eqref{inequ:bound-Phi-f} holds.

	For notational convenience, we use $p$ for $p_{ij}$ and $\tau$ for $\tau_{ij}$.  The total contribution of rectangles in $f$ before $R_{ijs}$ to $\Phi(f)$ is at most $z_f s$, and the total contribution rectangles in $f$ after $R_{ijs}$ is $0$.  So, 
	\begin{align*}
		\frac{\Phi(f)}{z_f} &\leq s + \frac{1}{x_{ijs}}\cdot \widehat\Pr[R_{ijs}, \theta_j < \theta^*]\cdot p.
	\end{align*}
	If $i$ dominates $j$, then $\frac{\Phi(f)}{z_f} \leq s + (\theta^* -(1+\alpha)s - 0.2 p)_+ = \max\{s, \theta^* - \alpha s - 0.2\cdot p\}$, as $\theta_j = (1+\alpha)s + \tau + 0.2 p$ and $\tau$ is uniformly distributed in $[0, p)$.  Notice $s \leq \frac{\theta^*}{1+\alpha} \leq \theta^* - \eta \theta^*$ and $\alpha s + 0.2 \cdot p \geq 0.8 \eta\cdot (1+\alpha)(s + p) \geq 0.8\eta\theta^*$. Thus $\frac{\Phi(f)}{z_f} \leq \theta^* - 0.8\eta\theta^*$. Again $0.8\eta\theta^*$ is lower bounded by the first term  in the min operator of \eqref{inequ:bound-Phi-f}. 
	
	From now on we assume $i$ does not dominate $j$.  Let $q = \min\{\theta^* - (1+\alpha)s, p\} \geq 0$. 
	For any $\tau \in [0, p]$, we have  $(1+\alpha)s + \tau < \theta^*$ if and only if $\tau < q$. 
	For any $\tau \in [0, q)$, we have 
	\begin{align*}
		&\quad \widehat\Pr[R_{ijs}, \theta_j  < \theta^* | \tau] = \widehat\Pr[R_{ijs} | \tau] \\		 	
		&= \widehat\Pr[R_{ijs} \sim R_{ij^*s^*}|\tau]\cdot \widehat\Pr[R_{ijs}|R_{ijs} \sim R_{ij^*s^*}, \tau] + \widehat\Pr[R_{ijs} \not\sim R_{ij^*s^*}|\tau]\cdot \widehat\Pr[R_{ijs}|R_{ijs} \not\sim R_{ij^*s^*}, \tau]\\
		& \leq  \widehat\Pr[R_{ijs} \sim R_{ij^*s^*}|\tau]\cdot (1-\eta) x_{ijs} + \widehat\Pr[R_{ijs} \not\sim R_{ij^*s^*}|\tau]\cdot x_{ijs}\\
		&= x_{ijs} \big( 1 - \eta \cdot \widehat \Pr[R_{ijs} \sim R_{ij^*s^*} | \tau]\big).
	\end{align*}
	So, we have
	\begin{align}
		\frac{\Phi(f)}{z_f}&= s + \frac 1p\int_{0}^q (1 - \eta\cdot\widehat\Pr[R_{ijs} \sim R_{ij^*s^*}|\tau])\cdot p\cdot\sfd\tau \quad=\quad s + q - \eta \int_0^q \widehat\Pr[R_{ijs} \sim R_{ij^*s^*}|\tau]\cdot \sfd\tau \nonumber\\
		&= s + q - \eta \int_0^q \widehat\Pr[R_{ijs} \sim R_{ij^*s^*}|\tau]\sfd\tau \quad = \quad  s + q - \eta \cdot \E_\rho \int_0^q \widehat\Pr[R_{ijs} \sim R_{ij^*s^*}|\tau,\rho]\cdot\sfd\tau \nonumber\\
		&= s + q - \eta\cdot \E_\rho \int_0^q {\bf1}\left(\max\{s, s^*\} < \frac{h_\rho(s^* + \tau^*)}{1+\beta} < s + \tau < h_\rho(s^* + \tau^*)\right)\cdot \sfd\tau. \label{equ:quantity-for-Phi/z}
	\end{align}
	
	In the rest of the proof, we fix $s^* \geq 0, \tau^* \geq 0$ and $\theta^* = (1+\alpha)s^* + \tau^*$, and upper bound \eqref{equ:quantity-for-Phi/z} subject to $s, q \geq 0, (1+\alpha)s + q \leq \theta^*$.  It is easy to see that if we increase $q$ continuously at a rate of $1$, the negative term increases by at most a rate of $\eta$.  So the quantity increases. Thus, we can assume $(1+\alpha)s + q = \theta^*$.	
	
	
	We consider two cases. In the first case we assume $s > s^*$, or $(1+\beta)s\leq (s^* + \tau^*)$, which implies that the condition $s < \frac{h_\rho(s^* + \tau^*)}{1+\beta}$ inside $\bf1(\cdot)$ is redundant. We apply the following continuous operation:  decrease $s$ at a rate of $1$ and increase $q$ at a rate of $1+\alpha$ so that $(1+\alpha)s + q = \theta^*$ holds in the process. This increases $s+q$ at a rate of $\alpha$. For a fixed $\rho \in [1, 1+\beta)$, we consider the interval of $\tau$ values in $[0, q]$ for which the condition inside $\bf1(\cdot)$ holds. The length of the interval increases at a rate of at most $1+\alpha$, which is how fast $q$ increases. So the how negative term increase at  a rate of at most $(1+\alpha)\eta$.  As $\alpha > (1+\alpha)\eta$, the operation will increase \eqref{equ:quantity-for-Phi/z}.  
	
	So, the quantity is maximized when $s = 0$ when $s \leq s^*$. In this case, we have $q = \theta^*$ and 
	\begin{flalign*}
		&& \frac{\Phi(f)}{z_f} \leq \theta^* - \eta\cdot \E_\rho \int_0^{\theta^*} {\bf1}\left(s^*< \frac{h_\rho(s^* + \tau^*)}{1+\beta} < \tau < h_\rho(s^* + \tau^*)\right)\cdot \sfd\tau. &&
	\end{flalign*}
	This is what we obtain using the first term in the min operator of \eqref{inequ:bound-Phi-f}.
	
	Then we consider the second case where we have $s \geq s^*$ and $(1+\beta)s > (s^* + \tau^*)$. As $\theta^* = (1+\alpha)s^* + \tau^* = (1+\alpha)s + q$, we have $s + q \leq s^* + \tau^* < (1+\beta)s$.  We have
	\begin{align*}
		&\quad s+q  - \eta\cdot \E_\rho \int_0^q {\bf1}\left(\max\{s, s^*\} < \frac{h_\rho(s^* + \tau^*)}{1+\beta} < s + \tau < h_\rho(s^* + \tau^*)\right)\cdot \sfd\tau\\
		&= s + q - \eta\cdot \E_\rho \int_0^q {\bf1}\left(s < \frac{h_\rho(s^* + \tau^*)}{1+\beta} < s + \tau\right)\cdot \sfd\tau \\
		&=s + q - \frac{\eta }{\ln(1+\beta)}\int_{1}^{\frac{s+q}{s}} \left(s + q - \varrho s\right)\cdot \frac{\sfd \varrho}{\varrho}\\
		&=s + q - \frac{\eta}{\ln(1+\beta)}\left((s+q)\ln\frac{s+q}{s} - q\right) \\
		&=\theta^* - \left(\alpha s + \frac{\eta}{\ln(1+\beta)}\left((\theta^* - \alpha s)\ln\frac{\theta^* - \alpha s}{s} - \theta^* + (1+\alpha)s\right)\right).
	\end{align*}
	The second inequality follows by defining $\varrho$ so that $\frac{h_\rho(s^* + \tau^*)}{1+\beta} = \varrho s$. Notice that $\ln \varrho$ is uniformly distributed in $\left(\ln\frac{s^* + \tau^*}{(1+\beta)s}, \ln\frac{s^* + \tau^*}{s}\right]$, and the integration is $s + q - \varrho s$ if $s < \varrho s < s + q$ and 0 otherwise.  The last equality used that $\theta^* = (1+\alpha) s + q$.

	Let $x = \frac{s}{\theta^*} \in [0, 1]$. The minimum of $\alpha x + \frac{\eta}{\ln(1+\beta)}\left((1 - \alpha x)\ln\frac{1 - \alpha x}{x} - 1 + (1+\alpha)x\right)$ over $x \in [0, 1]$ is at least $0.102$. See Figure~\ref{fig:plotting-3} in Appendix~\ref{appendix:wC-plotting} for the plotting of the function. Therefore, we have $\frac{\Phi(f)}{z_f} \geq \theta^* - 0.102\theta^*$. This corresponds to the second term in the min operator of \eqref{inequ:bound-Phi-f}. \hfill
\end{proof}

\Qcircr*
\begin{proof}
		For convenience, we recall that $Q^\circ(r) = \eta\cdot\E_\rho \int_{0}^{1+\alpha + r} {\bf1}\Big(1 < \frac{h_\rho(1 + r)}{1+\beta} < y < h_\rho(1 + r)\Big)\cdot \sfd y$.
		
				If $r \in [0, \beta - \alpha]$, we have $1+\alpha + r \leq 1 + \alpha + \beta - \alpha = 1 + \beta$. 		The condition $y < h_\rho(1 + r)$ in the definition of $Q^\circ(r)$ is redundant since if $1 < \frac{h_\rho(1 + r)}{1 + \beta}$, then $h_\rho(1 + r) > 1 + \alpha + r$. We define $\varrho := \frac{h_\rho(1 + r)}{1 + \beta}$, and thus $\ln\varrho$ is uniformly distributed in $\big[\ln\frac{1 + r}{1 + \beta},\ln(1 + r)\big)$. Notice that $1\leq 1 + r \leq 1+\beta$. We have $\int_0^{1 + \alpha + r}{\bf1}\big(1 < \frac{h_\rho(1 +r)}{1+\beta} <  y \big)\sfd y $ is $1 + \alpha + r - \varrho$ if $\varrho > 1$ and $0$ otherwise.	
	\begin{flalign*}
		&&Q^\circ(r) = \frac{\eta}{\ln(1+\beta)}\int_{1}^{1 + r} (1 + \alpha + r - \varrho)\frac{\sfd\varrho}{\varrho} = \frac{\eta}{\ln(1+\beta)}\big((1 + \alpha + r)\cdot\ln(1+r) - r\big). && 
	\end{flalign*} \smallskip

	Then we consider the case $r \in (\beta - \alpha, \beta]$. We have $1+r \leq 1+\beta \leq 1 + \alpha + r$.  We define $\varrho := \frac{h_\rho(1 + r)}{1 + \beta}$.   Then, $\int_0^{1 + \alpha + r} {\bf1}\left(1 < \frac{h_\rho(1 + r)}{1+\beta} <  y < h_\rho(1 + r)\right)\cdot \sfd y $ is $\beta\varrho$ if $\varrho \in \big[1, \frac{1 + \alpha + r}{1 + \beta}\big)$, and $1 + \alpha + r  - \varrho$ if $\varrho \in \big[\frac{1 + \alpha + r}{1 + \beta}, 1+r \big]$. (Notice that $\frac{1 + \alpha + r}{1 + r} \leq 1+\alpha \leq 1+\beta$.) So, 
\begin{align*}
	Q^\circ(r) &= \frac{\eta}{\ln(1+\beta)} \left(\int_{1}^{\frac{1 + \alpha + r}{1 + \beta}}\beta\varrho\frac{\sfd\varrho}{\varrho} + \int_{\frac{1 + \alpha + r}{1 + \beta}}^{1+r}(1 + \alpha + r - \varrho)\frac{\sfd\varrho}{\varrho}\right)\\
	&= \frac{\eta}{\ln(1+\beta)}\left(\frac{\beta(1 + \alpha + r)}{1+\beta} - \beta + (1 + \alpha + r)\cdot \ln \frac{(1+\beta)(1+r)}{1 + \alpha + r} - \Big(1 + r - \frac{1 + \alpha + r}{1+\beta}\Big)\right) \\
	&=\frac{\eta}{\ln(1+\beta)}\left(1 + \alpha + r - (1+\beta) - r + (1 + \alpha + r) \ln\frac{(1 + \beta)(1 + r)}{1 + \alpha + r}\right)\\
	&=\frac{\eta}{\ln(1+\beta)}\left((1 + \alpha + r) \ln\frac{(1 + \beta)(1 + r)}{1 + \alpha + r} - (\beta - \alpha)\right).
\end{align*}

Finally consider the case $r > \beta$. 	In this case we have $1 + \alpha + r \geq 1 + r \geq 1 + \beta$. So, $1 < \frac{h_\rho(1 + r)}{1+\beta}$ is always satisfied and the requirement can be removed in the definition of $Q(r)$ in \eqref{equ:define-Q}. We define $\varrho = \frac{h_\rho(1 + r)}{1 + r}$, which takes values in $[1, 1+\beta)$.  Then 
$\int_0^{1 + \alpha + r} {\bf1}\left(\frac{h_\rho(1 + r)}{1+\beta} <  y < h_\rho(1 + r)\right)\cdot \sfd y $ is $\frac{\beta}{1+\beta}\cdot \varrho (1 + r)$ 
if $\varrho \in \big[1, \frac{1 + \alpha + r}{1 + r}\big)$, and $1 + \alpha + r  - \frac{\varrho (1 + r)}{1+\beta}$ if $\varrho \in \big[\frac{1 + \alpha + r}{1 + r}, 1 + \beta\big)$. So, 	 
\begin{flalign*}
	&& Q^\circ(r) &= \frac{\eta}{\ln(1+\beta)}\left(\int_1^{\frac{1 + \alpha + r}{1 + r}}\frac{\beta}{1+\beta}\cdot \varrho(1 + r)\cdot \frac{\sfd\varrho}{\varrho} + \int_{\frac{1 + \alpha + r}{1 + r}}^{1+\beta}\Big(1 + \alpha + r - \frac{\varrho(1 + r)}{1+\beta}\Big)\frac{\sfd\varrho}{\varrho} \right) &&\\
	&& &= \frac{\eta}{\ln(1+\beta)}\left(\frac{\beta \alpha}{1+\beta}  + (1 + \alpha + r) \ln \frac{(1+\beta)(1 + r)}{1 + \alpha + r} - \Big(1 + r - \frac{1 + \alpha + r}{1+\beta}\Big)\right) &&\\
	&& &=\frac{\eta}{\ln(1+\beta)}\left(\alpha  - \frac{\beta(1 + r)}{1 + \beta} + (1 + \alpha + r)\ln\frac{(1+\beta)(1 + r)}{1 + \alpha + r}\right). &&
\end{flalign*}\vspace*{-30pt}

\hfill
\end{proof}
			
		\Qrratio*
		\begin{proof}
			First consider the function $Q^\circ$ over the domain $r \in [\beta - \alpha, \beta]$.  We have
		\begin{align*}
			\frac{\sfd Q^\circ}{\sfd r} &= \frac{\eta}{\ln(1+\beta)} \left(\ln \frac{(1+\beta)(1+r)}{1 + \alpha + r} + (1+\alpha + r)\left(\frac1{1+r} - \frac{1}{1+\alpha + r}\right) \right)\\
			&=\frac{\eta}{\ln(1+\beta)} \left(\ln \frac{(1+\beta)(1+r)}{1 + \alpha + r} + \frac{\alpha}{1+r} \right).
		\end{align*}
		We then compare $\frac{\sfd Q^\circ}{\sfd r}$ with $\frac {Q^\circ}r$. 
		\begin{align*}
			\frac{\sfd Q^\circ}{\sfd r} - \frac {Q^\circ}r &= \frac{\eta}{\ln(1+\beta)}\left(\big(1 - \frac{1 + \alpha + r}{r}\big)\cdot \ln \frac{(1+\beta)(1+r)}{1 + \alpha + r} + \frac{\alpha}{1 + r} + \frac{\beta - \alpha}{r}\right)\\
			&= \frac{\eta}{\ln(1+\beta)}\left(- \frac{1 + \alpha}{r}\cdot \ln \frac{(1+\beta)(1+r)}{1 + \alpha + r} + \frac{\alpha}{1 + r} + \frac{\beta - \alpha}{r}\right)\\
			&\geq \frac{\eta}{\ln(1+\beta)} \left(-\frac{1+\alpha}{\beta - \alpha}\ln\frac{(1+\beta)^2}{1+\beta - \alpha} + \frac{\alpha}{1+\beta} + \frac{\beta - \alpha}{\beta}\right) > \frac{0.7\eta}{\ln(1+\beta)} > 0.
		\end{align*}
		
		Therefore, $\frac{Q^\circ(r)}{r}$ attains its minimum when $r = \beta - \alpha$ in the domain $[\beta - \alpha, \beta]$. In this domain, we have
		\begin{align*}
			\frac{Q^\circ(r)}{r} &\geq \frac{\eta}{\ln(1+\beta)(\beta - \alpha)} \left[(1 + \beta) \ln(1 + \beta - \alpha) - (\beta - \alpha)\right]\\
			&=\frac{\eta}{\ln(1+\beta)} \left[\frac{1 + \beta}{\beta - \alpha}\cdot\ln(1 + \beta - \alpha) - 1\right] \geq 0.1.
		\end{align*}

		Now we consider the domain $r \in [\beta, \infty)$.  As before, we compute $\frac{\sfd Q^\circ}{\sfd r}$:
		\begin{align*}
			\frac{\sfd Q^\circ}{\sfd r} &= \frac{\eta}{\ln(1+\beta)}\left( - \frac{\beta}{1+\beta} + \ln\frac{(1+\beta)(1+r)}{1+\alpha + r} + (1+\alpha+r)\left(\frac1{1+r} - \frac{1}{1+\alpha+r}\right)\right)\\
			&=\frac{\eta}{\ln(1+\beta)}\left( - \frac{\beta}{1+\beta} + \ln\frac{(1+\beta)(1+r)}{1+\alpha + r} + \frac{\alpha}{1 + r}\right).
		\end{align*}
		Again, we compare $\frac{\sfd Q^\circ}{\sfd r}$ with $\frac {Q^\circ}r$: 
		\begin{align*}
			\frac{\sfd Q^\circ}{\sfd r} - \frac {Q^\circ}r &= \frac{\eta}{\ln(1+\beta)}\left( - \frac{\beta}{1+\beta} + \frac{\alpha}{1+r} - \frac{\alpha}{r} + \frac{\beta(1+r)}{(1+\beta)r} + \Big(1 - \frac{1+\alpha + r}{r}\Big) \ln\frac{(1+\beta)(1+r)}{1+\alpha + r}\right)\\
			&= \frac{\eta}{\ln(1+\beta)}\left(\frac{\alpha}{1+r} - \frac{\alpha}{r} + \frac{\beta}{(1+\beta)r} - \frac{1+\alpha}{r} \ln\frac{(1+\beta)(1+r)}{1+\alpha + r}\right)\\
			&\leq \frac{\eta}{r\ln(1+\beta)}\left(\frac{\beta}{1+\beta} - (1+\alpha)\ln\frac{(1+\beta)^2}{1+\alpha + \beta}\right) \leq 0.
		\end{align*}
		So, $\frac{Q^\circ}{r}$ is minimized when $r$ tends to $\infty$. So, when $r \geq \beta$, we have 
		\begin{flalign*}
			&&\frac{Q^\circ(r)}{r} &\geq  \frac{\eta}{\ln(1+\beta)} \left(- \frac{\beta}{1+\beta} + \ln(1+\beta)\right) = \eta - \frac{\eta\beta}{(1+\beta)\ln(1+\beta)} \geq 0.1. && 
		\end{flalign*}	\vspace*{-25pt}
		
		\hfill
		\end{proof}

		\boundcstar*
		\begin{proof}  
			If $o \leq \beta - \alpha$, we have
			\begin{align*}
				\frac1o\int_0^o Q'(r)\sfd r &= \frac1o \int_0^o\frac{\eta}{\ln(1+\beta)}\left[(1+\alpha + r)\cdot \ln (1+r) - r\right] \sfd r\\
				&=\frac{\eta }{o\ln(1+\beta)} \cdot \frac{2\left(r+1\right)\left(r+2\alpha+1\right)\ln\left(r+1\right)-r\cdot\left(3r+4\alpha+2\right)}{4} \Big|_0^o\\
				&= \frac{\eta }{o\ln(1+\beta)} \cdot \frac{2\left(o+1\right)\left(o+2\alpha+1\right)\ln\left(o+1\right)-o\cdot\left(3o+4\alpha+2\right)}{4}\\
				&= \frac{\eta }{\ln(1+\beta)} \cdot \left(\frac{2\left(o+1\right)\left(o+2\alpha+1\right)\ln\left(o+1\right)}{4o}-\frac{3o+4\alpha+2}{4}\right).
			\end{align*}
	
			So, we need to compute the maximum of 
			\begin{align*}
				\frac{1}{1+o}\cdot \left[1+\alpha + \frac{3o}{2} - \frac{\eta}{\ln(1+\beta)} \times \left(\frac{2\left(o+1\right)\left(o+2\alpha+1\right)\ln\left(o+1\right)}{4o}-\frac{3o+4\alpha+2}{4}\right)\right].
			\end{align*}
	
	    The maximum of the function over $o \in [0, \beta - \alpha]$ is at most $1.445 < 1.45$, achieved when $o \approx 8.162$ (See Figure~\ref{fig:o-graph} in Appendix~\ref{appendix:wC-plotting}). \medskip
	    
		Then we consider the case $o \geq \beta - \alpha$. We define $\lambda$ such that 
			$\displaystyle \int_0^{\beta - \alpha}Q'(r)\sfd r = \frac{\lambda(\beta - \alpha)^2}{2}$.
		Indeed one can compute the value of $\lambda$ and see that $\lambda < 0.1$. To avoid the computation, we see that if $\lambda \geq 0.1$, then we have $\frac1o \int_0^o Q'(r) \sfd r \geq \frac1o\left(\frac{0.1(\beta - \alpha)^2}{2} + \int_{\beta - \alpha}^o 0.1 r \sfd r\right) \geq \frac{0.1o}{2} = 0.05o$. The quantity we need to bound is at most $\frac{1}{1+o}\cdot (1+\alpha+\frac{3o}{2} - 0.05 o ) \leq \max\{1 + \alpha, 1.45\} = 1.45$. 
		
		So we assume $\lambda < 0.1$. We have
		\begin{align*}
			&\quad \frac{1}{1+o}\left(1 + \alpha + \frac{3o}2 - \frac1o \int_0^o Q'(r)\sfd r\right) \leq \frac{1}{1+o}\left(1 + \alpha + \frac{3o}2 - \frac1o \left(\frac{\lambda(\beta - \alpha)^2}{2} + \int_{\beta - \alpha}^o 0.1 r\sfd r\right)\right)\\
			&= \frac1{1+o}\left(1 + \alpha + \frac{3 o}{2} - \frac{1}{o}\left(\frac{\lambda (\beta-\alpha)^2}{2} + \frac{0.1(o^2 - (\beta-\alpha)^2)}{2}\right) \right) \\
			&= \frac1{1+o}\left(1 + \alpha + 1.45o  + \frac{(0.1 - \lambda)(\beta-\alpha)^2}{2o} \right)\\
			&= 1.45 + \frac{1}{o}\cdot \frac{(0.1-\lambda)(\beta - \alpha)^2}{2} + \frac{1}{1 + o}\left(1 + \alpha - 1.45 - \frac{(0.1-\lambda)(\beta - \alpha)^2}{2}\right)\\
			&=  1.45 + \frac{1}{o}\cdot \frac{(0.1-\lambda)(\beta - \alpha)^2}{2} - \frac{1}{1 + o}\left(0.45 - \alpha + \frac{(0.1-\lambda)(\beta - \alpha)^2}{2}\right).
		\end{align*}
		
		To obtain the second-to-last equality, we replace $\frac{o}{1+o}$ with $1 - \frac{1}{1 + o}$ and $\frac{1}{o(o+1)}$ with $\frac{1}{o} - \frac{1}{1+o}$.
		Let $A = \frac{(0.1-\lambda)(\beta - \alpha)^2}{2}$ and $B = 0.45 - \alpha + \frac{(0.1-\lambda)(\beta - \alpha)^2}{2}$ so that the above quantity is $1.45 + \frac{A}{o} - \frac B{1+o}$. Notice that $0 < A < B$. The derivative of the function w.r.t $o$ is $-\frac{A}{o^2} + \frac{B}{(1+o)^2}$. The derivative is $0$ if $o = 1/(\sqrt{B/A}-1)$, positive if $o > 1/(\sqrt{B/A}-1)$, and negative if $0 < o < 1/(\sqrt{B/A}-1)$. So $o = 1/(\sqrt{B/A}-1)$ is a local minimum.  Therefore, the maximum of the function occurs when $o = \beta - \alpha$ or $o = \infty$.  The former case is already covered in the case $o \in [0, \beta - \alpha]$. In the later case, the function value is $1.45$.  \hfill
	\end{proof}

 \newpage
	\section{Plotting of Functions Used in the Weighted Completion Time Result} \label{appendix:wC-plotting}
	    In this section, we plot the functions used in the analysis of the algorithm for weighted completion time. See Figures~\ref{fig:rho-graph} to \ref{fig:o-graph}. 
        \begin{figure}[h!]
		    \begin{minipage}[b]{0.45\textwidth}
		        \includegraphics[width=0.95\textwidth]{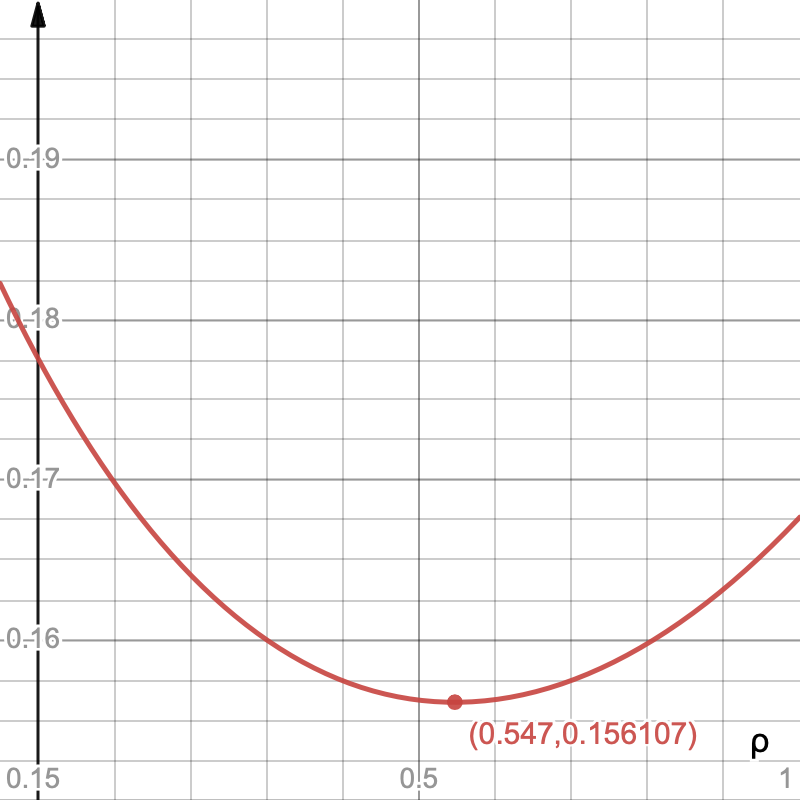}
		        \caption{The function $\frac49\cdot \left(\frac{1-e^{\rho - 1}}{1-\rho}\right)^2 \cdot \left(\frac{1}{{2\rho}} - \frac{1-e^{-{4\rho}}}{8\rho^2}\right)$ over $\rho \in [0, 1]$.}
		        \label{fig:rho-graph}
		    \end{minipage}\hfill    
    	\begin{minipage}[b]{0.45\textwidth}
				\centering
				\includegraphics[width=0.95\textwidth]{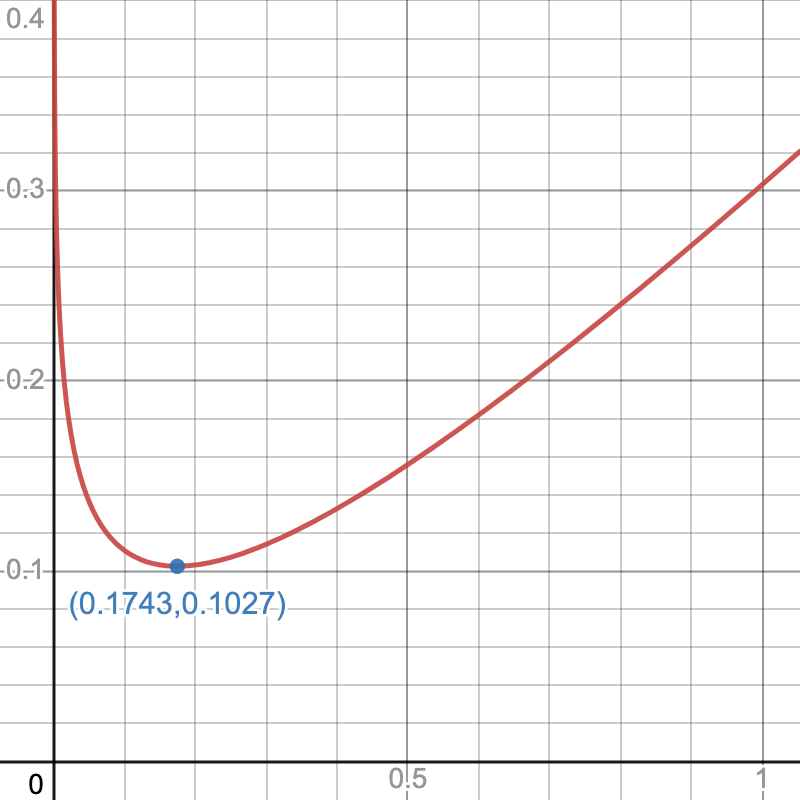}
				\caption{The function $\alpha x + \frac{\eta}{\ln(1+\beta)}\Big((1 - \alpha x)\ln\frac{1 - \alpha x}{x} - 1 + (1+\alpha)x\Big)$ over $x \in [0, 1]$.} \label{fig:plotting-3}
		\end{minipage}\medskip
	
	\begin{minipage}[b]{0.45\textwidth}
		\centering
		\includegraphics[width=0.95\textwidth]{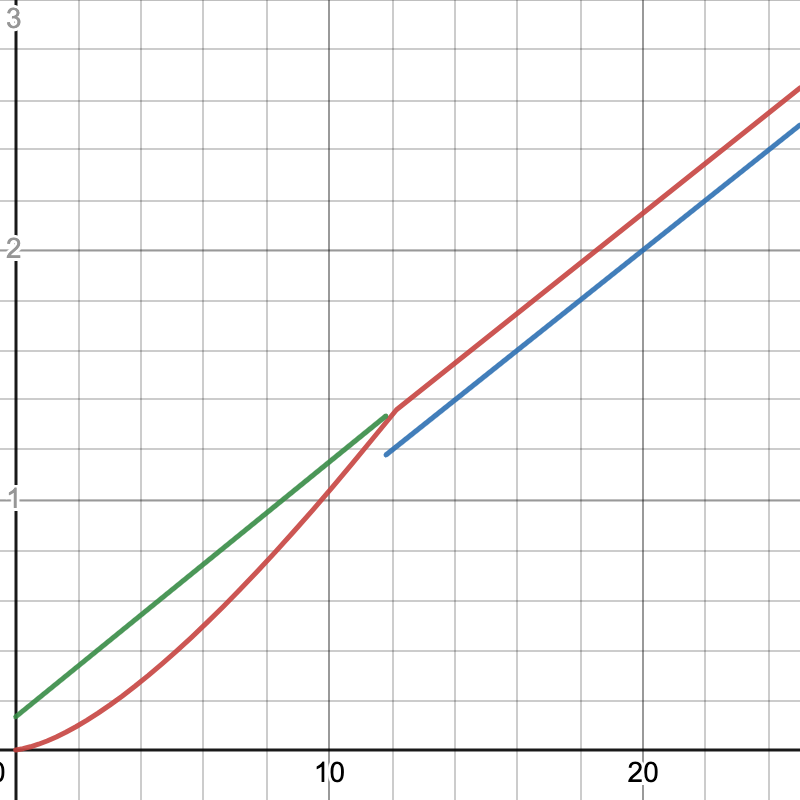}
		\caption{The function $Q^\circ(r)$ over $r \geq 0$ (the red line), the function $0.102(1+\alpha + r)$ over $r \in [0, \beta - \alpha)$ (the green line) and the function $0.1 r$ over $r \in [\beta - \alpha, \infty)$ (the blue line).}
		\label{fig:Qr}
	\end{minipage}\hfill
		    \begin{minipage}[b]{0.45\textwidth}
	\centering
	\includegraphics[width=0.95\textwidth]{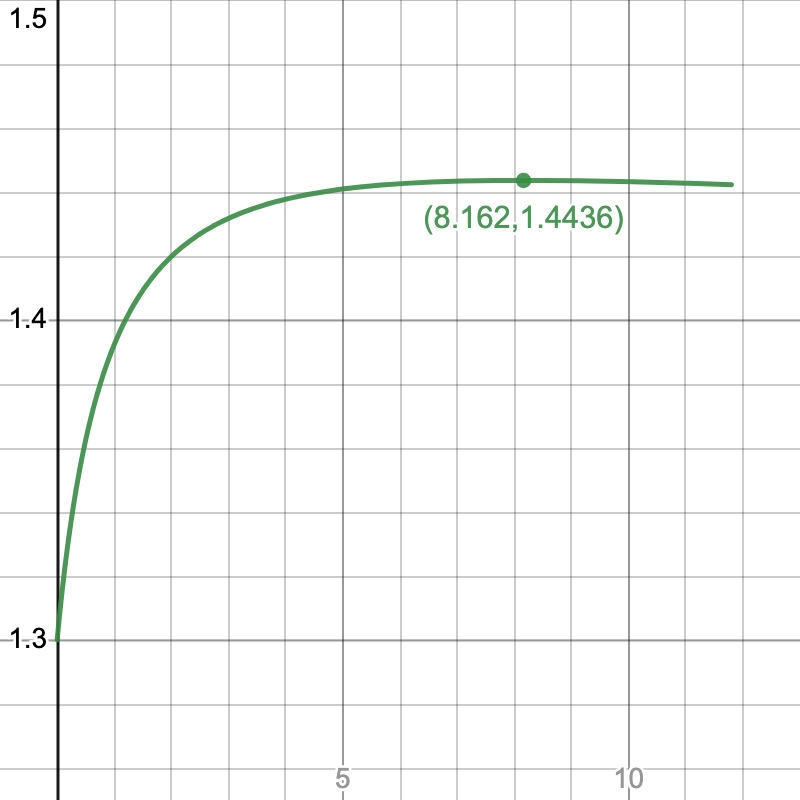}
	\caption{The function $\frac{1}{1+o}\cdot \Big[1+\alpha + \frac{3o}{2} - \frac{\eta}{\ln(1+\beta)} \times \left(\frac{2\left(o+1\right)\left(o+2\alpha+1\right)\ln\left(o+1\right)}{4o}-\frac{3o+4\alpha+2}{4}\right)\Big]$ over $o \in [0, \beta - \alpha] = [0, 11.8]$.}
	\label{fig:o-graph}
\end{minipage}\bigskip
        \end{figure}

\section{Approximation Ratio for Minimizing $L_2$-norm of Machine Loads}
\label{sec:k-2}

In this section we analytically show that the approximation ratio of our randomized algorithm for minimizing the $L_2$-norm of machine loads is $\sqrt {4/3}$. To show this, it suffices to show that $\sup_{m \in (0, 1], y \in [0, 1]} g(m, y)  = 4/3$ when $k = 2$. For convenience we reproduce $g(m, y)$ here when $k = 2$. 

$$
g(m, y) := \frac{ m \cdot (1+ (1 - m)y)^2 + (1- m) \cdot ((1- m)y)^2}{m  + (1 - m) \cdot y^2}
$$

We first check the boundary cases. For any fixed $y$, $\lim_{m \rightarrow 0} g(m, y) = 1$. Further, $g(1, y) = 1$ for any $y \in [0, 1]$ and $g(m, 0) = 1$ for any $m \in (0, 1]$. Finally, $g(m, 1) = m( 2 - m)^2 + (1 - m)^3$ and we can show that it is maximized when $m = 1/2$ and $g(1/2, 1) = 9/8$.

To consider the local optima when $m \in (0, 1), y \in (0, 1)$, we compute the partial derivative. For convenience, consider 

$$
- g(m, y) + 1 = \frac{m (1 - m) y ( y - 2)}{(1 - m) y^2 + m}
$$

Let $p$ and $q$ be the numerator and denominator respectively. Consider the partial derivative w.r.t. $m$. By simplifying 
\begin{align*}
    0 = \frac{\partial p}{\partial m} \cdot q - \frac{\partial q}{\partial m} \cdot p
    &=  ( 1- 2m) y ( y -2) ( (1 - m)y^2 + m) - ( -y^2 + 1)  m ( 1- m) y ( y - 2),
\end{align*}
we obtain 
\begin{equation}
    \label{eqn:partial-1}
    y = \frac{ m}{ 1 - m}
\end{equation}

Consider the partial derivative w.r.t. $y$. By simplifying 
\begin{align*}
    0 = \frac{\partial p}{\partial y} \cdot q - \frac{\partial q}{\partial y} \cdot p
    &=  m ( 1- m) (2y - 2) ( (1 - m)y^2 + m) - m ( 1- m) y(y - 2) 2y ( 1- m),
\end{align*}
we obtain, 
\begin{equation}
    \label{eqn:partial-2}
    \frac{1 - m}{m} = \frac{1- y}{y^2} 
\end{equation}

From Eqn~(\ref{eqn:partial-1}) and (\ref{eqn:partial-2}), we have 
$m = 1/3$ and  $y = 1/2$. This is the unique local optimum when  $m, y \in (0, 1)$ and $g( 1/3, 1/2) = 4/3$. Thus, we conclude 
$\sup_{m \in (0, 1], y \in [0, 1]} g(m, y)  = 4/3$.

\end{document}